\newtheorem{theorem}{Theorem}
\newtheorem{lemma}{Lemma}[section]
\newtheorem{example}{Example}
\newtheorem{remark}{Remark}
\newtheorem{assumption}{Assumption}
\DeclareMathOperator*{\argmin}{arg\,min}
\newcommand{\E}{\mathbb{E}}
\newcommand{\N}{\mathcal{N}}
\numberwithin{equation}{section}
\title{Robust Inference in Locally Misspecified Bipartite Networks\thanks{We are thankful to St\'{e}phane Bonhomme, \'{A}ureo de Paula, Michael Koles\'{a}r, and seminar participants at Edinburgh, Glasgow, Manchester, and Oxford for comments and suggestions. Yichong Zhang acknowledges the financial support from the NSFC under grant No. 72133002. Any and all errors are our own.} \\ \vspace{2mm}
}
\author{Luis E. Candelaria\thanks{University of Warwick.\ E-mail~address:  \texttt{l.candelaria@warwick.ac.uk}.} 
\and 
Yichong Zhang\thanks{Singapore Management University.\ E-mail~address: \texttt{yczhang@smu.edu.sg}.}
}
\date{March 2024}
\begin{document}
\maketitle
\begin{abstract}
This paper introduces a methodology to conduct robust inference in bipartite networks under local misspecification. We focus on a class of dyadic network models with misspecified conditional moment restrictions.
The framework of misspecification is local, as the effect of misspecification varies with the sample size. We utilize this local asymptotic approach to construct a robust estimator that is minimax optimal for the mean square error within a neighborhood of misspecification. Additionally, we introduce bias-aware confidence intervals that account for the effect of the local misspecification. These confidence intervals have the correct asymptotic coverage for the true parameter of interest under sparse network asymptotics. Monte Carlo experiments demonstrate that the robust estimator performs well in finite samples and sparse networks. As an empirical illustration,  we study the formation of a scientific collaboration network among economists.
\end{abstract}

\newpage
\section{Introduction}
Bipartite networks are embedded in a wide range of economic interactions that entail bilateral relations, for example, those between exporters and importers, buyers and sellers, and scientists and research teams. Dyadic regression with two-sided heterogeneity represents a leading approach for analyzing bipartite networks as it can adequately account for dyad dependence and link sparsity (\citealt{graham:2020}). While under correct model specification, it yields valid econometric inference for a structural parameter of interest, its performance under local misspecification has not been explored. To the best of our knowledge, this is the first paper to study the effects of local misspecification on a dyadic regression model for bipartite networks. 

In this paper, we develop a methodology to conduct robust inference in bipartite networks under local misspecification. We focus on a class of dyadic models characterized by conditional moment restrictions when model misspecification prevents these restrictions from holding exactly.  Some situations in dyadic network models that lead to the failure of these moment restrictions include incorrect specification of parametric assumptions, ignoring preferences for latent homophily, and excluding network externalities. Our framework is one of local misspecification, as the magnitude of misspecification vanishes as the sample size grows. This modelling approach is intended as an asymptotic device, which is particularly useful to provide a tractable approximation for the effect of the model misspecification. Moreover, it enables us to perform inference on the structural parameter of interest rather than on a pseudo-true parameter.  

We propose a robust estimator that accounts for the local misspecification on the conditional moment restrictions. To construct this estimator, we consider a sieve approximation to form unconditional moment restrictions that grow in number with the sample size. The robust estimator is then computed as a one-step refinement, which improves on an initial estimator that ignores the model misspecification by incorporating an adjustment term that minimizes the sensitivity of the inferential procedure to local deviations in the unconditional moment restrictions. In particular, this adjustment term is chosen optimally to minimize the worst-case mean squared error (MSE).  As a consequence, the robust estimator is minimax optimal in the sense that given the worst-case bias in a neighbourhood prespecified by the researcher, the robust estimator attains the minimum MSE within a class of regular estimators (\citealt{newey:1990}). 

The resulting confidence intervals are `honest', as they are widened to account for the worst-case bias within a neighborhood of misspecification (\citealt{donoho:1994, armstrong/kolesar:2021, bonhomme/weidner:2022}). This construction ensures that the bias-aware confidence intervals contain the structural parameter of interest under both correct specification and local misspecification with a prespecified coverage probability. The neighborhood of misspecification is determined ex-ante by the researcher but can be adjusted to conduct sensitivity analysis on the structural parameter of interest (\citealt{conley/hansen/rossi:2012}). Additionally, we show that in our setting, the misspecification neighborhood takes on a simple form, which enables us to obtain a closed-form expression for the adjustment term. In Section \ref{sec:model}, we describe in detail the construction of the robust estimator and bias-aware confidence intervals.

To illustrate the performance of our methodology in an empirical application, we study the formation of a scientific collaboration network among economists. While collaboration among scientists is pivotal for research productivity and innovation, the factors that determine the creation of these connections have remained underexplored (\citealt{goyal/etal:l2006,anderson/richards_shubik:2022,hsieh/konig/liu/zimmerman:2022}). In addition, the existing approaches used to analyze these collaborations rely on strong parametric assumptions or limit the degree of unobserved heterogeneity, making them susceptible to model misspecification. To address this gap, we implement our robust methodology to study the factors driving the formation of scientific collaborations among economists. The web of collaborations is described as a bipartite network connecting scientists and research projects, and it is constructed using the universe of published papers in the top American economic journals of general interest during the period of 2000 to 2006. We describe in detail the dataset and network construction in Section \ref{sec:application}.

As parameters of interest, we focus on those capturing assortative matching between scientists and the attributes of a research project. Additionally, we estimate global features of the network, such as the average out-degree. Our analysis documents that homophily in the research field of expertise is a strong predictor for link formation. Moreover, highly productive scientists are, on average, more likely to collaborate on higher-impact projects, measured by aggregate citations. The results also show that, during the sample period, female authors are more likely to sort in teams with a higher share of female authors. Finally, the evidence shows that our methodology leads to a reduction on the MSEs as predicted by out theoretical results. 

Our paper is related to the literature on robust statistics (\citealt{huber:1964,huber/ronchetti:2009,rieder:1994}), sensitivity analysis (\citealt{leamer:1985,imbens:2003, chen/tamer/torgovitsky:2011, nevo/rosen:2012,masten/poirer:2020,masten/poirer:2021}), and local misspecification (\citealt{newey:1985,conley/hansen/rossi:2012,kitamura/otsu/evdokimov:2013,bugni/canay/guggenberger:2012,andrews/gentzkov/shapiro:2017, andrews/gentzkov/shapiro:2020, bugni/ura:2019,armstrong/kolesar:2021,bonhomme/weidner:2022,armstrong/weidner/zeleneev:2022, armstrong/kline/sun:2023, christensen/connault:2023}). The most closely related papers to ours are those by \citet{andrews/gentzkov/shapiro:2017,armstrong/kolesar:2021, bonhomme/weidner:2022} and \citet{christensen/connault:2023}, who rely on local misspecification to conduct sensitivity analysis. In contrast to these papers, this work focuses on a class of models characterized by \textit{conditional} moment restrictions, which is not nested within their settings. Moreover, the object of study is entirely different as we analyze dyadic models for bipartite networks. On a technical side, bipartite network exhibit patterns of dyadic dependence that preclude us from using efficiency results invoked by the previous papers (cf. \citealt{kitamura/otsu/evdokimov:2013, bonhomme/weidner:2022}). 

The paper also contributes to the literature on dyadic regression (\citealt{graham:2017, jochmans:2018, tabord:2019, davezies/dhaultfoeuille/guyonvarch:2021, menzel:2021,GNP21,graham/niu/powell:2019, graham:2022}). The closest paper to ours is the one by \citet{graham:2022}. In this paper, the author derives the asymptotic properties of a logistic regression under sparse network asymptotics and shows that under global misspecification, the pseudo-true parameter minimizes a Kullback–Leibler Information Criterion. In contrast, we construct a one-step robust estimator for a dyadic regression under local misspecification. Our methodology differs substantially as we make use of local approximations to conduct inference on the true structural parameter rather than on a pseudo-true parameter. Notably, our estimator provides robust inference in dyadic models with degree heterogeneity, even in the presence of network externalities (cf. \citealt{pelican/graham:2020}, and see \citet{depaula:2020} for a survey on network models). On a technical note, our paper relies on the Yurinskii-type coupling under dyadic data with an increasing dimension of sieve bases. Our paper complements the kernel-based nonparametric estimation methods considered by \cite{GNP21,graham/niu/powell:2019}. 

The rest of the paper is organized as follows.  In Section \ref{sec:model}, we introduce the class of dyadic regressions with local misspecification. We also show how to construct the robust estimator and the bias-aware confidence intervals. In Section \ref{sec:size}, we prove that the bias-aware confidence intervals have the correct asymptotic size control. In Section \ref{sec:optimality}, we show that the robust estimator is minimax-MSE optimal. Section \ref{sec:simulations} provides Monte Carlo experiments. Section \ref{sec:application} presents an empirical application. Additional results and proofs are collected in the appendices.

\textbf{Notation.} Throughout the paper, we use $||\cdot||_2$, $||\cdot||_{op}$, $||\cdot||_F$ to denote the $\ell_2$-norm of a vector, the operator norm of a matrix, and the Frobenius norm of a matrix, respectively. For an positive integer $N$, we denote $[N]$  as $1,\cdots,N$. 

\section{Model}\label{sec:model}
We consider a bipartite network of scientific collaborations with $N$ scientists and $M$ projects. 
For any $i\in [N]$, let $X_i$ and $A_i$ denote vectors of scientist $i$-specific observed and unobserved attributes. Similarly, for any $j\in [M]$, let $W_j$ and $B_j$ denote vectors of project $j$-specific observed and unobserved attributes. For any pair $i \in [N]$ and $j \in [M]$, $U_{ij}$ denotes a scientist-project-specific unobserved attribute. The total sample size is denoted by $n = N+M$. 

For any finite $n$, the bipartite network $\left[ Y_{ij} \right]_{i\in \left[ N \right], j\in \left[ M \right]}$ is determined according to the $\text{X-W}$-exchangeable graphon
\begin{align}\label{eq:graphon}
Y_{ij} &= h_n\left(v_{ij}, X_i, W_j, A_i, B_j, U_{ij} \right),
\end{align} 
where $h_n$ is a measurable function that maps $(v_{ij}, X_i, W_j, A_i, B_j, U_{ij}) \mapsto \left\{ 0,1 \right\}$, and $v_{ij}$ represents an i.i.d. latent mixing factor that is unidentified, as discussed in \citet{graham:2022}. In our framework, $v_{ij}$ represents a misspecification component that will affect the conditional probability of establishing a collaboration link as specified by \eqref{eq:cprob}. The graphon in \eqref{eq:graphon} is used as a nonparametric data generating process for the bipartite network of scientific collaborations with $N$ scientists and $M$ projects. Intuitively, \eqref{eq:graphon} states that $Y_{ij}=1$ if scientist $i$ collaborates in project $j$, and $0$ otherwise. 

We assume that the conditional probability of establishing a collaboration link is given by 
\begin{equation}\label{eq:cprob}
\mathbb{P}\left[ Y_{ij}=1 \mid X_i, W_j \right] 
=
\int 
\Lambda\left( \alpha_{0,n} + Z_{ij}^\top\beta_{0} + n^{-1/2} v \right)
\pi_{ij}(v) dv,
\end{equation} where $\Lambda(u) = \exp(u)/(1+\exp(u))$ is the standard logistic CDF, $Z_{ij}=z(X_i, W_j)\in \Re^{d_z}$ is a vector-valued distance of the observed attributes $X_i$ and $W_j$, and $v_{ij}$ is a random variable with PDF given by $\pi_{ij}(\cdot)$. The distribution $\pi_{ij}(\cdot)$ can depend on $(X_i,W_j)$, it integrates to one  $\int  \pi_{ij}(v) dv = 1$, and it has mean equal to $\int v \pi_{ij}(v) dv = \eta_{ij}$. Note that $\eta_{ij}$ is a function of $(X_i,W_j)$, and thus, it is indexed by $(i,j)$. 

The coefficient $\alpha_{0,n}$ depends on the sample size $n$ to accommodate for sparse network asymptotics. As in \citet{graham:2022}, we set $\alpha_{0,n} \equiv \log(\alpha_0/n)$, which jointly with the specification of the conditional distribution in  \eqref{eq:cprob} and Assumption \ref{ass:sampling} below, ensure that in the limit the network is sparse. Consequently, the average in-degree and out-degree of this network will be bounded. That is, for any $i\in [N]$ and $j \in [M]$, let $\rho_n \equiv \mathbb{E} \left[ Y_{ij} \right]$ denote the marginal probability of forming a link, then $\lim_{n\rightarrow \infty}M \rho_n = \lambda_0^a < \infty$ and $\lim_{n\rightarrow \infty} N \rho_n = \lambda_0^p < \infty$.

The linear index $Z_{ij}^\top\beta_{0}$ in  \eqref{eq:cprob} captures the contribution that sharing similar observed attributes has on a scientist $i$'s decision to collaborate in the project $j$. In other words, this component represents an assortative matching mechanism for the formation of a scientific collaboration network. Consequently, the coefficient $\beta_0$ is interpreted as a homophily parameter. 

Finally, notice that the latent component $v_{ij}$ affects the probability of establishing scientific collaboration. In this setting, the component $n^{-1/2} v$ represents a source of misspecification. The misspecification design is local as it is indexed by $n$ and vanishes away as the sample size increases at a $\sqrt{n}$-rate. In Section \ref{sec:size}, we show that at this rate, the local misspecification induces a bias term into the asymptotic distribution of interest. Moreover, the effect of misspecification is determined by the first moment of the distribution $\pi_{ij}(\cdot)$ of the latent factor $v_{ij}$.

\begin{remark}[Conditional probability]
Alternatively, we also consider the equivalent specification for the true conditional probability
\begin{equation}\label{eq:cprob2}
\mathbb{P}\left[ Y_{ij}=1 \mid X_i, W_j \right] 
=
\Lambda\left( \alpha_{0,n} + Z_{ij}^\top\beta_{0} + n^{-1/2} \eta_{ij} \right),
\end{equation} where $\Lambda(\cdot)$ and $Z_{ij}\in \Re^{d_z}$ are as defined above. The factor  $\eta_{ij} = \eta(X_i,Z_j)$ is a scalar random variable.  In fact, \eqref{eq:cprob} and \eqref{eq:cprob2} lead to the same local misspecification in our setting because in both cases, we have
\begin{align*}
    \sqrt{n}(\mathbb{P}\left[ Y_{ij}=1 \mid X_i, W_j \right]  - \Lambda\left( \alpha_{0,n} + Z_{ij}^\top\beta_{0} \right)) = \eta_{ij} + o_p(1). 
\end{align*}
Therefore, all results in this paper hold without any change if the local misspecification is formulated as \eqref{eq:cprob2}. 
\end{remark}

We now provide four empirical illustrations that can be analyzed using the local misspecification framework characterized by \eqref{eq:graphon} and \eqref{eq:cprob}.

\begin{example}[Latent homophily]
Suppose the bipartite network $\left[ Y_{ij} \right]_{i\in \left[ N \right], j\in \left[ M \right]}$ is generated according to 
\begin{equation}\label{eq:NFM_example1}
    Y_{ij} = \mathbf{1} \left[ \alpha_{0,n} + Z_{ij}^\top\beta_{0} + n^{-1/2} v_{ij} + U_{ij} \geq 0\right]
\end{equation} 
where $v_{ij} = \psi(v_i, v_j)$,  $v_i, v_j, \in \Re^{d_v}$  are unobserved fixed effects that are potentially correlated with $X_i$ and $W_j$, and $\psi(\cdot, \cdot)$ is a known function. Common choices of $\psi(\cdot, \cdot)$  include (i) nonparametric homophily,  $\psi(v_i, v_j) = -\left\| v_i - v_j \right\|_2$, and (ii) interactive fixed effects, $\psi(v_i, v_j) = v_i^\top v_j$.
Note that if $U_{ij}$ has a Logistic CDF, then the conditional probability of establishing a link is equal to \eqref{eq:cprob}.
In this setting, the local misspecification in \eqref{eq:cprob} is induced by the presence of latent homophily $v_{ij}$.
\end{example}
    
\begin{example}[Semiparametric distribution]
Consider the graphon $\left[ Y_{ij} \right]_{i\in \left[ N \right], j\in \left[ M \right]}$ is generated according to \eqref{eq:NFM_example1}, where $U_{ij}$ is an i.i.d. random term with Logistic CDF and $v_{ij}$ is a latent mixing component with unknown distribution $\pi_{ij}(\cdot)$ that has mean $\eta_{ij}$, which can depend on $(X_i, W_j)$. In this case, $n^{-1/2} v_{ij} + U_{ij}$ represents a composite error term with an unknown distribution that renders the estimation of $(\alpha_{0,n}, \beta_0)$ nonparametric. In this setting, the presence of local misspecification is due to specifying incorrectly the distribution of $U_{ij}$.
\end{example}

\begin{example}[Functional form misspecification]
Suppose the graphon $\left[ Y_{ij} \right]_{i\in \left[ N \right], j\in \left[ M \right]}$ is generated according to 
\begin{align*}
    Y_{ij} = \mathbf{1} \left[ U_{ij} \leq f(X_i, W_j)\right]
\end{align*}
where $U_{ij}$ is an i.i.d. random variable with Logistic CDF and $f(\cdot, \cdot)$ is an unknown measurable function of $(X_i, W_j)$. Assuming that $U_{i,j}$ has a Logistic distribution is without loss of generality if we allow the functional form of $f(\cdot,\cdot)$ to be flexible. The latent component $v_{ij}$ in \eqref{eq:cprob} can be interpreted as a local specification error from specifying a linear index model. That is, $f(X_i, W_j)=\alpha_{0,n} + Z_{ij}^\top\beta_{0} + n^{-1/2} v_{ij}$, where the distribution of $v_{ij}$ given by $\pi_{ij}$ depends on $(X_i, W_j)$.
\end{example}

\begin{example}[Network externalities]
Suppose that the graphon $\left[ Y_{ij} \right]_{i \in \left[ N \right], j\in \left[ M \right]}$ is an equilibrium outcome from a strategic network formation model with incomplete information as in \citet{leung:2015} or \citet{ridder/sheng:2020}. Under Assumption \ref{ass:sampling}, and regularity conditions in \citet{leung:2015}, there exists a symmetric  Bayesian Nash equilibrium (BNE) under which edges form independently conditionally on the observed attributes $\mathcal X_N$ and $\mathcal W_M$, and the conditional linking probabilities are $X$-$W$- exchangeable. In particular, the symmetric BNE yields the following network formation equation 
\begin{equation*} 
Y_{ij} = 
\mathbf{1}\left[ \alpha_{0,n} + Z_{ij}^\top\beta_{0} 
+  
\gamma_n  
\left( \E\left[ \mathcal{E}_{ij} \mid \mathcal X_N, \mathcal W_M, \tau^\ast \right] \right)
\geq U_{ij} \right]
\end{equation*}
where $\mathcal{E}_{ij} \equiv \mathcal{E}_{ij}\left( Y_n,  \mathcal X_N, \mathcal W_M \right)$,  $Y_{n} = \left[ Y_{ij} \right]_{i \in \left[ N \right], j\in \left[ M \right]} $, $\mathcal X_N = (X_1, \cdots, X_N)$, $\mathcal W_M = (W_1, \cdots, W_M)$, and $\tau^\ast$ denotes the equilibrium belief profile. Here,  $\mathcal{E}_{ij}$ represents an endogenous network statistic such as (i) in-degree of project $j$, $\frac{1}{N} \sum_{k \neq i} Y_{kj}$;  (ii) out-degree of scientist $i$, $\frac{1}{M} \sum_{k \neq j} Y_{ik}$; or (iii) average transitivity, $\frac{1}{NM} \sum_{k \neq i} \sum_{l \neq j}  Y_{kj} Y_{kl}$. In this setting, by assuming $\gamma_n$ scales with $n^{-1/2}$ (i.e., $\gamma_n = n^{-1/2} \gamma$), the misspecification component $v_{ij} \equiv \gamma \E\left[ \mathcal{E}_{ij} \mid \mathcal X_N, \mathcal W_M, \tau^\ast \right]$, and we assume it has a conditional density $\pi_{ij}(\cdot)$ given $(X_i,W_j)$. Then, we have
\begin{align*}
    \mathbb P\left( Y_{ij} = 1 \mid X_i, W_j\right) & = \mathbb E \left[\mathbb P\left( Y_{ij} = 1 \mid \mathcal X_N, \mathcal W_M\right) \mid X_i, W_j \right]  \\
    & = \mathbb E \left[\Lambda( \alpha_{0,n} + Z_{ij}^\top\beta_{0} +  n^{-1/2} v_{ij}  ) \mid X_i, W_j \right] \\
    & = \int \Lambda(\alpha_{0,n} + Z_{ij}^\top\beta_{0} +  n^{-1/2} v  ) \pi_{ij}(v)dv, 
\end{align*}
which is just \eqref{eq:cprob}. 
\end{example}

As parameter of interest, we focus on the scalar $\Psi_{0,n}$ that is defined as 
\begin{eqnarray}\label{eq:psi_0}
    \Psi_{0,n} = \mathbb{E} \int  \gamma_n\left( D_{ij}, n^{-1/2} v, \theta_{0,n} \right) \pi_{ij}(v) dv, 
\end{eqnarray}
where $\gamma_n$ is a known measurable function that maps $\left( D_{ij}, n^{-1/2} v, \theta_{0,n} \right) \mapsto \Re$, with $D_{ij} \equiv (X_i^\top,W_j^\top)^\top$ and $\theta_{0,n} \equiv (\alpha_{0,n}, \beta_0)$. Notice that the misspecification term $n^{-1/2} v$ can affect $\Psi_{0,n}$ directly, and thus, $\Psi_{0,n}$ is allowed to be misspecified itself (cf. \citealt{armstrong/kolesar:2021}). Next, we provide some examples of the parameter of interest as described by \eqref{eq:psi_0}.
 
\begin{example}[Individual Parameters]\label{eg:4} 
If the research interest is to make inference on the homophily parameters $\beta_{0}$ or the intercept $\alpha_{0,n}$, the parameter of interest can be defined as $\Psi_{0,n} = \theta_{0,n}^{(k)}$, where $\theta_{0,n}^{(k)}$ denotes the $k$th element in $\theta_{0,n} \in \Re^{d_z + 1}$.
\end{example}

\begin{example}[Network Statistics]     
The parameter of interest $\Psi_{0,n}$ can be defined to capture global features of the network, such as the average in-degree of the network. In this case, 
\begin{equation*}
\Psi_{0,n} 
=
N \mathbb{E}
\int 
\left[  
\Lambda\left(\alpha_{0,n} + Z_{ij}^\top\beta_{0} + n^{-1/2}v \right)
\right] \pi_{ij}(v) dv,
\end{equation*}
which represents the average number of scientists that participate in a randomly selected project. Similarly, the parameter of interest could be the average out-degree of the network, i.e,
\begin{equation*}
\Psi_{0,n} 
=
M \mathbb{E}
\int 
\left[  
    \Lambda\left(\alpha_{0,n} + Z_{ij}^\top\beta_{0} + n^{-1/2}v \right) 
\right] \pi_{ij}(v) dv,
\end{equation*}
which captures the average number of projects in which a randomly selected scientist participates.
\end{example}

\begin{example}[Average Effects]\label{eg:6}
The parameter of interest $\Psi_{0,n}$ could capture the average marginal effect. For example, if $Z_{ij}^{(1)}$ is assumed to be continuous, then $\Psi_{0,n}$ can be defined as 
\begin{eqnarray*}
\Psi_{0,n} 
&=& 
n \mathbb{E} 
\int 
\frac{\partial}{\partial Z_{ij}^{(1)}} 
\Lambda\left( \alpha_{0,n} + Z_{ij}^\top\beta_{0} + n^{-1/2}v  \right)
\pi_{ij}(v) dv.
\end{eqnarray*}
Alternatively, if the object of interest is to do inference on an average partial effect, then $\Psi_{0,n}$ can be defined as  
\begin{eqnarray*}
\Psi_{0,n} = 
n
\left. \mathbb{E}
\int 
\frac{\partial}{\partial \beta} \Lambda\left(\alpha_{0,n} + Z_{ij}^\top\beta_{0} + n^{-1/2}v \right) \right|_{\beta = \beta_{0}}
\pi_{ij}(v) dv. 
\end{eqnarray*}
\end{example}

\subsection{Minimum-MSE Estimator}\label{sec:minimum-MSE}
The robust estimator $\hat \Psi(\hat \kappa)$ of $\Psi_{0,n}$ is defined as the following one-step regular estimator 
\begin{align}
\hat \Psi(\hat \kappa) 
&= 
\frac{1}{NM}\sum_{i=1}^N \sum_{j=1}^M \gamma_n(D_{ij}, 0, \hat \theta_{\text{initial}}) 
+ 
\hat \kappa^\top  \left[\frac{n}{NM}\sum_{i=1}^N \sum_{j=1}^M (Y_{ij} - \Lambda(R_{ij}^\top \hat \theta_{\text{initial}}))H_{ij}\right],
\label{eq:robust_estimator}
\end{align}
where $R_{ij} \equiv (1, Z_{ij}^\top)^\top \in \Re^{d_z +1}$, $\hat \theta_{\text{initial}}$ is a $\sqrt{n}$-consistent initial estimator of $\theta_{0,n}$, i.e., $\hat \theta_{\text{initial}} - \theta_{0,n} = O_P(n^{-1/2})$, $H_{ij}$ is a $k_n$-dimensional vector of sieve basis functions of $X_i$ and $W_j$, and $\hat \kappa$ is a $k_n$-dimensional estimator of the parameter $\kappa$. The dimension $k_n$ can be either fixed or increasing with $n$, given that its growth rate satisfies Assumption \ref{ass:kn} below. We note that restricting $H_{ij}$ to be a function of $(X_i, W_j)$ only is necessary because we focus on the conditional (and locally misspecified) moment\footnote{Moreover, it follows from the random sampling of nodes in  Assumption \ref{ass:sampling} that the conditional moment satisfies $ \mathbb E(Y_{ij} - \Lambda(R_{ij}^\top \theta) \mid \mathcal X_N, \mathcal W_M) = \mathbb E(Y_{ij} - \Lambda(R_{ij}^\top \theta) \mid  X_i, W_j)$. Hence, it is sufficient to define $H_{ij}$ as a function of $(X_i, W_j)$.} 
\begin{align}\label{eq:conditional_moment}
    \mathbb E(Y_{ij} - \Lambda(R_{ij}^\top \theta) \mid  X_i,  W_j)= 0.
\end{align}

The robust estimator $\hat \Psi(\hat \kappa)$ is composed of two components. The first term in the right-hand side of  \eqref{eq:robust_estimator} represents a plug-in estimator of $\Psi_{0,n}$, which ignores the effect of the local misspecification. The plug-in estimator is evaluated at the initial estimator $\hat \theta_{\text{initial}}$. As initial estimator, we consider the logistic maximum likelihood estimator (MLE), i.e., 
\begin{align}\label{eq:theta_initial}
    \hat \theta_{\text{initial}} = (\hat \alpha_n, \hat{\beta}_n) \equiv \arg \max_{\theta} L_n(\theta),
\end{align}
where 
\begin{align*}
    L_n(\theta)     =& \frac{1}{NM} \sum_{i=1}^N\sum_{j=1}^M \ell_{ij}(\theta), \quad \ell_{ij}(\theta)  = Y_{ij}R_{ij}^\top \theta - \log(1+\exp(R_{ij}^\top \theta)). 
\end{align*}
Notice that under local misspecification, the asymptotic distribution of the plug-in estimator will be centered at $\mathbb{E} \gamma_n\left( D_{ij}, 0, \theta_{0,n} \right)$ rather than at $\Psi_{0,n}$ as it assumes away the presence of the local misspecification.

The second component in  \eqref{eq:robust_estimator} corresponds to a one-step adjustment that accounts for the asymptotic bias generated by the local misspecification. The term in brackets is the sample analogue of the unconditional moments $\mathbb{E}\left[ \left(Y_{ij}- \Lambda\left( R_{ij}^\top\theta \right)\right) H_{ij} \right]$, which are scaled by $n$ to account for sparse network asymptotics (cf. \citealt{graham:2022}). Notice that the following unconditional moment restrictions will fail to hold exactly
\begin{eqnarray}\label{eq:unconditional_moment}
   \mathbb{E}\left[ \left(Y_{ij}- \Lambda\left( R_{ij}^\top\theta \right)\right) H_{ij} \right]=0,
\end{eqnarray}
due to the effect of the local misspecification in the conditional probability given by \eqref{eq:cprob}.
The vector $\hat \kappa$ is an estimate of $\kappa$, which captures the response of the estimator $\hat \Psi(\hat \kappa)$ to the misspecification of the $k_n \times 1$ moment conditions in \eqref{eq:unconditional_moment}. Thus, following \citet{andrews/gentzkov/shapiro:2020,armstrong/kolesar:2021,bonhomme/weidner:2022}, we refer to $\kappa$ as a sensitivity parameter. Below, we discuss how to optimally select $\kappa$ to minimize the worst-mean squared error (MSE).

In contrast to the existing literature on sensitivity analysis that relies on maximum likelihood methods or method of moments, the dimension of the sensitivity parameter $\kappa$ grows with the dimension of the sieve space. This is an implication that the underlying moment restrictions in \eqref{eq:unconditional_moment} are constructed based on the locally misspecified conditional moments in \eqref{eq:conditional_moment}. The growing sieve space will determine the limiting distribution of the one-step robust estimator. 

Next, we describe the limiting distribution of the robust estimator, which will be central to performing robust inference on $\Psi_{0,n}$. Following \cite{andrews/gentzkov/shapiro:2017, armstrong/kolesar:2021,bonhomme/weidner:2022}, we focus on a class of regular estimators. As in \citet{newey:1990}, a necessary and sufficient condition for the one-step robust estimator to be regular is 
\begin{align}\label{eq:regularity}
    G^\top_0 \kappa = \Gamma,
\end{align}
where $\Gamma \equiv \Gamma(0)$ with $\Gamma(t)$ being defined as $\sup_{t \in \mathcal{N}(0)}||\partial_t \mathbb{E} \gamma_n(D_{ij},0,\theta_{0,n}+t) - \Gamma(t)||_2 \rightarrow 0$ and $G_0 \equiv \mathbb{E}\alpha_0 \exp(Z_{ij}^\top \beta_0)H_{ij}R_{ij}^\top$. Under this requirement and provided that $\hat \kappa$ is a consistent estimator of $\kappa$, i.e., $||\hat \kappa - \kappa||_2 = o_P(1)$ and the initial estimator is $\sqrt{n}$-consistent, i.e., $\hat \theta_{\text{initial}} - \theta_{0,n} = O_P(n^{-1/2})$, we have the next strong approximation of the distribution of the robust estimator $\hat \Psi(\hat \kappa)$ which does not depend on the asymptotic distribution of the initial estimator: 
\begin{align}\label{eq:psi_asyexpansion}
\sqrt{n}(\hat \Psi(\hat \kappa) - \Psi_{0,n}) - \omega_n(\kappa) = o_P(1),
\end{align}
where 
\begin{align}\label{eq:psi_limitdist}
\omega_n(\kappa) 
\stackrel{d}{=} 
\N\left( 
\mathbb E\left[ \eta_{ij} (\kappa^{\top} \alpha_0 \exp(Z_{ij}^\top \beta_{0})H_{ij} -\Delta_{v,ij})\right],  
\Sigma_{\gamma} + 2\Sigma_{H,\gamma}^\top \kappa + \kappa^\top \Sigma_H \kappa 
\right),
\end{align}
$\Delta_{v,ij} \equiv \partial_{v} \gamma_n(D_{ij},v,\theta_{0,n})|_{v=0}$ and $(\Sigma_H,\Sigma_{H,\gamma},\Sigma_{\gamma})$ are three submatrices of a $(k_n+1) \times (k_n+1)$ covariance matrix
\begin{align*}
    \Sigma_n = \begin{pmatrix}
        \Sigma_H & \Sigma_{H,\gamma} \\
        \Sigma_{H,\gamma}^\top & \Sigma_{\gamma}
    \end{pmatrix}.   
\end{align*}
The covariance matrix $\Sigma_n$ is defined and characterized in Section \ref{sec:VB_est}. Here, $\omega_n(\kappa)$ is referred to as the strong approximation of $\sqrt{n}(\hat \Psi(\hat \kappa) - \Psi_{0,n})$ rather than the limit under weak convergence because the dimensions of $\kappa$, $\Sigma_H$, $\Sigma_{H,\gamma}$, and thus themselves, implicitly depend on the sample size $n$. 

The strong approximation result in \eqref{eq:psi_asyexpansion} and \eqref{eq:psi_limitdist} provides two main insights. First, the one-step robust estimator $\hat \Psi(\hat \kappa)$ is asymptotically biased. This asymptotic bias is multifaceted: the term $-\E\left[\eta_{ij}\Delta_{v,ij} \right]$ reflects the bias induced by the plug-in estimator in \eqref{eq:robust_estimator} which is the result of having a limiting distribution centered around $\mathbb{E} \gamma_n\left( D_{ij}, 0, \theta_{0,n} \right)$ rather than at $\Psi_{0,n}$. While the term $\E\left[ \eta_{ij} \kappa^{\top} \alpha_0 \exp(Z_{ij}^\top \beta_{0})H_{ij} \right]$ stems from the adjustment term in  \eqref{eq:robust_estimator} and represents the asymptotic bias due to using $e(R_{ij}^\top \theta)$ as the identification condition in \eqref{eq:unconditional_moment} as opposed to $\mathbb{P}\left[ Y_{ij} =1 \mid X_i, W_j \right]$. Second, the strong approximation result underscores that the local misspecification does not have a higher-order effect, as its complete impact concentrates on the asymptotic bias. In other words, the asymptotic variance does not depend on the local misspecification parameter $\eta_{ij}$.

Importantly, notice that the direction of the asymptotic bias is determined by $\eta_{ij}$, which corresponds to the mean of the latent distribution. Because this direction is unidentified, it renders the asymptotic distribution in \eqref{eq:psi_asyexpansion} not immediately useful to conduct inference on $\Psi_{0,n}$. Nevertheless, the magnitude of the asymptotic bias can be bounded if an upper bound on $\E (\eta_{ij}^2)$ is imposed. Let $\overline{\sigma}^2$ denote this upper bound, which is specified by the researcher. It follows then that 
\begin{align*}
    \E\left[ \eta_{ij} (\kappa^{\top} \alpha_0 \exp(Z_{ij}^\top \beta_{0})H_{ij} -\Delta_{v,ij})\right]    
    \leq 
    \overline \sigma (\E (\kappa^{\top} \alpha_0 \exp(Z_{ij}^\top \beta_{0})H_{ij} -\Delta_{v,ij})^2)^{1/2}.
\end{align*}
We can use this inequality to define the worst-case bias within the neighborhood of misspecification characterized by $\E (\eta_{ij}^2) \leq \overline{\sigma}^2$ as 
\begin{align*}
    \mathcal{B}(\kappa) 
    &\equiv  
    \overline \sigma (\mathbb E (\kappa^{\top} \alpha_0 \exp(Z_{ij}^\top \beta_{0})H_{ij} -\Delta_{v,ij})^2)^{1/2} \\
    & = \left\{ \overline{\sigma}^2 \left[\kappa^\top B_H \kappa - 2B_{H,\gamma}^\top \kappa + B_\gamma \right] \right\}^{1/2},
\end{align*}   
where 
\begin{align*}
    B_H = \alpha_0^2 \mathbb E \exp(2Z_{ij}^\top \beta_0) H_{ij} H_{ij}^\top, \quad B_{H,\gamma} = \alpha_0 \mathbb E \exp(Z_{ij}^\top \beta_0) H_{ij} \Delta_{v,ij}, \quad \text{and} \quad B_{\gamma} = \mathbb E \Delta_{v,ij}^2.
\end{align*}

The optimal choice of the sensitivity parameter $\kappa$ is set to minimize the worst-case MSE defined as 
\begin{align}\label{eq:MSE_H}
    MSE_H(\kappa) = \mathcal B (\kappa)^2 + ( \Sigma_{\gamma} + 2 \Sigma_{H,\gamma}^\top \kappa +  \kappa^\top  \Sigma_H \kappa)
\end{align}
subject to the regularity condition \eqref{eq:regularity}. It follows from the method of Lagrange multipliers that the optimal $\kappa$, denoted by $\kappa^*$, is defined as the solution to this convex optimization 
\begin{align*}
    \kappa^* = \argmin_{\kappa} \left\{ MSE(\kappa) + \lambda^\top (G^\top_0 \kappa - \Gamma)\right\},
\end{align*}
where $\lambda$ denotes the Lagrange multiplier associated to regularity condition \eqref{eq:regularity}. Moreover,  $\kappa^\ast$ can be characterized in closed-form, which is given by 
\begin{align}\label{eq:kappa_star}
    \kappa^* = -\Phi_1^{-1} \left\{\Phi_2 - G_0 \left[G^\top_0 \Phi_1^{-1} G_0 \right]^{-1} \left[ G^\top_0 \Phi_1^{-1}\Phi_2+\Gamma \right]  \right\},
\end{align}
where
\begin{align*}
    \Phi_1 =  \Sigma_H + \overline \sigma^2 B_H  \quad \text{and} \quad \Phi_2 = \Sigma_{H,\gamma} - \overline \sigma^2 B_{H,\gamma}.
\end{align*}

In Section \ref{sec:VB_est}, we provide consistent estimators $\hat G$, $\hat \Gamma$, $\hat B_H$, $\hat B_{H,\gamma}$, $\hat \Sigma_H$,  $\hat \Sigma_{H,\gamma}$, $\hat \Sigma_{\gamma}$ for their population counterparts. Then, the optimal sensitivity parameter $\kappa^*$ is estimated by 
\begin{align*}
    \hat \kappa^* =  -\hat \Phi_1^{-1} \left\{\hat \Phi_2 - \hat G \left[\hat G^\top \hat \Phi_1^{-1} \hat G \right]^{-1} \left[ \hat G^\top \hat \Phi_1^{-1}\hat \Phi_2+ \hat \Gamma \right]  \right\},
\end{align*}
where 
\begin{align*}
\hat  \Phi_1 =  \hat \Sigma_H  + \overline \sigma^2 \hat B_H \quad \text{and} 
\quad 
\hat \Phi_2 =  \hat \Sigma_{H,\gamma} - \overline \sigma^2 \hat B_{H,\gamma}.
\end{align*}

The approach of selecting $\kappa^\ast$ to minimize the worst-case MSE is similar in spirit to that in \citet{bonhomme/weidner:2022} and has a number of antecedents in robust statistics (\citealt{huber/ronchetti:2009, rieder:1994}). It is important to emphasize that a unique feature of our setting is that the dimension $\kappa^\ast$ is determined by the sieve basis and can increase with the sample size. Consequently, our methodology to select and consistently estimate $\kappa^\ast$ complements the existing approaches in the literature.

\subsection{Robust Confidence Interval}\label{sec:robust_CI}

The bias-aware confidence interval for $\Psi_{0,n}$ is designed to account for the effect of the local misspecification. This is obtained by enlarging the $1-\alpha$ quantile value by the worst-case bias in the strong approximation of $\sqrt{n}(\hat \Psi(\hat \kappa) - \Psi_{0,n})$ (cf. \citealt{armstrong/kolesar:2021}). In particular, we define the feasible optimal bias-aware confidence interval for $\Psi_{0,n}$ as 
\begin{align}\label{eq:robust_CI_feasible}
\widehat{CI}_{1-\alpha}(\hat \kappa^*) \equiv \hat \Psi(\hat \kappa^*)  
    \pm 
CV_\alpha \left(    \frac{ \hat {\mathcal{B}}(\hat \kappa^*)}{(\hat \Sigma_{\gamma} + 2 \hat \Sigma_{H,\gamma}^\top \hat \kappa^\ast + \hat \kappa^{*,\top} \hat \Sigma_H \hat \kappa^*)^{1/2}}  \right) 
        \frac{(\hat \Sigma_{\gamma} + 2 \hat \Sigma_{H,\gamma}^\top \hat \kappa^\ast + \hat \kappa^{*,\top} \hat \Sigma_H \hat \kappa^*)^{1/2} }{\sqrt{n}},
\end{align}
where feasible worst-case bias $\hat {\mathcal{B}}(\hat \kappa^*)$ is defined as 
\begin{align*}
\hat {\mathcal{B}}(\hat \kappa^*) =   \left\{ \overline{\sigma}^2 \left[\hat \kappa^{*,\top} \hat B_H \hat \kappa^* - 2\hat B_{H,\gamma}^\top \hat \kappa^* + \hat B_\gamma \right] \right\}^{1/2}
\end{align*}
and $CV_\alpha(t)$ denotes the $1-\alpha$ quantile of $\mathcal{Z}$ where $\mathcal{Z}\sim\N\left( t, 1 \right)$. 

\begin{remark}[Neighborhood of misspecification]
The choice of  $\overline{\sigma}^2$ cannot be determined in a data-driven way. However, this upper bound can be varied to conduct sensitivity analysis on the magnitude of the local misspecification (cf. \citealt{conley/hansen/rossi:2012}). The values of $\overline{\sigma}^2$ should reflect the potential forms of misspecification that are admissible in a given application and thus should be established on an application-specific basis by the researcher. In likelihood-based models, \citet{bonhomme/weidner:2022} offer an interpretation of  $\overline{\sigma}^2$ as a lower bound on the local power of a likelihood ratio specification test. In a setting of overidentified unconditional moments, \citealt{armstrong/kolesar:2021} suggest using a $J$-test of overidentifying restrictions to obtain a lower bound on $\overline\sigma^2$. For additional discussions on the interpretation of the neighborhood of misspecification, we refer the reader to \citet{andrews/gentzkov/shapiro:2017,noack/rothe:2019, armstrong/weidner/zeleneev:2022}. 
\end{remark}

\begin{remark}[Other performance criteria]
The researcher might be interested in a robust estimate for $\Psi_{0,n}$ that minimizes the length of the bias-aware CIs as in \citet{armstrong/kolesar:2021}. The optimal sensitivity parameter $\kappa^\ast$ can be set to minimize the asymptotic variance in the strong approximation \eqref{eq:psi_limitdist} subject to an upper bound on the worst-case bias and condition \eqref{eq:regularity}. In extended simulation experiments, we verify that the resulting bias-aware CIs also have the correct asymptotic coverage. 
\end{remark}

\subsection{Variance and Bias Components}\label{sec:VB_est}
In this section, we propose consistent estimators for $G$, $ \Gamma$, $B_H$, $B_{H,\gamma}$, $\Sigma_H$,  $\Sigma_{H,\gamma}$, and $ \Sigma_{\gamma}$. Recall that the initial estimator $\hat \theta_{\text{initial}} = (\hat \alpha_n, \hat{\beta}_n)$ is defined in \eqref{eq:theta_initial}. Then, we define the estimators 
\begin{align*}
   & \hat G =  \frac{1}{NM}\sum_{i=1}^N \sum_{j=1}^M n \exp(R_{ij}^\top \hat \theta_{\text{initial}})H_{ij}R_{ij}^\top,  \\
   & \hat \Gamma = \frac{1}{NM}\sum_{i=1}^N \sum_{j=1}^M \partial_\theta \gamma_n(D_{ij},0,\hat \theta_{\text{initial}}), \\
   & \hat B_H = \frac{1}{NM}\sum_{i=1}^N \sum_{j=1}^M n^2 \exp(2R_{ij}^\top \hat \theta_{\text{initial}}) H_{ij} H_{ij}^\top, \\ 
   & \hat B_{H,\gamma} = \frac{1}{NM}\sum_{i=1}^N \sum_{j=1}^M n^2 \exp(2R_{ij}^\top  \hat \theta_{\text{initial}}) H_{ij}\hat \Delta_{v,ij}, \quad \text{and} \quad \hat B_{\gamma} = \frac{1}{NM}\sum_{i=1}^N \sum_{j=1}^M  \hat \Delta_{v,ij}^2,
\end{align*}
where 
\begin{align*}
 \hat   \Delta_{v,ij} = \partial_{v} \gamma_n(D_{ij},v,\hat \theta_{\text{initial}})|_{v=0}.
\end{align*}

To estimate the components of the covariance matrix in \eqref{eq:psi_limitdist}, recall that 
\begin{align*}
    \Sigma_n = \begin{pmatrix}
        \Sigma_H & \Sigma_{H,\gamma} \\
        \Sigma_{H,\gamma}^\top & \Sigma_{\gamma}
    \end{pmatrix},
\end{align*}
where $\Sigma_n$ is the covariance matrix of the joint scores
\begin{align*}
    s_{ij} = & \begin{pmatrix}
        & n\left(Y_{ij} - \int \Lambda(R_{ij}^\top \theta_{0,n}+n^{-1/2}v)\pi_{ij}(v)dv\right)H_{ij} \\
        & \gamma_n(D_{ij},0,\theta_{0,n}) -  \mathbb{E}\gamma_n(D_{ij},0, \theta_{0,n}) 
    \end{pmatrix} \equiv (s_{Y,ij} H_{ij}^\top,s_{\gamma,ij})^\top.
\end{align*}
Let
$\overline s_{ij} = \mathbb{E}(s_{ij}|X_i,A_i,W_j,B_j)$,  $\overline s_{1i}^a = \mathbb{E}(\overline s_{ij}|X_i,A_i)$, $\overline s_{1j}^p = \mathbb{E}(\overline s_{ij}|W_j,B_j)$, $\Sigma_{n}^a = \mathbb{E} \overline s_{1i}^a(\overline s_{1i}^a)^\top$, $\Sigma_{n}^p = \mathbb{E} \overline s_{1j}^p(\overline s_{1j}^p)^\top$, and $\Sigma_{2n} = n^{-1}\mathbb{E}(s_{ij} - \overline s_{ij})(s_{ij} - \overline s_{ij})^\top$. Because $D_{ij} = (X_i^\top,W_j^\top)$, we have 
\begin{align*}
\overline{s}_{\gamma,ij} \equiv \mathbb E (s_{\gamma,ij}|X_i,A_i,W_j,B_j) = s_{\gamma,ij}.     
\end{align*}
Then, the covariance matrix $\Sigma_n$ is given by 
\begin{align}\label{eq:Sigma_n}
\Sigma_n \equiv    \begin{pmatrix}
        \Sigma_H & \Sigma_{H,\gamma} \\
        \Sigma_{H,\gamma}^\top & \Sigma_{\gamma}
    \end{pmatrix} = \frac{\Sigma^a_{n}}{1-\phi} + \frac{\Sigma^p_{n}}{\phi} + \frac{ \Sigma_{2n}}{\phi(1-\phi)}
\end{align}
where $\phi_n \equiv M/n \rightarrow \phi \in (0,1)$, 
\begin{align*}
\Sigma_{n}^a 
= 
\mathbb E 
\begin{pmatrix}
[\mathbb E (s_{Y,ij}H_{ij}|X_i,A_i)][\mathbb E (s_{Y,ij}H_{ij}|X_i,A_i)]^\top	
&	
\mathbb E (s_{Y,ij}H_{ij}|X_i,A_i)\mathbb E (s_{\gamma,ij}|X_i,A_i)  \\
\mathbb E (s_{Y,ij}H_{ij}^\top |X_i,A_i)\mathbb E (s_{\gamma,ij}|X_i,A_i)  
& [\mathbb E (s_{\gamma,ij}|X_i,A_i)]^2
\end{pmatrix},
\end{align*}
\begin{align*}
\Sigma_{n}^p = \mathbb E \begin{pmatrix}
[\mathbb E (s_{Y,ij}H_{ij}|W_j,B_j)][\mathbb E (s_{Y,ij}H_{ij}|W_j,B_j)]^\top	&	\mathbb E (s_{Y,ij}H_{ij}|W_j,B_j)\mathbb E (s_{\gamma,ij}|W_j,B_j)  \\
\mathbb E (s_{Y,ij}H_{ij}^\top |W_j,B_j)\mathbb E (s_{\gamma,ij}|W_j,B_j)  & [\mathbb E (s_{\gamma,ij}|W_j,B_j)]^2
\end{pmatrix},
\end{align*}
and 
\begin{align*}
\Sigma_{2n} = \mathbb E \begin{pmatrix}
   n^{-1} Var(s_{Y,ij}|X_i,A_i,W_j,B_j) H_{ij}H_{ij}^\top & 0\\
    0 & 		0
\end{pmatrix}.
\end{align*}

Let $\hat s_{Y,ij} = n(Y_{ij} - \exp(R_{ij}^\top \hat \theta_{\text{initial}}))$ and $\hat s_{\gamma,ij} = \gamma_n(D_{ij},0,\hat \theta_{\text{initial}}) - \frac{1}{NM}\sum_{i=1}^N \sum_{j=1}^M\gamma_n(D_{ij},0,\hat \theta_{\text{initial}})$. Following \cite{graham:2022}, we define 
\begin{align*}
    \hat \Sigma_n \equiv   \begin{pmatrix}
      \hat  \Sigma_H & \hat \Sigma_{H,\gamma} \\
      \hat  \Sigma_{H,\gamma}^\top & \hat \Sigma_{\gamma}
    \end{pmatrix} 
    = \frac{\hat \Sigma^a_{n}}{1-\phi_n} + \frac{\hat \Sigma^p_{n}}{\phi_n} + \frac{\hat \Sigma_{2n}}{\phi_n(1-\phi_n)},
\end{align*}
where 
\begin{align*}
   \hat \Sigma^a_{n} =   \frac{1}{N}\sum_{i=1}^N
\begin{pmatrix}
\frac{2}{M(M-1)}\sum_{j=1}^{M-1} \sum_{k=j+1}^M(\hat s_{Y,ij}\hat s_{Y,ik}H_{ij} H_{ik}^\top)
&	
\frac{2}{M(M-1)}\sum_{j=1}^{M-1} \sum_{k=j+1}^M(\hat s_{Y,ij}\hat s_{\gamma,ik}H_{ij}) \\
\frac{2}{M(M-1)}\sum_{j=1}^{M-1} \sum_{k=j+1}^M(\hat s_{Y,ij}\hat s_{\gamma,ik}H_{ij}^\top)
& \frac{2}{M(M-1)}\sum_{j=1}^{M-1} \sum_{k=j+1}^M \hat s_{\gamma,ij}\hat s_{\gamma,ik}
\end{pmatrix},
\end{align*}
\begin{align*}
   \hat \Sigma^p_{n} =  \frac{1}{M}\sum_{j=1}^M
\begin{pmatrix}
\frac{2}{N(N-1)}\sum_{i=1}^{N-1} \sum_{k=i+1}^N(\hat s_{Y,ij}\hat s_{Y,ik}H_{ij} H_{ik}^\top)
&	
\frac{2}{N(N-1)}\sum_{i=1}^{N-1} \sum_{k=i+1}^N(\hat s_{Y,ij}\hat s_{\gamma,ik}H_{ij}) \\
\frac{2}{N(N-1)}\sum_{i=1}^{N-1} \sum_{k=i+1}^N(\hat s_{Y,ij}\hat s_{\gamma,ik}H_{ij}^\top)
& \frac{2}{N(N-1)}\sum_{i=1}^{N-1} \sum_{k=i+1}^N \hat s_{\gamma,ij}\hat s_{\gamma,ik}
\end{pmatrix},
\end{align*}
\begin{align*}
\hat    \Sigma_{2n} =  \frac{1}{n}\begin{pmatrix}
   \frac{1}{NM}\sum_{i=1}^N \sum_{j=1}^M \hat s_{Y,ij}^2 H_{ij}H_{ij}^\top -  \hat \Sigma^a_{YY,n} - \hat \Sigma^p_{YY,n} & 0\\
    0 & 		0
\end{pmatrix},
\end{align*}
and $\hat \Sigma^a_{YY,n}$ and $\hat \Sigma^p_{YY,n}$ are the upper-left $k_n \times k_n$ submatrices of $\hat \Sigma^a_{n}$ and $\hat \Sigma^p_{n}$, respectively. In Lemma \ref{lem:bias_variance_estimator} of the Appendix, we show that the estimators $\hat G$, $\hat \Gamma$, $\hat B_H$, $\hat B_{H,\gamma}$, $\hat \Sigma_H$,  $\hat \Sigma_{H,\gamma}$, and $\hat \Sigma_{\gamma}$ are consistent.

\section{Asymptotic Size Control}\label{sec:size}
In this section, we show the bias-aware CI proposed in \eqref{eq:robust_CI_feasible} has the correct asymptotic size control, which relies on the following assumptions.  
\begin{assumption}
\begin{enumerate}
    \item The graphon $\{Y_{ij}\}_{i \in [N],j \in [M]}$ is generated according to  \eqref{eq:graphon} with $\left\{ X_i, A_i \right\}_{i=1}^{N}$,  $\left\{ W_j, B_j \right\}_{j=1}^{M}$, $\left\{ U_{ij} \right\}_{i\in \left[ N \right], j\in \left[ M \right]}$, $\left\{ v_{ij} \right\}_{i\in \left[ N \right], j\in \left[ M \right]}$ be sequences of i.i.d. random variables that are independent, $M/n \rightarrow \phi \in (0,1)$ and $n = M+N$. 
    \item The conditional  distribution $\mathbb{P}(Y_{ij} = 1|W_i,X_j)$ is as defined in  \eqref{eq:cprob} with $\eta_{ij} = \int v \pi_{ij}(v)dv$. In addition, suppose $\mathbb E \eta_{ij}^2 \leq \overline \sigma^2$ for some known upper bound $\overline \sigma^2$ and $\max_{i\in [N],j\in [M]}\int |v|\exp(v) \pi_{ij}dv \leq C$ for some constant $C \in (0,\infty)$. 
    \item Let $\theta_0 = (\alpha_0,\beta_0^\top)^\top \in \mathbb{A} \times \mathbb{B} = \Theta$, the parameter spaces $\mathbb{A}$ and $\mathbb{B}$ are compact. 
    \item Suppose that $Z_{ij}$ is bounded, and 
    \begin{align*}
    \Lambda(R_{ij}^\top \theta_{0,n}) \in \left[\frac{c_1}{n},\frac{c_2}{n}\right]
    \end{align*}
    for some constants $0<c_1 \leq c_2<\infty$, where $R_{ij} = (1,Z_{ij}^\top)^\top$. 
    \item There exist constants $0 < \lambda_1 < \lambda_2 < \infty$ such that, with probability approaching one, 
    \begin{align*}
    \lambda_1 \leq \lambda_{\min}\left( \frac{1}{NM}\sum_{i=1}^N \sum_{j=1}^M R_{ij}R_{ij}^\top\right) \leq \lambda_{\max}\left( \frac{1}{NM}\sum_{i=1}^N \sum_{j=1}^M R_{ij}R_{ij}^\top\right) \leq \lambda_2,
    \end{align*}
    where for a symmetric matrix $A$, $\lambda_{\min}(A)$ and $\lambda_{\max}(A)$ denote the minimum and maximum singular values of $A$, respectively. 
    \item Let $G(\alpha,\beta) = \mathbb{E}\alpha \exp(Z_{ij}^\top \beta)H_{ij}R_{ij}^\top$ and $G_0 = G(\alpha_0,\beta_0)$. Then, 
    \begin{align*}
    \sup_{(\alpha,\beta^\top)^\top \in \Theta}
    \frac{\lambda_{max}(G(\alpha,\beta) - G_0)}{|\alpha - \alpha_0| + ||\beta - \beta_0||_2} \leq C    
    \end{align*}
     for a positive constant $C<\infty.$ 
\end{enumerate}\label{ass:sampling}
\end{assumption}
Condition 1 of Assumption \ref{ass:sampling} describes the graph formation and ensures that $\{Y_{ij}\}_{i \in [N],j \in [M]}$ is $\text{X-W}$ exchangeable. Condition 2 restricts the conditional probability to satisfy  \eqref{eq:cprob}.
Condition 3 is a regularity condition on the parameter space $\Theta$. Condition 4 bounds the largest magnitude of the covariates and the rate at which the probability of forming a link decays. This condition, jointly with 3, controls the rate of sparsity in the network. The boundedness of $Z_{ij}$ is for theoretical simplicity and can be relaxed a sub-exponential tail of $Z_{ij}$, which is necessary for $G_0$ in Condition 6 to be well-defined. Condition 5 is an identification condition for $\theta_{0,n}$. Finally, Condition 6 is a continuity condition of $G(\alpha, \beta)$ at $(\alpha_0, \beta_0)$, which bounds the maximum eigenvalue difference. Conditions 1-5 of Assumption \ref{ass:sampling} has been used in \citet{graham:2022}. 
Condition 6 is specific to our setting. 

The next assumption provides the smoothness conditions required for the asymptotic expansions. 
\begin{assumption}
\begin{enumerate}
\item Recall $\theta_{0,n} \equiv (\alpha_{0,n}, \beta_0)$. Suppose $\gamma_n(D_{ij},0,\theta_{0,n}+t)$ is twice-differentiable in $t$, 
\begin{align*}
\mathbb E \left\|\partial_\theta \gamma_n(D_{ij},0,\theta_{0,n}) \right\|_2^2 = O(1),    
\end{align*}
\begin{align*}
\sup_{t \in \mathcal{N}(0)}||\partial_t \mathbb{E} \gamma_n(D_{ij},0,\theta_{0,n}+t) - \Gamma(t)||_2 \rightarrow 0,
\end{align*}
and
\begin{align*}
    \sup_{t \in \mathcal{N}(0)}\left\| \frac{1}{NM}\sum_{i=1}^N \sum_{j=1}^M \partial_{t,t^\top} \gamma_n(D_{ij},0,\theta_{0,n}+t)\right\|_{op} = O_P(1), 
\end{align*}
where $\mathcal{N}(0) \in \Re^{d_z+1}$ is an arbitrary neighborhood of $0$. Also, let $\Gamma(0)$ be denoted as $\Gamma$. 
\item Let $\Delta_{v,ij} = \partial_v\gamma_n(D_{ij},v,\theta_{0,n})|_{v=0}$. Then, we have  
\begin{align}\label{eq:quadratic_expansion}
    \sqrt{n}\mathbb{E} 
\left( 
    \int \gamma_n(D_{ij},n^{-1/2}v,\theta_{0,n})\pi_{ij}(v)dv 
    -
    \gamma_n(D_{ij},0,\theta_{0,n})    
\right) =  \mathbb E \eta_{ij}\Delta_{v,ij} + o(1),
\end{align} 
\begin{align*}
  \sup_{t \in \mathcal{N}(0)}\left\| \frac{1}{NM}\sum_{i=1}^N \sum_{j=1}^M   \partial_{v,t}\gamma_n(D_{ij},0,\theta_{0,n} + t) \right\|_2 = O_P(1)
\end{align*}
and $ \mathbb E \Delta_{v,ij}^2 = O(1)$.

\end{enumerate}   \label{A2:smoothness} 
\end{assumption}
Condition 1 of Assumption \ref{A2:smoothness} states that, in the absence of local misspecification, the parameter of interest is twice differentiable at $\theta_{0,n}$ with bounded derivatives. Meanwhile, Condition 2 requires the parameter of interest to be Gateaux differentiable in the local misspecification component. This condition ensures that the bias in the parameter of interest induced by the local misspecification is non-vanishing at a rate $\sqrt{n}$. All the parameters in Examples \ref{eg:4}--\ref{eg:6} satisfy this assumption. 
  
\begin{assumption}
\begin{enumerate}
  \item Let $p_{ij} = \mathbb E (Y_{ij}|X_i,A_i,W_j,B_j)$. Then, we assume $n p_{ij}$ and  $ \gamma_n(D_{ij},0,\theta_{0,n})$ have bounded support. 
  \item Suppose $||G_0||_{op} = O(1)$, $||B_{H,\gamma}||_2 = O(1)$, $||\Sigma_{H,\gamma}||_2 = O(1)$, 
\begin{align*}
   &  \lambda_1 \leq \lambda_{\min}\left(\Sigma_H\right) \leq \lambda_{\max}\left(\Sigma_H\right) \leq \lambda_2, \quad \lambda_{\max}\left(B_H\right) \leq \lambda_2, \\
   & \lambda_1 \leq \lambda_{\min}(G_0^\top G_0) \leq \lambda_{\max}(G_0^\top G_0) \leq \lambda_2,
\end{align*}
\begin{align*}
    \lambda_{\max}\left(\frac{1}{N}\sum_{i=1}^N \left[\mathbb E (s_{Y,ij} H_{ij}|X_i,A_i)\right]  \left[\mathbb E (s_{Y,ij} H_{ij}|X_i,A_i)\right]^\top \right) = O_P(1),
\end{align*}
and 
\begin{align*}
    \Sigma_{\gamma} + 2 \Sigma_{H,\gamma}^\top \kappa^* +  \kappa^{*,\top}  \Sigma_H \kappa^* \geq \lambda_1
\end{align*}
for some constants $0 < \lambda_1 \leq \lambda_2 < \infty$.
\end{enumerate}\label{ass:Y_gamma}
\end{assumption}
Assumption \ref{ass:Y_gamma} contains the regularity conditions that  ensure $\kappa^* = O(1)$.  

\begin{assumption}
Suppose $H_{ij} \in \Re^{k_n}$, $\max_{i \in [N], j \in [M]}||H_{ij}||_\infty \leq \zeta_n$, and  
\begin{align*}
    \frac{1}{NM}\sum_{i=1}^N \sum_{j=1}^M ||H_{ij}||_2^2 = O_P(k_n), 
\end{align*}
such that $k_n = o(n^{1/5})$ and $k_n^3 \zeta_n^2 \log^2(n) = o(n)$. \label{ass:kn}
\end{assumption}
Assumption \ref{ass:kn} controls the complexity of the sieve basis by imposing a rate condition on the dimension and strength of $H_{ij}$. Specifically, we require $k_n = o(n^{1/5})$ to ensure the Yurinskii's coupling of the score with an increasing dimension.\footnote{See, for example, \citet[Theorem 10]{P02} for the statement of the Yurinskii's coupling.} This rate is typically assumed in the literature. See, for example, the discussion after \citet[Theorem 1]{li/liao:2020}. The second rate requirement is weaker than the first if $\zeta_n = o(k_n)$, which is the case if the support of $X_i$ and $W_j$ are bounded and splines are used to construct the sieve basis. We also note that Assumption \ref{ass:kn} allows for $k_n$ to be fixed. Then, Condition 2 in Assumption \ref{ass:Y_gamma} requires $k_n \geq d_z+1$. 

\begin{assumption}
The initial estimator satisfies $\hat \theta_{\text{initial}} - \theta_{0} = O_P(n^{-1/2})$.   \label{ass:regular_intitial}
\end{assumption}
The logistic MLE satisfies this requirement; see \citet{graham:2022}. The next theorem shows that our bias-aware CI has the correct asymptotic size control under sparse network asymptotics. 

\begin{theorem}\label{theo:size}
 Suppose Assumptions \ref{ass:sampling}--\ref{ass:regular_intitial} hold and the bias-aware CI, i.e., $\widehat {CI}(\hat \kappa^*)$, is constructed as \eqref{eq:robust_CI_feasible}. Then, for the significance level $\alpha \in (0,1)$, we have
 \begin{align*}
\liminf_{n \rightarrow \infty}     \mathbb P( \Psi_{0,n} \in \widehat {CI}_{1-\alpha}(\hat \kappa^*)) \geq 1-\alpha. 
 \end{align*} 
\end{theorem}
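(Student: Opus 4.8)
The plan is to reduce the coverage probability to a statement about an (approximately) Gaussian pivot whose bias-to-standard-deviation ratio is controlled, and then to exploit the monotonicity of the bias-aware critical value. Write $V_n(\kappa)\equiv\Sigma_{\gamma}+2\Sigma_{H,\gamma}^\top\kappa+\kappa^\top\Sigma_H\kappa$ for the variance in \eqref{eq:psi_limitdist} and let $\hat V_n\equiv\hat\Sigma_{\gamma}+2\hat\Sigma_{H,\gamma}^\top\hat\kappa^*+\hat\kappa^{*\top}\hat\Sigma_H\hat\kappa^*$ be the sample analogue used in \eqref{eq:robust_CI_feasible}. Since $\widehat{CI}_{1-\alpha}(\hat\kappa^*)$ is the symmetric interval centered at $\hat\Psi(\hat\kappa^*)$ with half-length $CV_\alpha\bigl(\hat{\mathcal B}(\hat\kappa^*)/\hat V_n^{1/2}\bigr)\hat V_n^{1/2}/\sqrt n$, the coverage event equals
\begin{align*}
\left\{ \frac{\sqrt n\,\bigl|\hat\Psi(\hat\kappa^*)-\Psi_{0,n}\bigr|}{\hat V_n^{1/2}} \le CV_\alpha\!\left(\frac{\hat{\mathcal{B}}(\hat\kappa^*)}{\hat V_n^{1/2}}\right)\right\}.
\end{align*}
The task is then to control the left-hand side via the strong approximation and to replace the hatted quantities by their population limits.

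The principal technical step is the strong approximation \eqref{eq:psi_asyexpansion}--\eqref{eq:psi_limitdist}, which I would establish as follows. Substituting \eqref{eq:robust_estimator} and expanding $\Lambda(R_{ij}^\top\hat\theta_{\text{initial}})$ around $\theta_{0,n}$ shows that the contribution of $\hat\theta_{\text{initial}}$ enters through $\sqrt n(\hat\theta_{\text{initial}}-\theta_{0,n})$ premultiplied by $G_0^\top\kappa^*-\Gamma$, which vanishes by the regularity condition \eqref{eq:regularity}; hence the limit is free of the initial estimator. Under Condition 4 of Assumption \ref{ass:sampling} one has $\Lambda(R_{ij}^\top\theta)\asymp\exp(R_{ij}^\top\theta)\asymp 1/n$, which produces the factors $\alpha_0\exp(Z_{ij}^\top\beta_0)$ and justifies the $n$-scaling of the moments. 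The leading stochastic term is a degenerate second-order dyadic sum in the scores $s_{ij}$; I would apply a H\'ajek/Hoeffding decomposition into the two node-level projections and the idiosyncratic remainder, matching the three pieces of $\Sigma_n$ in \eqref{eq:Sigma_n}, and then couple the resulting $(k_n+1)$-dimensional linear statistic to a Gaussian vector by Yurinskii's coupling, whose error is controlled by $k_n=o(n^{1/5})$ and $k_n^3\zeta_n^2\log^2 n=o(n)$ of Assumption \ref{ass:kn}. The quadratic remainder in the expansion of $\Psi_{0,n}$ is absorbed by Assumption \ref{A2:smoothness}, and the Gateaux term \eqref{eq:quadratic_expansion} supplies the bias contribution $-\mathbb E[\eta_{ij}\Delta_{v,ij}]$. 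This coupling with a growing dimension is the main obstacle: the fixed-dimension CLT is unavailable under dyadic dependence, so one must track the dependence of the coupling error on $k_n$ throughout.

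With the approximation in hand, $\sqrt n(\hat\Psi(\hat\kappa^*)-\Psi_{0,n})=\omega_n(\kappa^*)+o_P(1)$ with $\omega_n(\kappa^*)\stackrel{d}{=}\N(b_n,V_n(\kappa^*))$ and $b_n=\mathbb E[\eta_{ij}(\kappa^{*\top}\alpha_0\exp(Z_{ij}^\top\beta_0)H_{ij}-\Delta_{v,ij})]$. By Cauchy--Schwarz together with the constraint $\mathbb E\eta_{ij}^2\le\overline\sigma^2$ in Condition 2 of Assumption \ref{ass:sampling}, the bias is dominated by the worst-case bias, $|b_n|\le\mathcal B(\kappa^*)$. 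I would then invoke Lemma \ref{lem:bias_variance_estimator} to obtain $\hat G,\hat\Gamma,\hat B_H,\hat B_{H,\gamma},\hat\Sigma_H,\hat\Sigma_{H,\gamma},\hat\Sigma_{\gamma}$ consistent; applying the continuous mapping theorem to the closed form \eqref{eq:kappa_star} gives $\|\hat\kappa^*-\kappa^*\|_2\convP 0$ (so that the strong approximation applies at $\hat\kappa^*$), and in turn $\hat V_n/V_n(\kappa^*)\convP 1$ and $\hat{\mathcal B}(\hat\kappa^*)-\mathcal B(\kappa^*)\convP 0$. Assumption \ref{ass:Y_gamma} keeps $V_n(\kappa^*)$ bounded away from zero, and continuity of $t\mapsto CV_\alpha(t)$ lets me replace the estimated standardized bias by $\mathcal B(\kappa^*)/V_n(\kappa^*)^{1/2}$.

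Finally, setting $t_n^*\equiv b_n/V_n(\kappa^*)^{1/2}$ and $\bar t_n\equiv\mathcal B(\kappa^*)/V_n(\kappa^*)^{1/2}$, the standardized pivot obeys $\omega_n(\kappa^*)/V_n(\kappa^*)^{1/2}\stackrel{d}{=}\N(t_n^*,1)$ with $|t_n^*|\le\bar t_n$. Using the defining property $\mathbb P(|\N(t,1)|\le CV_\alpha(t))=1-\alpha$ together with the fact that $|\N(t,1)|$ is stochastically increasing in $|t|$, I obtain $\mathbb P(|\N(t_n^*,1)|\le CV_\alpha(\bar t_n))\ge\mathbb P(|\N(\bar t_n,1)|\le CV_\alpha(\bar t_n))=1-\alpha$. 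Combining this inequality with the $o_P(1)$ remainder from the strong approximation and the consistency replacements of the previous paragraph yields $\liminf_{n}\mathbb P(\Psi_{0,n}\in\widehat{CI}_{1-\alpha}(\hat\kappa^*))\ge 1-\alpha$, which is the claim.
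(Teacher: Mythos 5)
Your proposal follows essentially the same route as the paper's proof: Step 1 is the strong approximation built from the mean-value expansion, the regularity condition $G_0^\top\kappa^*=\Gamma$ eliminating the initial estimator, the H\'ajek/Hoeffding decomposition matching the three pieces of $\Sigma_n$, and Yurinskii's coupling under Assumption \ref{ass:kn}; Step 2 is the Cauchy--Schwarz bound $|b_n|\le\mathcal B(\kappa^*)$, consistency of the plug-in quantities via Lemma \ref{lem:bias_variance_estimator}, and the monotonicity of $t\mapsto\mathbb P(|\N(t,1)|\le c)$ in $|t|$ combined with the defining property of $CV_\alpha(\cdot)$. One point deserves sharpening: you assert that the continuous mapping theorem gives $\|\hat\kappa^*-\kappa^*\|_2\convP 0$ and that this suffices for the strong approximation to hold at $\hat\kappa^*$, but when $k_n\to\infty$ the cross term $(\hat\kappa^*-\kappa^*)^\top$ applied to the score vector (whose Euclidean norm is $O_P(k_n^{1/2})$) requires the quantitative rate $\|\hat\kappa^*-\kappa^*\|_2=o_P(k_n^{-1/2})$, which is exactly what the paper's Lemma \ref{lem:bias_variance_estimator} establishes via operator-norm rates of order $\sqrt{k_n^2\zeta_n^2\log^2(n)/n}$ on $\hat\Phi_1$, $\hat\Phi_2$, and $\hat G$; mere consistency is not enough in the growing-dimension case.
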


\section{Asymptotic Optimality}\label{sec:optimality}
In this section, we show that robust one-step estimator $\hat \Psi(\hat \kappa^*)$ is minimax MSE-optimal compared to any other regular estimator that is asymptotically linear.

We consider an arbitrary estimator $\breve \theta$, which has the following linear expansion: 
\begin{align*}
	\sqrt{n}(\breve \theta - \theta_{0,n}) =  \kappa^\top_\theta  \left[\frac{n^{3/2}}{NM}\sum_{i=1}^N \sum_{j=1}^M (Y_{ij} - \Lambda(R_{ij}^\top \hat \theta_{0,n}))F_{ij}\right] + o_P(1), 
\end{align*}
where $F_{ij} \in \Re^{d_F}$ is $d_F$ dimensional vector of transformations of $(X_i,W_j)$,  $\kappa_F \in \Re^{d_F \times d_R}$, and $d_F$ is fixed. We further assume the estimator $\breve \theta$ is regular. 
\begin{assumption}\label{assumption:brevetheta}
Suppose $G_F^\top \kappa_\theta = I_{d_R} $ with $G_F = \mathbb E \alpha_0 \exp(Z_{ij}^\top \beta_0) F_{ij}R_{ij}^\top$.    Further assume
\begin{align*}
    \lambda_1 \leq \lambda_{\min}(G_F^\top G_F) \leq \lambda_{\max}(G_F^\top G_F) \leq \lambda_2
\end{align*}
for some constants $0 < \lambda_1 \leq \lambda_2 < \infty$.
\end{assumption}

The estimator $\breve \theta$ can be viewed as a GMM estimator based on the following set of unconditional moments:
\begin{align*}
      \mathbb{E}\left[ \left(Y_{ij}- \Lambda\left( R_{ij}^\top\theta \right)\right) F_{ij} \right]=0.
\end{align*}
A leading example of $\breve \theta$ is our initial estimator $\hat \theta_{\text{initial}}$, which satisfies the above moment condition with $F_{ij} = R_{ij}$. 

Consider the plug-in estimator $\breve \Psi$ of $\Psi_{0,n}$ defined as 
\begin{align*}
	\breve \Psi = \frac{1}{NM}\sum_{i=1}^N \sum_{j=1}^M \gamma(D_{ij},0,\breve \theta).
\end{align*} 
Following the arguments that lead to \eqref{eq:psi_asyexpansion}, we can show that 
\begin{align*}
 \sqrt n (\breve \Psi - \Psi_{0,n}) - \omega_{F,n} = o_P(1)
\end{align*}
and
\begin{align*}
 \omega_{F,n} \stackrel{d}{=} \N\left( \mathbb E \eta_{ij} ( \kappa^{\top}_F \alpha_0 \exp(Z_{ij}^\top \beta_{0})F_{ij} -\Delta_{v,ij}),  \Sigma_{\gamma} + 2\Sigma_{F,\gamma}^\top \kappa_F + \kappa^\top_F \Sigma_F \kappa_F \right),
\end{align*}
where $\Sigma_F$ and $\Sigma_{F,\gamma}$ are similarly defined as $\Sigma_H$ and $\Sigma_{H,\gamma}$ with $H_{ij}$ replaced by $F_{ij}$, and 
\begin{align}\label{eq:kappa_F}
    \kappa_F = \kappa_\theta \Gamma. 
\end{align}
Specifically, we have
\begin{align*}
  \begin{pmatrix}
        \Sigma_F & \Sigma_{F,\gamma} \\
        \Sigma_{F,\gamma}^\top & \Sigma_{\gamma}
    \end{pmatrix} = \frac{\Sigma^a_{F,n}}{1-\phi} + \frac{\Sigma^p_{F,n}}{\phi} + \frac{ \Sigma_{2,F,n}}{\phi(1-\phi)}
\end{align*}
with 
\begin{align*}
\Sigma_{F,n}^a = \mathbb E 
\begin{pmatrix}
[\mathbb E (s_{Y,ij}F_{ij}|X_i,A_i)][\mathbb E (s_{Y,ij}F_{ij}|X_i,A_i)]^\top	
&	
\mathbb E (s_{Y,ij}F_{ij}|X_i,A_i)\mathbb E (s_{\gamma,ij}|X_i,A_i)  \\
\mathbb E (s_{Y,ij}F_{ij}^\top |X_i,A_i)\mathbb E (s_{\gamma,ij}|X_i,A_i)  
& [\mathbb E (s_{\gamma,ij}|X_i,A_i)]^2
\end{pmatrix},
\end{align*}
\begin{align*}
\Sigma_{F,n}^p = \mathbb E \begin{pmatrix}
[\mathbb E (s_{Y,ij}F_{ij}|W_j,B_j)][\mathbb E (s_{Y,ij}H_{ij}|W_j,B_j)]^\top	&	\mathbb E (s_{Y,ij}F_{ij}|W_j,B_j)\mathbb E (s_{\gamma,ij}|W_j,B_j)  \\
\mathbb E (s_{Y,ij}F_{ij}^\top |W_j,B_j)\mathbb E (s_{\gamma,ij}|W_j,B_j)  & [\mathbb E (s_{\gamma,ij}|W_j,B_j)]^2
\end{pmatrix},
\end{align*}
and 
\begin{align*}
\Sigma_{2,F,n} = \mathbb E \begin{pmatrix}
   n^{-1} Var(s_{Y,ij}|X_i,A_i,W_j,B_j) F_{ij}F_{ij}^\top & 0\\
    0 & 		0
\end{pmatrix}.
\end{align*}

Then, the worst-case mean squared error for the plug-in estimator $\breve \Psi$ is  
\begin{align}\label{eq:MSE_F}
    MSE_F(\kappa_F) = \overline{\sigma}^2 \left[\kappa^\top_F B_F \kappa_F - 2B_{F,\gamma}^\top \kappa_F + B_\gamma \right] + ( \Sigma_{\gamma} + 2 \Sigma_{F,\gamma}^\top \kappa_F +  \kappa^\top_F  \Sigma_F \kappa_F),
\end{align}
where 
\begin{align*}
    B_F = \alpha_0^2 \mathbb E \exp(2Z_{ij}^\top \beta_0) F_{ij} F_{ij}^\top \quad \text{and} \quad B_{F,\gamma} = \alpha_0 \mathbb E \exp(Z_{ij}^\top \beta_0) F_{ij} \Delta_{v,ij}.
\end{align*}

\begin{assumption}\label{ass:F_approx}
Suppose there exists a matrix $\Pi_F \in \Re^{d_F \times k_n}$ and $\delta_{ij} \equiv \delta(X_i,W_j) \in \Re^{d_F}$ for some $d_F$-dimensional function $\delta$ such that 
\begin{align*}
	F_{ij} = \Pi_F H_{ij} + \delta_{ij}, 
\end{align*}
$||\Pi_F||_{op} = O(1)$, and $\mathbb E ||\delta_{ij}||_2^2 = o(1)$. 
\end{assumption}

Assumption \ref{ass:F_approx} implies each element of $F_{ij}$ can be well approximated by the sieve bases $H_{ij}$. This condition holds given that $F_{ij}$ is a sufficiently smooth function of $(X_i,W_j)$. 
\begin{theorem}\label{theo:optimal}
 Suppose Assumptions \ref{ass:sampling}--\ref{ass:F_approx} holds. Then, we have
\begin{align*}
   MSE_H(\kappa^*) \leq MSE_F(\kappa_F) + o(1),
\end{align*}
where $MSE_H(\cdot)$, $\kappa^*$, $\kappa_F$, and $MSE_F(\cdot)$  are defined in \eqref{eq:MSE_H}, \eqref{eq:kappa_star}, \eqref{eq:kappa_F}, and \eqref{eq:MSE_F}, respectively. 
\end{theorem}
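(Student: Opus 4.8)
The plan is to show that the feasible set for $\kappa^*$ --- vectors obeying the regularity constraint $G_0^\top\kappa=\Gamma$ --- contains a point whose worst-case MSE is no larger than $MSE_F(\kappa_F)$ up to $o(1)$; since $\kappa^*$ minimizes $MSE_H$ over that set, the theorem follows. The natural candidate is obtained by pushing $\kappa_F$ through the approximation $F_{ij}=\Pi_F H_{ij}+\delta_{ij}$ of Assumption \ref{ass:F_approx}: I set $\tilde\kappa=\Pi_F^\top\kappa_F\in\Re^{k_n}$, so that $\tilde\kappa^\top H_{ij}=\kappa_F^\top(F_{ij}-\delta_{ij})$, i.e. $\tilde\kappa$ reproduces the linear combination $\kappa_F^\top F_{ij}$ up to the negligible remainder $\kappa_F^\top\delta_{ij}$. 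I then expect $MSE_H(\tilde\kappa)=MSE_F(\kappa_F)+o(1)$. Because $\tilde\kappa$ is only approximately feasible, I will correct it to an exactly feasible $\bar\kappa$ without changing the MSE by more than $o(1)$, and conclude $MSE_H(\kappa^*)\le MSE_H(\bar\kappa)=MSE_F(\kappa_F)+o(1)$. Throughout I may assume $\|\kappa_F\|_2=O(1)$; otherwise the positive semidefinite leading form of $MSE_F$ forces $MSE_F(\kappa_F)\to\infty$ and the claim is trivial. Note $\|\tilde\kappa\|_2\le\|\Pi_F\|_{op}\|\kappa_F\|_2=O(1)$.

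For the bias, recall $\mathcal B(\kappa)=\overline\sigma(\mathbb E(\kappa^\top\alpha_0\exp(Z_{ij}^\top\beta_0)H_{ij}-\Delta_{v,ij})^2)^{1/2}$ and that the bias part of $MSE_F$ equals $\overline\sigma^2\mathbb E(\kappa_F^\top\alpha_0\exp(Z_{ij}^\top\beta_0)F_{ij}-\Delta_{v,ij})^2$. Substituting $\tilde\kappa^\top H_{ij}=\kappa_F^\top(F_{ij}-\delta_{ij})$ shows the two integrands differ only by $c_{ij}\equiv\kappa_F^\top\alpha_0\exp(Z_{ij}^\top\beta_0)\delta_{ij}$. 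Since $\exp(Z_{ij}^\top\beta_0)$ is bounded (Assumption \ref{ass:sampling}.4) and $\|\kappa_F\|_2=O(1)$, Cauchy--Schwarz with $\mathbb E\|\delta_{ij}\|_2^2=o(1)$ gives $(\mathbb E c_{ij}^2)^{1/2}=o(1)$. The triangle inequality in $L_2$ then matches the two bias terms up to $o(1)$, and squaring (both sides being $O(1)$) matches the bias components of the MSEs.

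The variance matching is the crux. The key observation is that the variance term is a genuine positive semidefinite quadratic form evaluated at a scalar ``combined score'': writing $T_{ij}^H=\kappa^\top s_{Y,ij}H_{ij}+s_{\gamma,ij}$ and similarly $T_{ij}^F$, the decomposition \eqref{eq:Sigma_n} gives $\Sigma_\gamma+2\Sigma_{H,\gamma}^\top\kappa+\kappa^\top\Sigma_H\kappa=V(T^H)$, where $V(g)\equiv\frac{1}{1-\phi}\mathbb E[(\mathbb E(g|X_i,A_i))^2]+\frac{1}{\phi}\mathbb E[(\mathbb E(g|W_j,B_j))^2]+\frac{n^{-1}}{\phi(1-\phi)}\mathbb E[Var(g|X_i,A_i,W_j,B_j)]$, and the analogous identity holds for $V(T^F)$ with $F$ in place of $H$. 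Since $V$ is induced by a PSD bilinear form, $V^{1/2}$ is a seminorm and $|V(T^H)^{1/2}-V(T^F)^{1/2}|\le V(T^H-T^F)^{1/2}$. With $\tilde\kappa=\Pi_F^\top\kappa_F$ one has $T_{ij}^H-T_{ij}^F=-e_{ij}$ with $e_{ij}\equiv\kappa_F^\top s_{Y,ij}\delta_{ij}$, so it remains to prove $V(e)=o(1)$. For the idiosyncratic term, $Var(e_{ij}\mid X_i,A_i,W_j,B_j)=(\kappa_F^\top\delta_{ij})^2 Var(s_{Y,ij}\mid X_i,A_i,W_j,B_j)$, and the rescaling yields $n^{-1}Var(s_{Y,ij}\mid\cdot)=np_{ij}(1-p_{ij})=O(1)$ by the sparsity in Assumptions \ref{ass:sampling}.4 and \ref{ass:Y_gamma}.1, so this term is $O(1)\cdot\mathbb E(\kappa_F^\top\delta_{ij})^2=o(1)$. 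For the $a$ and $p$ terms the conditioning collapses $s_{Y,ij}$ onto its bounded conditional mean: $\mathbb E(e_{ij}\mid X_i,A_i)=\mathbb E[\overline s_{Y,ij}\kappa_F^\top\delta_{ij}\mid X_i,A_i]$ with $\overline s_{Y,ij}=\mathbb E(s_{Y,ij}\mid X_i,A_i,W_j,B_j)=O(1)$ (by Assumptions \ref{ass:sampling}.2, \ref{ass:sampling}.4 and \ref{ass:Y_gamma}.1), whence $\mathbb E[(\mathbb E(e_{ij}\mid X_i,A_i))^2]\le\mathbb E[\overline s_{Y,ij}^2(\kappa_F^\top\delta_{ij})^2]=o(1)$, and symmetrically for the $p$ term. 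This step is the main obstacle: a naive bound through the unconditional second moment of $s_{Y,ij}$ is fatal because $Var(s_{Y,ij}\mid\cdot)=O(n)$, so the argument must exploit that this high-variance noise is annihilated by conditioning in the $a/p$ parts and rescaled by $n^{-1}$ in the idiosyncratic part.

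Finally, I handle feasibility. Since $\Pi_F H_{ij}=F_{ij}-\delta_{ij}$, we have $G_0^\top\tilde\kappa=\Gamma-E_\delta^\top\kappa_F$ with $E_\delta\equiv\mathbb E[\alpha_0\exp(Z_{ij}^\top\beta_0)\delta_{ij}R_{ij}^\top]$ satisfying $\|E_\delta\|_{op}=o(1)$ by Cauchy--Schwarz, so $\|\Gamma-G_0^\top\tilde\kappa\|_2=o(1)$. I define the corrected vector $\bar\kappa=\tilde\kappa+G_0(G_0^\top G_0)^{-1}(\Gamma-G_0^\top\tilde\kappa)$, which satisfies $G_0^\top\bar\kappa=\Gamma$ exactly; using $\lambda_{\min}(G_0^\top G_0)\ge\lambda_1$ and $\|G_0\|_{op}=O(1)$ from Assumption \ref{ass:Y_gamma}.2 gives $\|\bar\kappa-\tilde\kappa\|_2=o(1)$. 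Because $MSE_H$ is a quadratic form with bounded coefficient objects ($\lambda_{\max}(\Sigma_H),\lambda_{\max}(B_H)\le\lambda_2$ and $\|\Sigma_{H,\gamma}\|_2,\|B_{H,\gamma}\|_2=O(1)$) evaluated at bounded arguments, a first-order expansion gives $MSE_H(\bar\kappa)=MSE_H(\tilde\kappa)+o(1)$. Combining the bias and variance matchings with this continuity, $\bar\kappa$ is feasible with $MSE_H(\bar\kappa)=MSE_F(\kappa_F)+o(1)$, and minimality of $\kappa^*$ over the constraint set yields $MSE_H(\kappa^*)\le MSE_H(\bar\kappa)\le MSE_F(\kappa_F)+o(1)$, as claimed.
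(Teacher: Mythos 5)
Your proposal is correct and rests on the same core strategy as the paper's proof: exhibit an exactly feasible $\kappa$ in the $H$-class whose worst-case MSE matches $MSE_F(\kappa_F)$ up to $o(1)$, then invoke the optimality of $\kappa^*$ over the constraint set $\{G_0^\top\kappa=\Gamma\}$. The difference lies in how exact feasibility is arranged. The paper first parametrizes $\kappa_\theta = G_F(G_F^\top G_F)^{-1}(I_{d_R}-G_F^\top\Xi)+\Xi$, replaces $G_F$ by $\Pi_F G_0$ to obtain a perturbed $\tilde\kappa_F$ with $\|\kappa_F-\tilde\kappa_F\|_2=o(1)$, and sets $\kappa_H=\Pi_F^\top\tilde\kappa_F$, which satisfies $G_0^\top\kappa_H=\Gamma$ by algebra; you instead take $\tilde\kappa=\Pi_F^\top\kappa_F$ directly, observe it is only $o(1)$-feasible, and project it onto the affine constraint set via $\bar\kappa=\tilde\kappa+G_0(G_0^\top G_0)^{-1}(\Gamma-G_0^\top\tilde\kappa)$. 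Both corrections are $o(1)$ perturbations and both then rely on continuity of the quadratic MSE objective, so the two routes are essentially equivalent in cost; yours avoids the $\Xi$-parametrization entirely, which is slightly more self-contained. Your treatment of the variance matching is also more explicit than the paper's (which disposes of $\tilde\kappa_F^\top\Sigma_F\tilde\kappa_F=\kappa_H^\top\Sigma_H\kappa_H+o(1)$ with ``in the same manner''): you correctly identify that a naive bound via the unconditional second moment of $s_{Y,ij}$ fails because $\mathrm{Var}(s_{Y,ij}\mid X_i,A_i,W_j,B_j)=O(n)$, and that the argument must use the Graham-type decomposition so that the $a$- and $p$-projections see only the bounded conditional mean $\overline s_{Y,ij}$ while the idiosyncratic piece carries the $n^{-1}$ rescaling. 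One small caveat: your dismissal of the case $\|\kappa_F\|_2\to\infty$ via ``the PSD leading form of $MSE_F$ forces $MSE_F(\kappa_F)\to\infty$'' is not airtight without a strictly positive lower eigenvalue bound on $\overline\sigma^2 B_F+\Sigma_F$, which is not assumed; but in the paper's setup $\kappa_F=\kappa_\theta\Gamma$ with $\Xi$ fixed and $\lambda_{\min}(G_F^\top G_F)\ge\lambda_1$, so $\|\kappa_F\|_2=O(1)$ holds anyway and nothing is lost.
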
 

Theorem \ref{theo:optimal} shows that the one-step robust estimator $\hat \Psi(\hat \kappa^*)$ attains the smallest worst-case MSE relative to any arbitrary plug-in estimator $\breve \Psi$ that estimates regularly the parameter $\theta$ based on unconditional moments that are constructed using a general set of covariates $F_{ij}$. That is, the one-step robust estimator $\hat \Psi(\hat \kappa^*)$ is minimax-MSE optimal within a clas of regular estimators. This result holds even when the dimension $k_n$ of $H_{ij}$ is fixed, as long as Assumption \ref{ass:F_approx} is valid.  To guarantee Assumption \ref{ass:F_approx} in the fixed dimension case, we can construct $H_{ij}$ by stacking $F_{ij}$ with other functions of $(X_i,W_j)$. Theoretically, the benefit of allowing for a diverging number of $H_{ij}$ is that it expands the range of $F_{ij}$ that can be approximated by linear functions of $H_{ij}$, thereby broadening the set of estimators to which our one-step estimator is superior. In practice, we do not suggest using a very large $k_n$ when $n$ is moderate because it may (i) violate Assumption \ref{ass:kn} and (ii) cause numerical instability.

\section{Monte Carlo Simulations}\label{sec:simulations}

This section presents simulation evidence for the finite sample performance of the one-step robust estimator introduced in Section \ref{sec:robust_CI}. We consider a wide array of Monte Carlo designs that are meant to capture differences in the specification of the local misspecification, sample size and sparsity of the network.

The bipartite network is modelled according to the following data-generating process. For any $i \in \left[ N \right]$ and $j \in \left[ M \right]$, the observed attributes $X_i$ and $W_j$ are drawn from independent and identically distributed log-normal distributions with mean $-1/4$ and variance $1/2$. The dyad-specific attributes are computed to account for assortative matching. In particular, we define $Z_{ij} = \log\left( X_i \cdot W_j \right)$, which will be distributed normally with mean equal to $-1/2$ and variance $1$. The sieve basis is computed using a Hermite polynomial approximation of $Z_{ij}$ of order $k_n$. As an alternative sieve basis, we consider the tensor product of polynomial expansions on $X_i$ and $W_j$, which yields similar qualitative results.  

The unobserved heterogeneity $A_i$ and $B_j$ are drawn from independent and identically distributed log-normal distributions with mean $-1/12$ and standard deviation given by $1/\sqrt{6}$. Finally, $U_{ij}$ is distributed as a standard exponential distribution. The bipartite graphon is simulated according to the equation 
\begin{equation}\label{eq:Y_DGP_sim}
    Y_{ij} = 1\left[ \alpha_{0} -\log(n) + Z_{ij} \beta_0 + \log(A_i) + \log(B_i) + n^{-1/2} v_{ij} + U_{ij} \geq 0 \right], 
\end{equation}
where the term $\log(n)$ will ensure that the average degree of this network is bounded. In particular, under the current setting, the probability of establishing a link decreases at a rate $1/n$.

The distribution of the misspecification component $v_{ij}$ is simulated to capture three different designs: (i) latent homophily, (ii) functional form misspecification,  and (iii) semiparametric distribution. In the first design, the misspecification component is modelled as $v_{ij}=1\left[\sigma>0 \right](1+\sigma) * \overline{Z}_{n} * (v_i \cdot  v_j)$ where $v_i = 3/4 \, \log(X_i)  + 1/4 \, \text{N}(1, \sigma)$ and $v_j = 3/4 \, \log(W_j)  + 1/4 \, \text{N}(1, \sigma)$ represent fixed effects, and $\overline{Z}_{n} = (NM)^{-1} \sum_{i,j}Z_{ij}$. Here, $\text{N}(1, \sigma)$ stands for a normal random variable with mean $1$ and standard deviation $\sigma$. In the second design, we simulate $v_{ij} = 1\left[\sigma>0 \right](1+\sigma) * \overline{Z}_{n} * \left( Z_{ij}^2 + Z_{ij}^3\right)$. This design captures the misspecification that stems from truncating the higher-order effects of the observed attributes. Finally, for the last design, the misspecification component is modelled as $v_{ij} = 1\left[\sigma>0 \right](1+\sigma) * N\left(Z_{ij}, \sigma \right)$, which represents an heterogeneous latent mixing component in \eqref{eq:Y_DGP_sim}.

Notice that for all designs, the conditional density of $v_{ij}$ given by $\pi_{ij}$ depends on $(X_i, W_j)$. 
Moreover, the magnitude of misspecification is determined by $\sigma^2$ with $\sigma^2 \in \left\{0, 1, 2, 3, 4 \right\}$. A $\sigma^2=0$ represents the absence of local misspecification. Across each of the designs, the upper bound on the magnitude of local misspecification $\mathbb E \eta_{ij}^2$ is fixed as $\overline{\sigma}^2= \max_{\sigma^2\leq 4} \mathbb E \eta_{ij}^2$ with  $\mathbb E \eta_{ij}^2= var( \mathbb E (v_{ij}|X_i, W_j))$. That is, it corresponds to the largest value of $\mathbb E \eta_{ij}^2$ that is attained when the simulation design sets $\sigma^2=4$. The true DGP design is completed by setting $\alpha_{0} = \log(2.56)$, $\beta_0 = \log(4)$, and network size equal to $n = N + M$ with $N, M =100, 200, 300, 400$. 

The one-step robust estimator is computed using the definition in \eqref{eq:robust_estimator}. As an initial estimator $\hat\theta_{\text{initial}}$, we consider a logistic maximum likelihood estimator. We compare the performance of the robust methodology with that of a plug-in estimator with two different initial $\hat\theta_{\text{initial}}$: (i) a logistic regression and (ii) a Poisson regression. Given the exponential distribution of $U_{ij}$, the logistic estimator $\hat\theta_n$ will present a bias of order $o(n^{-2})$, as discussed in \citet{graham:2022}. This bias is negligible concerning the effect generated by the local misspecification, which induces a nonvanishing bias in the asymptotic distribution of order $n^{-1/2}$. On the contrary, the initial Poisson regressor represents the Composite Maximum Likelihood Estimator when the distribution of the error term is known. We report bias-aware confidence intervals for all the estimators, that is the one-step robust estimator and both plug-in estimators.\footnote{We also study the performance of the robust methodology when the Poisson regression is set as the initial estimator $\hat\theta_{\text{initial}}$. The results are close to being numerically equal to the baseline setting with the logistic estimator as $\hat\theta_{\text{initial}}$. This is a consequence of the fact that $\hat\Psi(\hat \kappa)$ is a regular estimator, which ensures that its strong approximation does not depend on the asymptotic distribution of the initial estimator. We report these results in Appendix \ref{sec:appx:sim_poisson_init}.}

As parameter of interest $\Psi_{0,n}$, we consider: (i) the homophily parameter $\beta_0$, and (ii) the average out-degree of the network $M\mathbb{E}\int \exp( R_{ij}^\top \theta_{0,n} + n^{-1/2} v)\pi_{ij}dv$. Notice that in case (i), the parameter of interest is not affected directly by the local misspecification; while in (ii), the functional form of $\Psi_{0,n}$ is also misspecified. Importantly, in both cases, the local misspecification plays a central role due to its effect on the conditional probability \eqref{eq:cprob}.

\subsection{Homophily parameters}
Table \ref{table:beta_robust_DGP_homophily} summarizes results from computing the one-step robust estimator across 2,000 Monte Carlo replications when the parameter of interest is $\Psi_{0,n} = \beta_0$,  $v_{ij}$ accounts for latent homophily, and the sieves dimension is set to $k_n=2$.\footnote{In Appendix \ref{sec:appx:sim_larger_sieves}, we report the results using larger dimensions for the sieves approximation.} The table includes the true value of $\beta_0$, along with the mean coefficient estimates (coeff.), $\sqrt{n}$-bias, standard error (s.e.), standard error-to-standard deviation ratio (s.e./sd), root-mean-squared error (rMSE), 95\% confidence interval (conf. int.), confidence interval length (length), and 95\% coverage probabilities (95\% CP). The final column shows the average degree of the network. These values were calculated over all simulations.

The top panel in Table \ref{table:beta_robust_DGP_homophily} shows the results of computing the robust estimator in a network of size $n=200$ and across misspecification designs $\sigma^2 \in \left\{0,1, 2, 3, 4 \right\}$. The mean estimate and $\sqrt{n}$-Bias show that the robust estimator approximates well the true DGP value. Moreover, the standard error-to-standard deviation ratio (s.e./sd) indicates that the estimator's sampling variability is well approximated. Finally, the bias-aware confidence intervals have the correct asymptotic size control under sparse asymptotic designs as indicated by Theorem \ref{theo:size}. Notice that the coverage probabilities are greater than 95\%, which is expected as the bias-aware confidence intervals are conservative. 

The bottom panels of Table \ref{table:beta_robust_DGP_homophily} report the results of the robust estimator in larger networks with sizes equal to $n=400$, $600$, and $800$. As the sample size grows, the point estimates approximate more precisely the true parameter value. Note that the $\sqrt{n}$-Bias does not vanish completely. This is expected as the average bias is scaled up by a factor of $\sqrt{n}$, and thus, it reflects the first-order effect that the local misspecification has on the limiting distribution of the robust estimator. Notice also that the length of the bias-aware confidence intervals decreases uniformly as $n$ grows throughout all the misspecification designs and even when the average network degree is as low as 0.3\%. 

Tables \ref{table:beta_logistic_DGP_homophily} and \ref{table:beta_poisson_DGP_homophily} report the results of the plug-in estimators with an initial logistic and Poisson regression, respectively. Table \ref{table:beta_ratios_DGP_homophily} reports the ratios of the plug-in estimators relative to the robust estimator to facilitate the analysis. Relative to the plug-in estimator with an initial logistic estimator, it is clear that the robust estimator exhibits smaller rMSEs. In fact, the rMSE obtained under the logistic estimator is, on average, 1.089 times larger than that of the robust methodology when the network size is $n=200$.  Even as both estimators become more precise in larger networks, the logistic estimator attains larger rMSE relative to the robust estimator as supported by Theorem \ref{theo:optimal}. They are, on average, 1.054 times larger when $n=400$, 1.036 times larger when $n=600$, and 1.029 times larger when $n=800$. Notice that the robust estimator also presents smaller standard errors, significantly less $\sqrt{n}$-Bias, and tighter confidence intervals. 

The robust estimator also outperforms the plug-in estimator with an initial Poisson regression in Table \ref{table:beta_poisson_DGP_homophily}. When comparing the two, the robust estimator exhibits smaller RMSEs. Albeit, the margins are smaller as the plug-in estimator coincides with the composite MLE. We observe that the Poisson regression has rMSEs that are, on average, 1.019 times larger when $n=200$, 1.014 times larger when $n=400$, 1.010 times larger when $n=600$, and 1.008 times larger when $n=800$. Notably, both estimators are comparable in terms of standard errors, $\sqrt{n}$-Bias, confidence interval's length, and coverage probabilities. This evidence suggests that under model misspecification, the one-step robust estimator attains shorter rMSEs relative to the composite MLE.

Appendix Tables \ref{table:beta_robust_DGP_functional}-\ref{table:beta_ratios_DGP_functional}, and \ref{table:beta_robust_DGP_normal}-\ref{table:beta_ratios_DGP_normal} report the simulation results of estimating $\beta_0$ for the designs of functional form misspecification and semiparametric distribution. Qualitatively, we observe similar patterns; that is, the robust one-step estimator dominates the plug-in logistic and Poisson estimators in terms of the rMSE. Moreover, it has comparable performance in terms of  $\sqrt{n}$-Bias and coverage probabilities.
      
\subsection{Average out-degree}
Table \ref{table:psi_robust_DGP_homophily} summarizes results from computing the one-step robust estimator across 2,000 Monte Carlo replications when the parameter of interest is the average out-degree of the network, $\Psi_{0,n} = M\mathbb{E}\int \exp\left[ R_{ij}^\top \theta_{0,n} + n^{-1/2} v \right] \pi_{ij}(v) dv$, $v_{ij}$ accounts for latent homophily, and $k_n=3$.\footnote{In Appendix \ref{sec:appx:sim_larger_sieves}, we report the results using larger dimensions in the sieves approximation.} Notice that, in this case, the local misspecification directly affects the parameter of interest, and thus, the true value of $\Psi_{0,n}$ varies across different specifications of $\sigma^2\in \left\{0, 1, 2, 3, 4 \right\}$. 

The results in Table \ref{table:psi_robust_DGP_homophily} suggest that the robust estimator yields reliable inference for the parameter of interest, and its performance improves in larger networks, notwithstanding the large degree of sparsity. The $\sqrt{n}$-Bias is relatively small and shows a decaying pattern as $n$ grows. The bias-aware confidence intervals have the correct asymptotic coverage, and their length decreases uniformly as $n$ grows and throughout all the misspecification designs.

Tables \ref{table:psi_logistic_DGP_homophily} and \ref{table:psi_poisson_DGP_homophily} report the results of the plug-in estimators with an initial logistic and Poisson regression, respectively.
Table \ref{table:psi_ratios_DGP_homophily} reports the ratios of the plug-in estimators relative to the robust estimator to facilitate the analysis. When comparing the plug-in estimator with an initial logistic regression and the robust estimator, it is clear that the robust estimator improves significantly on the plug-in estimator by attaining smaller rMSEs. In particular, the Logistic plug-in estimator has rMSE that are, on average, 1.084 times larger when $n=200$, 1.035 times when $n=400$, 1.023 times larger when $n=600$, and 1.016 times larger when $n=800$.  The same pattern is observed when comparing the robust estimator with the plug-in estimator with an initial Poisson regression in Table \ref{table:psi_poisson_DGP_homophily}. Although this second plug-in estimator provides better asymptotic guarantees, the results show that using the Poisson estimator yields similar rMSE when $n=200$, 1.003 times larger rMSE when $n=400$, 1.002 times larger for $n=600$, and 1.001 times larger when $n=800$.

Appendix Tables \ref{table:psi_robust_DGP_functional}-\ref{table:psi_ratios_DGP_functional} and \ref{table:psi_robust_DGP_normal}-\ref{table:psi_ratios_DGP_normal} report the simulation results of estimating $\Psi_{0,n}$ for the designs of functional form misspecification and semiparametric distribution. Qualitatively, we observe similar patterns. This evidence suggests that computing the optimal confidence intervals of the robust estimator might lead to a significant improvement when the parameter of interest is affected directly by the local misspecification.

\section{Empirical Application}\label{sec:application}

In this section, we implement the methodology developed in Section \ref{sec:size} to study a network of scientific collaborations among economists. We utilize articles published in leading American journals of general interest to construct a bipartite network connecting authors and articles. The goal is to estimate the factors influencing an author's decision to participate in a paper, thereby establishing a scientific collaboration with other authors who are also involved in that project.

This study contributes to the growing literature on scientific collaborations in economics. Most of the existing literature has focused on describing stylized features of this network (\citealt{newman:2001, goyal/etal:l2006} and \citealt{boschini/sjogren:2007}) or measuring the effect that these collaborations have on research productivity (\citealt{ductor/etal:2014, ductor:2015} and \citealt{colussi:2018}). Fewer studies have analyzed the mechanisms that drive the formation of these scientific collaborations (\citealt{fafchamps/vaderleij/goyal:2010,anderson/richards_shubik:2022} and \citealt{hsieh/konig/liu/zimmerman:2022}).\footnote{Relatively, the literature on the economics of innovation aims to identify individual contributions in teams production (see e.g. \citealt{bonhomme:2021})} We extend these studies by considering a larger number of factors that can influence an author's decision to collaborate on a project while using a computationally tractable method that accounts for two-sided heterogeneity. More importantly, this is the first paper to study the effects that local misspecification has on a bipartite network of scientific collaboration and conduct robust inference on the parameters of interest.

The data include all papers published in top American economic journals from 2000 to 2006: the American Economic Review, Econometrica, Journal of Political Economy, and Quarterly Journal of Economics. It contains author's information, such as gender, university granting PhD degree, graduation year, research fields, institution of employment and position. For articles, the dataset includes citation count since the year of publication, authors' names, publication issue, number of pages, references, three-digit JEL code, and keywords.\footnote{We thank Tommaso Colussi for sharing this data with us.} We used this data to define a single, static bipartite network of collaborations that includes a total of 1776 authors and 1600 articles.\footnote{Our final sample includes only unique observations of matched authors and articles.} 

We examine the formation probability of a collaboration link using \eqref{eq:cprob} and compare the one-step robust estimator's performance with plug-in estimators based on initial logistic and Poisson regressions. The parameters of interest include assortative matching parameters and the network's average out-degree. We control for author- and project-specific characteristics, along with author-project pair-specific observed attributes. In terms of individual author's attributes, we include binary variables for the categories of female gender and junior economist position and control for the average number of citations received by the author.\footnote{The category of junior economists includes assistant professors and economists in government or research institutions.}$^{,}$\footnote{The average citation count serves as a proxy measure for the author's productivity.}  As project-specific attributes, we include the number of authors in the project, indices for the share of female and senior authors in the project, and binary variables indicating whether some authors share the same institution or obtained their PhD from the same university.\footnote{We use information on the authors collaborating in a given project to construct these project-specific controls.}$^{,}$\footnote{The category of senior economists includes associate and full professors and senior economists in government or research institutions.} Table \ref{table:descriptive_stats} provides summary statistics on the network and author-specific and project-specific observed attributes. 

As main mechanisms for assortative matching, we consider: (i) similarity between author $i$' research fields and project $j$'s classification topic (\texttt{jelcode}), (ii) sorting of more productive authors into projects with higher impact (\texttt{citations}), (iii) gender sorting across collaborations (\texttt{gender}), and (iv) collaboration between junior and senior scientists (\texttt{junior\_senior}).\footnote{Ideally, the author's productivity should be measured using the citations count prior to year 2000. However, this is not possible because of data limitations. Alternatively, a different measure of productivity could be considered, such as the average number of citations received by the institution granting the PhD to the author.} Table \ref{table:robust_estimates} collects the estimation results using the robust methodology with an initial logistic estimator and sieves dimension $k_n=3$. It reports the coefficient estimates, standard errors, theoretical worst-case rMSEs,  confidence intervals, confidence intervals' length and the adjustment term of the one-step robust estimator. The results computed impose an upper bound on $\overline{\sigma}^2 = 4$. In Appendix Tables \ref{table:robust_appx_sigma_1},  \ref{table:robust_appx_sigma_2}, and \ref{table:robust_appx_sigma_3}, we report the results for the robust estimator where the upper bound  $\overline{\sigma}^2$ is specified equal to $1$, $2$, and $3$, respectively. Figures \ref{fig:CIs_robust_jelcode}--\ref{fig:CIs_robust_psi} depict the bias-aware confidence intervals for the parameters of interest across different levels $\overline{\sigma}^2$.

The results in Table \ref{table:robust_estimates} indicate that similarity in the research field of expertise is a strong and positive factor influencing the decision to collaborate on a research project. This result aligns with the findings in \citealt{ductor:2015} and \citealt{hsieh/konig/liu/zimmerman:2022}. 
Additionally, highly productive authors tend to participate in projects with higher impact with a positive probability. This outcome is consistent with the role of ``star'' economists discussed in \citet{goyal/etal:l2006}. 
The evidence also suggests that, during the sample period, sorting of female authors is a strong positive predictor for the establishment of a collaboration network. This pattern is also observed by \citet{boschini/sjogren:2007} during the 1991-2002 time period. On the contrary, this analysis suggests that junior scientists are less likely to participate in a project with a large share of senior authors. One likely explanation is that successful collaborations tend to perdure,  and it is costly to establish new connections (cf. \citealt{hsieh/konig/liu/zimmerman:2022}). Notice that the estimates predict a negative constant coefficient, which is expected from a sparse bipartite network. The analysis indicates that the average number of authors per project is 1.74.

When comparing these results to those of the plug-in estimator based on initial logistic and Poisson regressions in Tables \ref{table:logit_estimates} and \ref{table:poisson_estimates}, we observe that the plug-in estimators fail to capture the effects predicted on citations. Moreover, the one-step robust estimator provides substantial improvements in predicting the homophily coefficients and the average out-degree statistics. In particular, the robust estimator attains smaller theoretical rMSEs for almost all the estimates except for the gender attribute. Notice that the estimates associated with gender present the largest theoretical rMSEs across all the methodologies. This might be related to the fact that the shares of female authors observed in the sample and in a given project are small. The methodologies are comparable in terms of standard errors and confidence intervals' length. 

\section{Conclusion}\label{sec:conclusion}
We study the effects of local misspecification on a bipartite network. We focus on a class of dyadic network models characterized by conditional moment restrictions that are locally misspecified. The magnitude of misspecification is indexed by the sample size and vanishes at a rate $n^{-1/2}$.  We utilize this local asymptotic approach to construct a robust estimator that is MSE-minimax optimal within a prespecified neighborhood of misspecification. Additionally, we introduce bias-aware confidence intervals that account for the effect of the local misspecification. These confidence intervals are asymptotically valid for the structural parameters of interest under sparse network asymptotics, both in the correctly-specified and locally-misspecified case. In an empirical application, we study the formation of a scientific collaboration network among economists. Our analysis documents that homophily in the research field of expertise, sorting of highly productive scientists into higher-impact projects, and participation of female authors in teams with a higher share of female authors are all strong statistical factors that explain the formation of a collaboration network.

\clearpage
\newpage
\begin{landscape}
\begin{sidewaystable}
\centering    
\sisetup{round-mode=places}
\caption{$\hat \beta_n$ Robust Estimator}
\label{table:beta_robust_DGP_homophily}
\input{Table_Beta_R01_DGP_Homophily_Hn2.tex}
\begin{flushleft}
\footnotesize{$^{1}$ Number of Monte Carlo simulations is $2,000$. $^{2}$ DGP Local misspecification: Latent Homophily. $^{3}$ Sieves dimension $k_n=2$.}    
\end{flushleft}
\end{sidewaystable}
\end{landscape}

\begin{landscape}
\begin{sidewaystable}
\centering
\sisetup{round-mode=places}
\caption{$\hat \beta_n$ Logistic Estimator}
\label{table:beta_logistic_DGP_homophily}
\input{Table_Beta_Log_DGP_Homophily_Hn2.tex}
\begin{flushleft}
\footnotesize{$^{1}$ Number of Monte Carlo simulations is $2,000$. $^{2}$ DGP Local misspecification: Latent Homophily.} 
\end{flushleft}
\end{sidewaystable}
\end{landscape}

\begin{landscape}
\begin{sidewaystable}
\centering
\sisetup{round-mode=places}
\caption{$\hat \beta_n$ Poisson Estimator}
\label{table:beta_poisson_DGP_homophily}
\input{Table_Beta_Poi_DGP_Homophily_Hn2.tex}
\begin{flushleft}
\footnotesize{$^{1}$ Number of Monte Carlo simulations is $2,000$. $^{2}$ DGP Local misspecification: Latent Homophily.}     
\end{flushleft}
\end{sidewaystable}
\end{landscape}

\begin{table}[ht!]
\centering
\sisetup{round-mode=places}
\begin{threeparttable}
\caption{$\hat \beta_n$ Ratios}
\label{table:beta_ratios_DGP_homophily}
\input{Table_Beta_R01_DGP_Homophily_Hn2_Ratios.tex}
\begin{tablenotes}
\item[$^{1}$] \footnotesize{Number of Monte Carlo simulations is $2,000$.} 
\item[$^{2}$] \footnotesize{DGP Local misspecification: Latent Homophily.} 
\item[$^{3}$] \footnotesize{Sieves dimension $k_n=2$.}     
\end{tablenotes}
\end{threeparttable}
\end{table}

\begin{landscape}    
\begin{sidewaystable}
\centering
\sisetup{round-mode=places}
\caption{$\hat \Psi_n$ Robust Estimator}
\label{table:psi_robust_DGP_homophily}
\input{Table_Psi_R01_DGP_Homophily_Hn3.tex}
\begin{flushleft}
\footnotesize{$^{1}$ Number of Monte Carlo simulations is $2,000$. $^{2}$ DGP Local misspecification: Latent Homophily. $^{3}$ Sieves dimension $k_n=3$.}    
\end{flushleft}
\end{sidewaystable}
\end{landscape}

\begin{landscape}
\begin{sidewaystable}
\centering
\sisetup{round-mode=places}
\caption{$\hat \Psi_n$ Logistic Estimator}
\label{table:psi_logistic_DGP_homophily}
\input{Table_Psi_Log_DGP_Homophily_Hn3.tex}
\begin{flushleft}
\footnotesize{$^{1}$ Number of Monte Carlo simulations is $2,000$. $^{2}$ DGP Local misspecification: Latent Homophily.} 
\end{flushleft}
\end{sidewaystable}
\end{landscape}

\begin{landscape}
\begin{sidewaystable}
\centering
\sisetup{round-mode=places}
\caption{$\hat \Psi_n$ Poisson Estimator}
\label{table:psi_poisson_DGP_homophily}
\input{Table_Psi_Poi_DGP_Homophily_Hn3.tex}
\begin{flushleft}
\footnotesize{$^{1}$ Number of Monte Carlo simulations is $2,000$. $^{2}$ DGP Local misspecification: Latent Homophily.} 
\end{flushleft}
\end{sidewaystable}
\end{landscape}

\begin{table}[ht!]
\centering
\sisetup{round-mode=places}
\begin{threeparttable}
\caption{$\hat \Psi_n$ Ratios}
\label{table:psi_ratios_DGP_homophily}
\input{Table_Psi_R01_DGP_Homophily_Hn3_Ratios.tex}
\begin{tablenotes}
\item[$^{1}$] \footnotesize{Number of Monte Carlo simulations is $2,000$.} 
\item[$^{2}$] \footnotesize{DGP Local misspecification: Latent Homophily.} 
\item[$^{3}$] \footnotesize{Sieves dimension $k_n=3$.}     
\end{tablenotes}
\end{threeparttable}
\end{table}

\clearpage 
\newpage 

\begin{table}[ht!]
\centering
\sisetup{round-mode=places}
\begin{threeparttable}
\caption{Descriptive Statistics}
\label{table:descriptive_stats}
\input{Table_Full_stats_descrip.tex}    
\begin{tablenotes}
\item[$^{1}$] \footnotesize{Total sample includes $N=1776$ authors and $M=1600$ articles.}
\end{tablenotes}
\end{threeparttable}
\end{table}

\clearpage
\newpage

\begin{table}[ht!]
\centering
\sisetup{round-mode=places}
\begin{threeparttable}
\caption{Results Robust Estimator}
\label{table:robust_estimates}
\input{Table_Full_RobustR0L_Mn4_Hn3.tex}    
\begin{tablenotes}
\item[$^{1}$] \footnotesize{Total sample includes $N=1776$ authors and $M=1600$ articles.}
\end{tablenotes}
\end{threeparttable}
\end{table}

\begin{table}[ht!]
\centering
\sisetup{round-mode=places}
\begin{threeparttable}
\caption{Results Logit Estimator}
\label{table:logit_estimates}
\input{Table_Full_Logit_Mn4_Hn3.tex}
\begin{tablenotes}
\item[$^{1}$] \footnotesize{Total sample includes $N=1776$ authors and $M=1600$ articles.}
\end{tablenotes}
\end{threeparttable}
\end{table}

\begin{table}[ht!]
\centering
\sisetup{round-mode=places}
\begin{threeparttable}
\caption{Results Poisson Estimator}
\label{table:poisson_estimates}
\input{Table_Full_Poisson_Mn4_Hn3.tex}
\begin{tablenotes}
\item[$^{1}$] \footnotesize{Total sample includes $N=1776$ authors and $M=1600$ articles.}
\end{tablenotes}
\end{threeparttable}
\end{table}

\begin{figure}[ht!]
\centering
\caption{Confidence Intervals: \texttt{jelcode}}
\label{fig:CIs_robust_jelcode}
\includegraphics[width=0.6\textwidth]{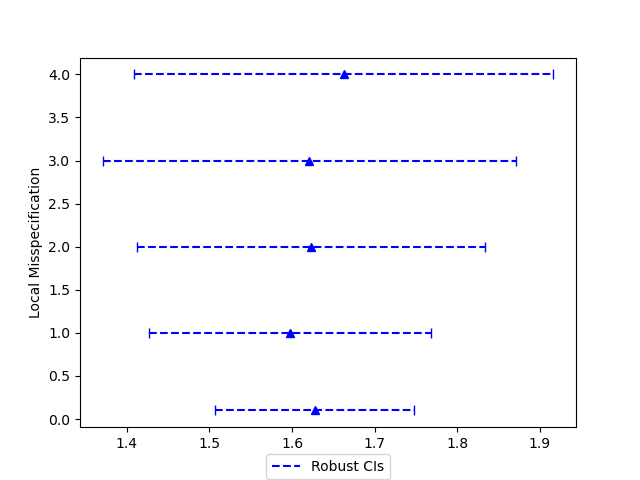}
\end{figure}

\begin{figure}[ht!]
\centering
\caption{Confidence Intervals: \texttt{citations}}
\label{fig:CIs_robust_citations}
\includegraphics[width=0.6\textwidth]{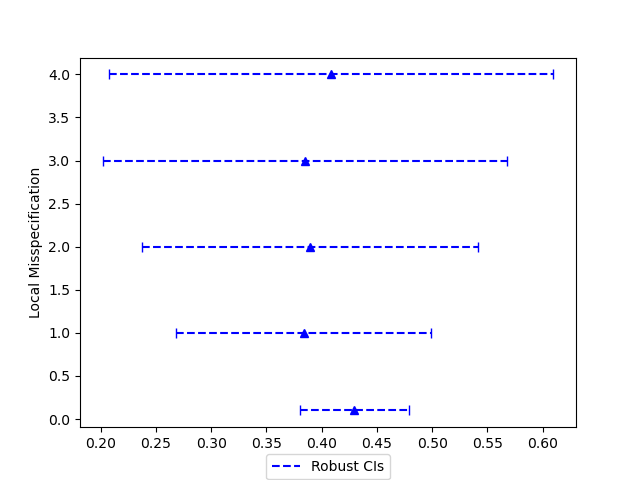}
\end{figure}

\begin{figure}[ht!]
\centering
\caption{Confidence Intervals: \texttt{gender}}
\label{fig:CIs_robust_gender}
\includegraphics[width=0.6\textwidth]{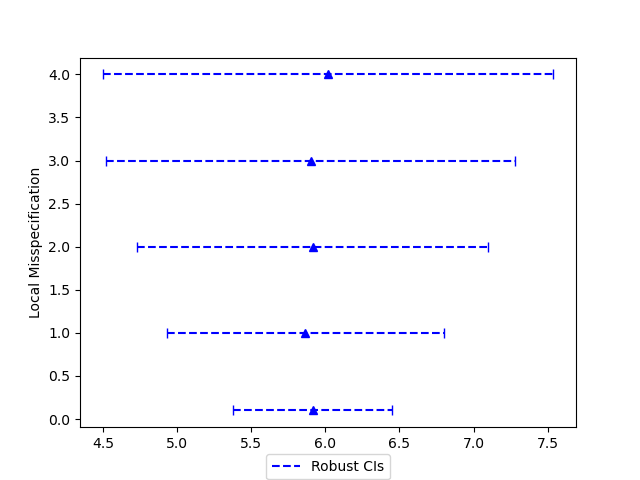}
\end{figure}

\begin{figure}[ht!]
\centering
\caption{Confidence Intervals: \texttt{junior-senior}}
\label{fig:CIs_robust_juniorsenior}
\includegraphics[width=0.6\textwidth]{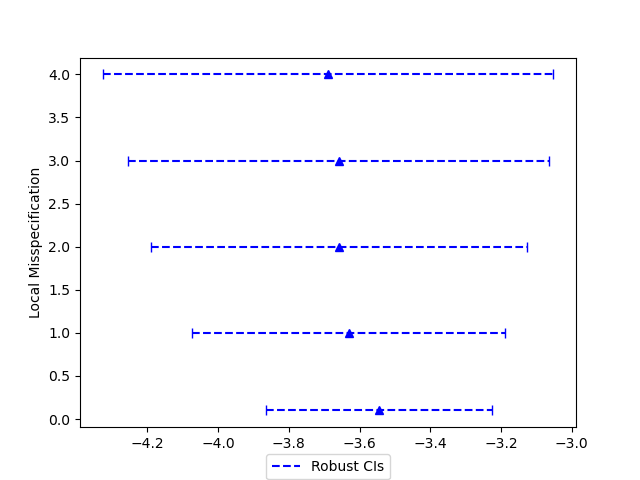}
\end{figure}

\begin{figure}[ht!]
\centering
\caption{Confidence Intervals: \texttt{constant}}
\label{fig:CIs_robust_constant}
\includegraphics[width=0.6\textwidth]{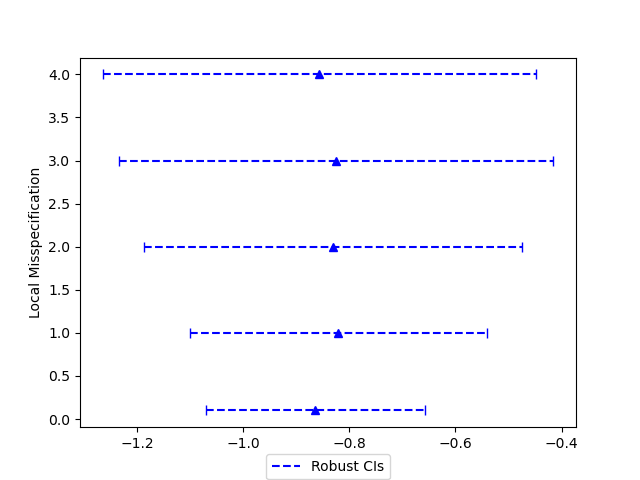}
\end{figure}

\begin{figure}[ht!]
\centering
\caption{Confidence Intervals: $\Psi_{0,n}$}
\label{fig:CIs_robust_psi}
\includegraphics[width=0.6\textwidth]{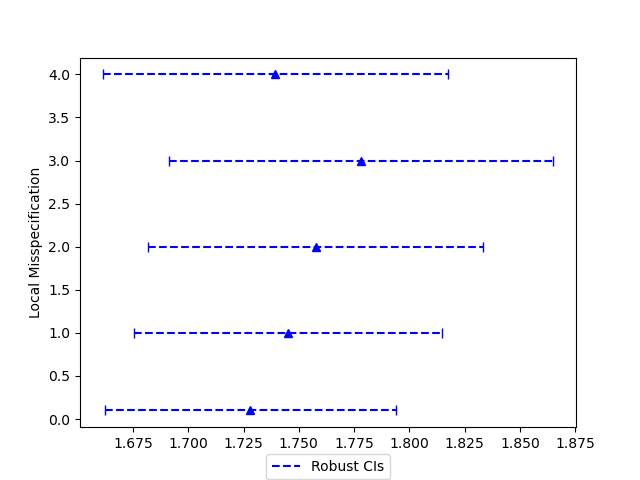}
\end{figure}
\clearpage
\newpage
\appendix
\section*{Mathematical Appendix}
Throughout the appendix, we define $(X_1,\cdots,X_N)$, $(A_1,\cdots,A_N)$, $(W_1,\cdots,W_M)$, and $(B_1,\cdots,B_M)$ as $\mathcal X_N$, $\mathcal A_N$, $\mathcal W_M$, and $\mathcal B_M$, respectively. 

\section{Proof of Theorem \ref{theo:size}}\label{appendix:proof_size}
We divide the proof into two steps. In the first step, we establish the strong approximation
\begin{align}\label{eq:psi_approx_appendix}
\sqrt{n}(\hat \Psi(\hat \kappa^*) - \Psi_{0,n}) - \omega_n(\kappa^*) = o_P(1),
\end{align}
where 
\begin{align*}
\omega_n(\kappa^*) \stackrel{d}{=} \N\left( \mathbb E \eta_{ij} (\kappa^{*,\top} \alpha_0 \exp(Z_{ij}^\top \beta_{0})H_{ij} -\Delta_{v,ij}),  \Sigma_{\gamma} + 2\Sigma_{H,\gamma}^\top \kappa^* + \kappa^{*,\top} \Sigma_H \kappa^* \right).
\end{align*}

In the second step, we prove the desired result. The proof relies on the following lemmas, which are proved in Section \ref{sec:lem}.

\begin{lemma}{\label{lem:bias_variance_estimator}}
Suppose the Assumptions in Theorem \ref{theo:size} hold. Define $\hat G (\theta) = \frac{1}{NM}\sum_{i=1}^N \sum_{j=1}^M n \exp(R_{ij}^\top \theta) H_{ij}R_{ij}^\top$ so that $\hat G = \hat G (\theta_{\text{initial}})$. 
Then, we have that
\begin{align*}
&	\sup_{\theta \in \mathbb B(\theta_{0,n},C/\sqrt{n})}\left\| \hat G(\theta) - G_0\right\|_{op} = O_P\left(\sqrt{\frac{k_n^2 \zeta_n^2 \log^2(n)}{n}}\right) \quad \text{for any constant $C>0$,}\\
&  \left\| \hat B_H - B_H\right\|_{op} = O_P\left(\sqrt{\frac{k_n^2 \zeta_n^2 \log^2(n)}{n}}\right),\quad \left\| \hat B_{H,\gamma} - B_{H,\gamma}\right\|_{2} = O_P\left(\sqrt{\frac{k_n^2 \zeta_n^2 \log^2(n)}{n}}\right), \\
& \sup_{\theta \in \mathbb B(\theta_{0,n},C/\sqrt{n})}\left\|\frac{1}{NM}\sum_{i=1}^N \sum_{j=1}^M \partial_\theta \gamma_n(D_{ij},0, \theta) - \Gamma \right\|_{op} = O_P\left(\frac{1}{\sqrt{n}}\right), \quad ||\hat B_{\gamma} - B_{\gamma}||_2 =O_P\left(\frac{1}{\sqrt{n}}\right), \\
& \left\| \hat \Sigma_H - \Sigma_H \right\|_{op} = O_P\left(  \sqrt{\frac{k_n^2 \zeta_n^2}{n} } \right), \quad \left\| \hat \Sigma_{H,\gamma} - \Sigma_{H,\gamma} \right\|_{2} = O_P\left(  \sqrt{\frac{k_n^2 \zeta_n^2}{n} } \right), \\
& | \hat \Sigma_{\gamma} - \Sigma_{\gamma} | =O_P(\zeta_n n^{-1/2}), \quad \|\hat \kappa^* - \kappa^\ast \|_2 = o_P( k_n^{-1/2}), \quad \text{and} \\
& ||\kappa^*||_2 \leq C<\infty \quad \text{for some constant $C>0$},
\end{align*}
where $\mathbb{B}(\theta_{0,n},C/\sqrt{n})$ is a ball centered at $\theta_{0,n}$ with radius $C/\sqrt{n}$. 
\end{lemma}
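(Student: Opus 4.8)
The plan is to treat three groups of quantities separately. First, the bias components $\hat G,\hat B_H,\hat B_{H,\gamma},\hat\Gamma,\hat B_\gamma$ and the variance components $\hat\Sigma_H,\hat\Sigma_{H,\gamma},\hat\Sigma_\gamma$, for which I establish operator- and $\ell_2$-norm concentration at the stated rates. Second, the deterministic bound $\|\kappa^*\|_2=O(1)$, which I read off the closed form \eqref{eq:kappa_star} together with the conditioning bounds in Assumption \ref{ass:Y_gamma}. Third, the consistency $\|\hat\kappa^*-\kappa^*\|_2=o_P(k_n^{-1/2})$, obtained by propagating the component rates through the smooth map \eqref{eq:kappa_star}. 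The organizing device throughout is the sparsity-cancellation identity $n^p\exp(pR_{ij}^\top\theta_{0,n})=\alpha_0^p\exp(pZ_{ij}^\top\beta_0)$, which holds because $\alpha_{0,n}=\log(\alpha_0/n)$ and turns every $n$-scaled sample average into a genuine sample analogue of its population target at $\theta_{0,n}$.

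For the first group I evaluate each estimator at $\theta_{0,n}$. After the cancellation, $\hat G(\theta_{0,n})$, $\hat B_H$, and $\hat B_{H,\gamma}$ are averages of summands whose envelope is controlled by $\zeta_n$ through the $H_{ij}$ factors and by the bounded-support conditions of Assumptions \ref{ass:sampling} and \ref{ass:Y_gamma}. By X-W exchangeability and the mutual independence in Condition 1 of Assumption \ref{ass:sampling}, each average admits a Hájek (Hoeffding) projection onto the two node-level components $(X_i,A_i)$ and $(W_j,B_j)$, whose leading variance is of order $1/n$; a matrix Bernstein / Yurinskii-type inequality with dimension $k_n$ and envelope $\zeta_n$ then yields the rate $\sqrt{k_n^2\zeta_n^2\log^2(n)/n}$, with the $\log^2(n)$ factor entering through the sub-exponential tail of the $n$-scaled sparse indicators $nY_{ij}$. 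The variance estimators $\hat\Sigma_H,\hat\Sigma_{H,\gamma},\hat\Sigma_\gamma$ are double sums over index-sharing pairs; after the same cancellation their summands are products of residuals $\hat s_{Y,ij}$ with bounded conditional variance, so a Frobenius-norm moment bound on this U-statistic structure gives the cleaner rate $\sqrt{k_n^2\zeta_n^2/n}$. The scalars $\hat\Gamma$ and $\hat B_\gamma$ involve only derivatives of $\gamma_n$ with $\mathbb E\|\partial_\theta\gamma_n\|_2^2=O(1)$ and $\mathbb E\Delta_{v,ij}^2=O(1)$ under Assumption \ref{A2:smoothness}, and so concentrate at the parametric rate $n^{-1/2}$.

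It remains to pass from $\theta_{0,n}$ to the random $\hat\theta_{\text{initial}}$ and to handle the uniformity over $\mathbb B(\theta_{0,n},C/\sqrt n)$ in the $\hat G(\theta)$ and $\partial_\theta\gamma_n$ suprema. For these I Taylor-expand in $\theta$ and bound the derivative uniformly on the ball using Condition 6 of Assumption \ref{ass:sampling} and the bounded-second-derivative conditions of Assumption \ref{A2:smoothness}; since $\hat\theta_{\text{initial}}-\theta_{0,n}=O_P(n^{-1/2})$ by Assumption \ref{ass:regular_intitial}, the incurred error is of smaller order and the supremum over the deterministic shrinking ball absorbs the randomness of $\hat\theta_{\text{initial}}$. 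For the bound on $\kappa^*$, the closed form \eqref{eq:kappa_star} writes $\kappa^*$ as $\Phi_1^{-1}$ applied to the vector $\Phi_2-G_0[G_0^\top\Phi_1^{-1}G_0]^{-1}[G_0^\top\Phi_1^{-1}\Phi_2+\Gamma]$, whose core inverses act on objects of fixed dimension $d_z+1$; since $\lambda_{\min}(\Sigma_H)\ge\lambda_1$ forces $\Phi_1$ to be well-conditioned, $\lambda_{\min}(G_0^\top G_0)\ge\lambda_1$ controls the inner inverse, and $\|\Sigma_{H,\gamma}\|_2,\|B_{H,\gamma}\|_2,\|\Gamma\|_2,\|G_0\|_{op}=O(1)$, the bound $\|\kappa^*\|_2=O(1)$ follows.

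Finally, I substitute the consistent plug-ins into \eqref{eq:kappa_star} and propagate the errors via standard matrix-inverse and product perturbation bounds. Because the core vector $\Phi_2-G_0(\cdots)$ has bounded $\ell_2$ norm and every inverse in play is uniformly well-conditioned, the $\ell_2$ error does not accumulate a $\sqrt{k_n}$ factor and equals $O_P(\sqrt{k_n^2\zeta_n^2\log^2(n)/n})$. The rate condition $k_n^3\zeta_n^2\log^2(n)=o(n)$ of Assumption \ref{ass:kn} is exactly $k_n^2\zeta_n^2\log^2(n)/n=o(k_n^{-1})$, so this error is $o_P(k_n^{-1/2})$, which gives the claim. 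I expect the main obstacle to be the operator-norm concentration of the double-sum variance estimators $\hat\Sigma_H$ with diverging dimension $k_n$: controlling the U-statistic cross terms in Frobenius norm while tracking the envelope $\zeta_n$ and the $n$-scaling of $\hat s_{Y,ij}$, and then verifying that the perturbation step for $\hat\kappa^*$ does not degrade the $k_n^{-1/2}$ target, are the delicate parts.
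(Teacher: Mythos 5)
Your overall architecture matches the paper's: the sparsity cancellation $n\exp(R_{ij}^\top\theta_{0,n})=\alpha_0\exp(Z_{ij}^\top\beta_0)$ turning the scaled averages into genuine sample means, Frobenius-norm moment bounds on the Hoeffding/U-statistic decomposition of the double-sum variance estimators, the closed-form perturbation of $\hat\kappa^*$ through well-conditioned inverses, and the observation that $k_n^3\zeta_n^2\log^2(n)=o(n)$ is exactly what converts the component rate into $o_P(k_n^{-1/2})$ are all the paper's steps. Two places, however, are not just loose but would fail as stated.

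First, your mechanism for the $\log^2(n)$ factor in the rate for $\hat G(\theta)$, $\hat B_H$, $\hat B_{H,\gamma}$ is wrong: these quantities contain no $Y_{ij}$ at all (they are built from $\exp(R_{ij}^\top\theta)H_{ij}R_{ij}^\top$ and its relatives), so "the sub-exponential tail of the $n$-scaled sparse indicators $nY_{ij}$" cannot be the source of the logarithm, and a H\'ajek projection followed by a generic matrix Bernstein bound would give at best $\sqrt{\log k_n}$-type factors, not $\log n$. In the paper the $\log^2(n)$ comes from a maximal inequality (\citet[Corollary 5.1]{CCK14}) applied to the VC-type class obtained by simultaneously taking suprema over unit vectors $v_1\in\Re^{k_n}$, $v_2\in\Re^{d_R}$, the shrinking parameter ball, \emph{and} the row index $i\in[N]$ of the conditional empirical processes $(\mathbb P_M-\mathbb P_{i,\cdot})$; the covering number $N(a/\epsilon)^{ck_n}\le (a/\epsilon)^{c(k_n\vee\log n)}$ is where $\log n$ enters. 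Without some equivalent chaining or union argument over these index sets you do not reach the stated rate. Second, your claim that the variance-estimator summands are "products of residuals $\hat s_{Y,ij}$ with bounded conditional variance" is false: $\mathrm{Var}(s_{Y,ij}\mid X_i,A_i,W_j,B_j)=n^2p_{ij}(1-p_{ij})=O(n)$ under the sparse scaling $np_{ij}=O(1)$, and the paper's Lemma \ref{lem:Sigma^a_YY} has to track this extra factor of $n$ against the $1/(NM)$ normalizations term by term (it is precisely why the squared-residual term enters $\hat\Sigma_{2n}$ with an additional $1/n$). The Frobenius-moment plan survives, but only once the conditional second moments are computed with the correct $O(n)$ scaling rather than assumed bounded.
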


\begin{lemma}\label{lem:strong_approx}
Suppose the Assumptions in Theorem \ref{theo:size} hold. Then there exists a $(k_n+1) \times 1$ Gaussian vector $g_n$ such that $g_n \stackrel{d}{=} \N(0,\Sigma_n)$ and  
\begin{align*}
& \left\| 
    \sqrt{n}
    \begin{pmatrix}
    & \frac{1}{NM}\sum_{i=1}^N \sum_{j=1}^M n \left(Y_{ij} - \int \Lambda(R_{ij}^\top \theta_{0,n}+n^{-1/2}v)\pi_{ij}(v)dv\right)H_{ij} \\
    & \frac{1}{NM}\sum_{i=1}^N \sum_{j=1}^M   \left[\gamma_n(D_{ij},0,\theta_{0,n}) -  \mathbb{E}\gamma_n(D_{ij},0, \theta_{0,n})\right] 
\end{pmatrix}  - g_n \right\|_2 = O_P(\sqrt{k_n^5/n}),
\end{align*}
where $\Sigma_n$ is defined in \eqref{eq:Sigma_n}.
\end{lemma}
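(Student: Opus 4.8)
The plan is to approximate the scaled dyadic score $V_n \equiv \tfrac{\sqrt n}{NM}\sum_{i=1}^N\sum_{j=1}^M s_{ij}$ — whose two blocks are exactly the two rows in the statement — by building a Gaussian vector out of its Hoeffding (ANOVA) projections and coupling each projection separately. Since $H_{ij}$ is a function of $(X_i,W_j)$ and $\mathbb P(Y_{ij}=1\mid X_i,W_j)=\int\Lambda(\cdots)\pi_{ij}(v)dv$, we have $\mathbb E s_{ij}=0$. Using the conditional-expectation operators from Section \ref{sec:VB_est}, I would write the exact decomposition
\begin{align*}
V_n = \underbrace{\tfrac{\sqrt n}{N}\sum_i \overline s_{1i}^a}_{L_n^a} + \underbrace{\tfrac{\sqrt n}{M}\sum_j \overline s_{1j}^p}_{L_n^p} + \underbrace{\tfrac{\sqrt n}{NM}\sum_{i,j}(s_{ij}-\overline s_{ij})}_{D_n} + \underbrace{\tfrac{\sqrt n}{NM}\sum_{i,j}(\overline s_{ij}-\overline s_{1i}^a-\overline s_{1j}^p)}_{R_n}.
\end{align*}
Here $L_n^a$ and $L_n^p$ are sums of i.i.d.\ mean-zero vectors over the scientist and project indices; $D_n$ has nonzero entries only in its $H$-block (the $\gamma$ coordinate of $s_{ij}-\overline s_{ij}$ vanishes because $s_{\gamma,ij}$ is $(X_i,W_j)$-measurable), and conditional on the node attributes $\mathcal F \equiv(\mathcal X_N,\mathcal A_N,\mathcal W_M,\mathcal B_M)$ its summands $n(Y_{ij}-p_{ij})H_{ij}$ are independent by the conditional independence of the $Y_{ij}$ under Assumption \ref{ass:sampling}; and $R_n$ is a doubly-centered, fully degenerate remainder.

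First I would dispose of $R_n$. Double-centering kills the cross-dyad covariances (conditioning on the shared index gives a zero conditional mean), so $\mathbb E\|R_n\|_2^2 = \tfrac{n}{(NM)^2}\sum_{i,j}\mathbb E\|\overline s_{ij}-\overline s_{1i}^a-\overline s_{1j}^p\|_2^2 = O(k_n/n)$, using $\mathbb E\|\overline s_{ij}\|_2^2=O(k_n)$, which follows from $\mathbb E\|H_{ij}\|_2^2=O(k_n)$, the sparse scaling $np_{ij}=O(1)$ of Assumption \ref{ass:Y_gamma}, and the eigenvalue bounds there. Hence $\|R_n\|_2=O_P(\sqrt{k_n/n})$, dominated by the target rate.

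The core step is the Gaussian coupling of $L_n^a$, $L_n^p$, and $D_n$ via Yurinskii's coupling \citep[Theorem 10]{P02} with the growing dimension $d=k_n+1$. For $D_n$ I would apply the coupling conditional on $\mathcal F$, producing $g_n^2$ with conditional covariance $\tfrac{n^3}{(NM)^2}\sum_{i,j}p_{ij}(1-p_{ij})H_{ij}H_{ij}^\top$; for $L_n^a,L_n^p$ the coupling is unconditional, producing $g_n^a\sim\N(0,\tfrac{1}{1-\phi_n}\Sigma_n^a)$ and $g_n^p\sim\N(0,\tfrac1{\phi_n}\Sigma_n^p)$. Because $L_n^a$ and $L_n^p$ are measurable functions of the disjoint attribute blocks while $D_n$ is a conditionally mean-zero fluctuation, the three sources of randomness are mutually (conditionally) independent, so the couplings can be carried out with independent auxiliary randomization and $g_n \equiv g_n^a+g_n^p+g_n^2$ is Gaussian with covariance matching $\tfrac{\Sigma_n^a}{1-\phi_n}+\tfrac{\Sigma_n^p}{\phi_n}+\tfrac{\Sigma_{2n}}{\phi_n(1-\phi_n)}$; replacing $\phi_n$ by $\phi$ and the realized conditional covariance of $D_n$ by its mean (a law of large numbers over dyads) contributes only lower-order error, giving $g_n\stackrel{d}{=}\N(0,\Sigma_n)$ up to negligible terms. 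Each Yurinskii error is governed by the summed third moments: using $\mathbb E[|Y_{ij}-p_{ij}|^3\mid\mathcal F]\le p_{ij}=O(1/n)$, the bound $\|H_{ij}\|_2^3\le\sqrt{k_n}\,\zeta_n\|H_{ij}\|_2^2$, and $NM\asymp n^2$, one controls $\sum_{i,j}\mathbb E\|\xi_{ij}\|_2^3$ and verifies, under Assumption \ref{ass:kn}, that the coupling error is $O_P(\sqrt{k_n^5/n})$; the same bookkeeping handles $L_n^a,L_n^p$. Combining the three couplings, $R_n$, and the covariance replacement by the triangle inequality yields the claim.

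The main obstacle is executing this increasing-dimension coupling under the sparse-network scaling: each $s_{Y,ij}$ is of order $n$ with conditional variance of order $n$ rather than $O(1)$, so the third-moment sum has to be controlled by trading the cube of the Bernoulli fluctuation against the link probability $p_{ij}=O(1/n)$ and the sieve sup-norm $\zeta_n$, and this trade must be tight enough to land the error at $\sqrt{k_n^5/n}$ exactly in the regime $k_n=o(n^{1/5})$. A secondary difficulty is realizing the three Gaussians jointly independent with the correct aggregate covariance, which requires the conditional covariance of $D_n$ to concentrate uniformly across the $k_n$-dimensional $H$-block — controlled through $\zeta_n$ and the operator-norm bounds of Assumption \ref{ass:Y_gamma}.
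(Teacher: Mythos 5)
Your proposal follows essentially the same route as the paper: the Hoeffding decomposition of the scaled score into the two linear projections, the conditionally independent dyadic fluctuation (nonzero only in the $H$-block), and a fully degenerate remainder, with Yurinskii's coupling applied to each of the first three (conditionally on the attributes for the fluctuation term) and independence of the resulting Gaussians used to assemble $\Sigma_n$. The only cosmetic difference is that you bound the degenerate remainder by a direct second-moment (orthogonality) computation giving $O_P(\sqrt{k_n/n})$, whereas the paper bounds $\|U_{4,n}\|_2\leq\sqrt{k_n}\|U_{4,n}\|_\infty$ and invokes a maximal inequality to get $O_P(\sqrt{\log(n)k_n^2/n})$; both are negligible at the target rate.
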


\subsection{Step 1: Proof of \eqref{eq:psi_approx_appendix}}
We have
\begin{align}\label{eq:psihat}
	\sqrt{n}(\hat \Psi(\hat \kappa^*) - \Psi_{0,n}) & = \frac{\sqrt{n}}{NM}\sum_{i=1}^N \sum_{j=1}^M (\gamma(D_{ij},0,\hat \theta_{\text{initial}}) - \mathbb{E}\gamma(D_{ij},0,\theta_{0,n}) ) - \mathbb E \eta_{ij}\Delta_{v,ij}  \notag \\
	& + \hat \kappa^{*,\top}  \left[\frac{n^{3/2}}{NM}\sum_{i=1}^N \sum_{j=1}^M (Y_{ij} - \Lambda(R_{ij}^\top \hat \theta_{\text{initial}}))H_{ij}\right] + o(1)  \notag  \\ 
	& = \frac{\sqrt{n}}{NM}\sum_{i=1}^N \sum_{j=1}^M (\gamma(D_{ij},0, \theta_{0,n}) - \mathbb{E}\gamma(D_{ij},0,\theta_{0,n}) ) - \mathbb E \eta_{ij}\Delta_{v,ij} \notag  \\
	& + \left[\frac{1}{NM}\sum_{i=1}^N \sum_{j=1}^M \partial_\theta \gamma_n(D_{ij},0,\tilde \theta)^\top-\hat \kappa^{*,\top} \hat G (\tilde \theta) \right] \sqrt{n}(\hat \theta_{\text{initial}} - \theta_{0,n}) \notag \\
 & + \hat \kappa^{*,\top} \left[\frac{n^{3/2}}{NM}\sum_{i=1}^N \sum_{j=1}^M (Y_{ij} - \Lambda(R_{ij}^\top \theta_{0,n}))H_{ij}\right]+ o_P(1) \notag \\
 	& = \frac{\sqrt{n}}{NM}\sum_{i=1}^N \sum_{j=1}^M (\gamma(D_{ij},0, \theta_{0,n}) - \mathbb{E}\gamma(D_{ij},0,\theta_{0,n}) ) - \mathbb E \eta_{ij}\Delta_{v,ij} \notag \\
  &+ \hat \kappa^{*,\top} \left[\frac{n^{3/2}}{NM}\sum_{i=1}^N \sum_{j=1}^M (Y_{ij} - \int \Lambda(R_{ij}^\top \theta_{0,n} + n^{-1/2}v)d\pi_{ij} H_{ij}\right] \notag \\
  & + \hat \kappa^{*,\top}\left[\frac{n^{3/2}}{NM}\sum_{i=1}^N \sum_{j=1}^M \int (\Lambda(R_{ij}^\top \theta_{0,n}+ n^{-1/2}v) - \Lambda(R_{ij}^\top \theta_{0,n}))d\pi_{ij} H_{ij}\right] +  o_P(1) \notag \\
 	& = \frac{\sqrt{n}}{NM}\sum_{i=1}^N \sum_{j=1}^M (\gamma(D_{ij},0, \theta_{0,n}) - \mathbb{E}\gamma(D_{ij},0,\theta_{0,n}) ) + \mathbb E \eta_{ij} (\kappa^{*,\top} \alpha_0 \exp(Z_{ij}^\top \beta_{0})H_{ij} -\Delta_{v,ij}) \notag \\
  &+ \kappa^{*,\top} \left[\frac{n^{3/2}}{NM}\sum_{i=1}^N \sum_{j=1}^M (Y_{ij} - \int \Lambda(R_{ij}^\top \theta_{0,n} + vn^{-1/2})d\pi_{ij})H_{ij}\right]+ o_P(1),
\end{align}
where 
the first equality is by \eqref{eq:quadratic_expansion}, 
the second equality is by the mean-value theorem where $\tilde \theta$ is between $\hat \theta_{\text{initial}}$ and $\theta_{0,n}$ and the facts that 
\begin{align*}
&    \hat \kappa^{*,\top} \left[\frac{n^{3/2}}{NM}\sum_{i=1}^N \sum_{j=1}^M (\Lambda(R_{ij}^\top \hat \theta_{\text{initial}}))H_{ij} - \Lambda(R_{ij}^\top \hat \theta_{0,n}))H_{ij}\right] \\
&    = \hat \kappa^{*,\top}  \left[\frac{n^{3/2}}{NM}\sum_{i=1}^N \sum_{j=1}^M (\Lambda(R_{ij}^\top \tilde \theta))(1-\Lambda(R_{ij}^\top \tilde \theta))H_{ij}R_{ij}^\top\right] (\hat \theta_{\text{initial}} - \theta_{0,n}) \\
& = \kappa^{*,\top} \left[\hat G(\tilde \theta) - \frac{n}{NM}\sum_{i=1}^N \sum_{j=1}^M \frac{\exp(2R_{ij}^\top \hat \theta_{\text{initial}}) (2+\exp(R_{ij}^\top \hat \theta_{\text{initial}}))}{(1+\exp(R_{ij}^\top \hat \theta_{\text{initial}}))^2} \right] H_{ij}R_{ij}\sqrt{n}(\hat \theta_{\text{initial}} - \theta_{0,n})
\end{align*}
and 
\begin{align*}
& \left|    \kappa^{*,\top} \left[\frac{n}{NM}\sum_{i=1}^N \sum_{j=1}^M \frac{\exp(2R_{ij}^\top \hat \theta_{\text{initial}}) (2+\exp(R_{ij}^\top \hat \theta_{\text{initial}}))}{(1+\exp(R_{ij}^\top \hat \theta_{\text{initial}}))^2} \right] \sqrt{n}(\hat \theta_{\text{initial}} - \theta_{0,n}) \right| \\
& \leq 2||\kappa^{*,\top}||_2 \left\| \frac{n}{NM}\sum_{i=1}^N \sum_{j=1}^M \exp(2R_{ij}^\top \hat \theta_{\text{initial}})H_{ij}R_{ij}\right\|_{op} \left\Vert_2\sqrt{n}(\hat \theta_{\text{initial}} - \theta_{0,n}) \right\Vert_2 \\
& \leq O_P(n^{-1}) \times  \left\| \frac{1}{NM}\sum_{i=1}^N \sum_{j=1}^M \exp\left(2(\hat \alpha_n - \log n) + 2Z_{ij}^\top \hat \beta_{n}\right) H_{ij}R_{ij}\right\|_{op} = O_P(n^{-1}), 
\end{align*}
and the third equality is by Lemma \ref{lem:bias_variance_estimator},  
\begin{align*}
   & \left\| \hat G^\top(\tilde \theta) \hat \kappa^* -  \frac{1}{NM}\sum_{i=1}^N \sum_{j=1}^M \partial_\theta \gamma_n(D_{ij},0,\tilde \theta) \right\|_{op} \\
   & \leq    \left\| (\hat G(\tilde \theta) - G_0) \right\|_{op} ||\hat \kappa^*||_2 + ||G_0||_{op} ||\hat \kappa^* - \kappa^*||_2 + \left\|\frac{1}{NM}\sum_{i=1}^N \sum_{j=1}^M \partial_\theta \gamma_n(D_{ij},0,\tilde \theta) - \Gamma \right\|_{op}\\
  & =  O_P\left(\sqrt{\frac{k_n^2 \zeta_n^2 \log^2(n)}{n}}\right) +o_P(1)= o_P(1) 
\end{align*}
and $\sqrt{n}(\hat \theta_{\text{initial}} - \theta_{0,n}) = O_P(1)$. To see the last equality of \eqref{eq:psihat}, we note that   
\begin{align*}
    & \left| (\kappa^* - \hat \kappa^*)^\top  \left[\frac{n^{3/2}}{NM}\sum_{i=1}^N \sum_{j=1}^M (Y_{ij} - \int \Lambda(R_{ij}^\top \theta_{0,n} + vn^{-1/2})d\pi_{ij} )H_{ij}\right] \right| \\
    & \leq ||\kappa^* - \hat \kappa^*||_2 \left\| \frac{n^{3/2}}{NM}\sum_{i=1}^N \sum_{j=1}^M (Y_{ij} - \int \Lambda(R_{ij}^\top \theta_{0,n}+vn^{-1/2})d\pi_{ij})H_{ij} \right\|_2 \\
    & \leq ||\kappa^* - \hat \kappa^*||_2 (\left\|g_n \right\|_2 + o_P(1)) = o_P(1),
\end{align*}
where the second inequality is by Lemma \ref{lem:strong_approx} and the facts that $||\kappa^* - \hat \kappa^*||_2 = o_P(k_n^{-1/2})$ as shown in Lemma \ref{lem:bias_variance_estimator} and $||g_n||_2 = O_P(k_n^{1/2})$.\footnote{When $k_n$ is fixed, we interpret $o_P(k_n^{-1/2})$ and $O_P(k_n^{1/2})$ as $o_P(1)$ and $O_P(1)$, respectively.}

In addition, we have 
\begin{align*}
&   \biggl\| \hat \kappa^{*,\top}\left[\frac{n^{3/2}}{NM}\sum_{i=1}^N \sum_{j=1}^M \int (\Lambda(R_{ij}^\top \theta_{0,n}+ n^{-1/2}v) - \Lambda(R_{ij}^\top \theta_{0,n}))d\pi_{ij} H_{ij}\right] - \kappa^{*,\top}\mathbb E \alpha_0 \exp(Z_{ij}^\top \beta_{0})H_{ij}\eta_{ij}\biggr\|_2 \\
   & \leq \biggl\| (\hat \kappa^{*,\top}- \kappa^{*,\top})\mathbb E \alpha_0 \exp(Z_{ij}^\top \beta_{0})H_{ij}\eta_{ij}\biggr\|_2 \\
   & + \left\| \hat \kappa^{*,\top} \left[ \frac{n}{NM}\sum_{i=1}^N \sum_{j=1}^M \Lambda(R_{ij}^\top \theta_{0,n})(1-\Lambda(R_{ij}^\top \theta_{0,n})) H_{ij}\eta_{ij} -\mathbb E \alpha_0 \exp(Z_{ij}^\top \beta_{0})H_{ij}\eta_{ij} \right]\right\|_2 \\
   & + \left\|\hat \kappa^{*,\top}\frac{n^{1/2}}{NM}\sum_{i=1}^N \sum_{j=1}^M \int \left( \int_0^v \Lambda(R_{ij}^\top \theta_{0,n}+ \tilde v)(1-\Lambda(R_{ij}^\top \theta_{0,n}+ \tilde v))(1-2\Lambda(R_{ij}^\top \theta_{0,n} + \tilde v ))\tilde v d \tilde v\right) d\pi_{ij}\right\|_2 \\
   & = o_P(1),
\end{align*}
where the inequality is by Taylor's expansion with the integral remainder term and the equality is by the fact that 
\begin{align*}
& \left|    n \left( \int \Lambda(R_{ij}^\top \theta_{0,n}+ \tilde v)(1-\Lambda(R_{ij}^\top \theta_{0,n}+ \tilde v))(1-2\Lambda(R_{ij}^\top \theta_{0,n} + \tilde v ))\tilde v d \tilde v\right) d\pi_{ij} \right| \\
& \leq C \int \int_0^{|v|} \exp(\tilde v) \tilde v d\tilde v \pi_{ij}(v) dv \leq C
\end{align*}
by Assumption \ref{ass:sampling}. Combining the above two bounds, we can establish the last equality in \eqref{eq:psihat}. 

Then, by Lemma \ref{lem:strong_approx}, we have
\begin{align*}
& \sqrt{n}(\hat \Psi(\hat \kappa^*) - \Psi_{0,n}) \\
& = (1,\kappa^*)^\top g_n + \mathbb E \eta_{ij} (\kappa^{*,\top} \alpha_0 \exp(Z_{ij}^\top \beta_{0})H_{ij} -\Delta_{v,ij})\\
& + (1,\kappa^*)^\top \left( \sqrt{n}
    \begin{pmatrix}
    & \frac{1}{NM}\sum_{i=1}^N \sum_{j=1}^M n \left(Y_{ij} - \int \Lambda(R_{ij}^\top \theta_{0,n}+n^{-1/2}v)\pi_{ij}(v)dv\right)H_{ij} \\
    & \frac{1}{NM}\sum_{i=1}^N \sum_{j=1}^M   \left[\gamma_n(D_{ij},0,\theta_{0,n}) -  \mathbb{E}\gamma_n(D_{ij},0, \theta_{0,n})\right] 
\end{pmatrix}  - g_n\right) + o_P(1) \\
& = (1,\kappa^*)^\top g_n + \mathbb E \eta_{ij} (\kappa^{*,\top} \alpha_0 \exp(Z_{ij}^\top \beta_{0})H_{ij} -\Delta_{v,ij})+ (1+||\kappa^*||_2) O_P\left(\sqrt{k_n^5/n}\right) + 
o_P(1) \\
& = (1,\kappa^*)^\top g_n + \mathbb E \eta_{ij} (\kappa^{*,\top} \alpha_0 \exp(Z_{ij}^\top \beta_{0})H_{ij} -\Delta_{v,ij})+ o_P(1),
\end{align*}
where the last equality holds because $||\kappa^*||_2 = O(1)$ and $k_n^5/n = o(1)$. 

In addition, we note that 
\begin{align*}
|\mathbb E \eta_{ij} (\kappa^{*,\top} \alpha_0 \exp(Z_{ij}^\top \beta_{0})H_{ij} -\Delta_{v,ij})| & \leq (\mathbb E \eta_{ij}^2)^{1/2} (\mathbb E((\kappa^{*,\top} \alpha_0 \exp(Z_{ij}^\top \beta_{0})H_{ij} -\Delta_{v,ij}))))^{1/2} \\
& \leq \overline{\sigma}(\mathbb E((\kappa^{*,\top} \alpha_0 \exp(Z_{ij}^\top \beta_{0})H_{ij} -\Delta_{v,ij})))^{1/2} = \mathcal B(\kappa^*).
\end{align*}

\subsection{Step 2: Proof of the Desired Result.}
We note that 
\begin{align*}
& \mathbb P( \Psi_{0,n} \in \widehat {CI}_{1-\alpha}(\hat \kappa^*)) \\
& = \mathbb P\left( \frac{| (1,\kappa^*)^\top g_n + \mathbb E \eta_{ij} (\kappa^{*,\top} \alpha_0 \exp(Z_{ij}^\top \beta_{0})H_{ij} -\Delta_{v,ij})+ o_P(1)|}{( \hat \Sigma_{\gamma} + 2 \hat \Sigma_{H,\gamma}^\top \hat \kappa^* +  \hat \kappa^{*,\top} \hat \Sigma_H \hat \kappa^*)^{1/2} } \leq   CV_\alpha \left(    \frac{ \hat {\mathcal{B}}(\hat \kappa^*)}{(\hat \Sigma_{\gamma} + 2 \hat \Sigma_{H,\gamma}^\top \hat \kappa^* + \hat \kappa^{*,\top} \hat \Sigma_H \hat \kappa^*)^{1/2}}  \right)  \right) \\
& \geq \mathbb P\begin{pmatrix}
    \frac{| (1,\kappa^*)^\top g_n + \mathbb E \eta_{ij} (\kappa^{*,\top} \alpha_0 \exp(Z_{ij}^\top \beta_{0})H_{ij} -\Delta_{v,ij})|}{( \Sigma_{\gamma} + 2 \Sigma_{H,\gamma}^\top  \kappa^* +   \kappa^{*,\top}  \Sigma_H  \kappa^*)^{1/2} } \\
  \leq   CV_\alpha \left(    \frac{ {\mathcal{B}}( \kappa^*)}{( \Sigma_{\gamma} + 2  \Sigma_{H,\gamma}^\top \kappa^* +  \kappa^{*,\top} \Sigma_H \kappa^*)^{1/2}}  \right) - o_P(1)    
\end{pmatrix} \\
& = \mathbb P\begin{pmatrix}
    \frac{| (1,\kappa^*)^\top g_n + \mathbb E \eta_{ij} (\kappa^{*,\top} \alpha_0 \exp(Z_{ij}^\top \beta_{0})H_{ij} -\Delta_{v,ij})|}{( \Sigma_{\gamma} + 2 \Sigma_{H,\gamma}^\top  \kappa^* +   \kappa^{*,\top}  \Sigma_H  \kappa^*)^{1/2} } \\
  \leq   CV_\alpha \left(    \frac{ {\mathcal{B}}( \kappa^*)}{( \Sigma_{\gamma} + 2  \Sigma_{H,\gamma}^\top \kappa^* +  \kappa^{*,\top} \Sigma_H \kappa^*)^{1/2}}  \right)    
\end{pmatrix}  - \sup_{t \in \Re} \mathbb P\begin{pmatrix}
    \frac{| (1,\kappa^*)^\top g_n + t|}{( \Sigma_{\gamma} + 2 \Sigma_{H,\gamma}^\top  \kappa^* +   \kappa^{*,\top}  \Sigma_H  \kappa^*)^{1/2} } \leq  o_P(1)    
\end{pmatrix} \\
& = \mathbb P\begin{pmatrix}
    \frac{| (1,\kappa^*)^\top g_n + \mathbb E \eta_{ij} (\kappa^{*,\top} \alpha_0 \exp(Z_{ij}^\top \beta_{0})H_{ij} -\Delta_{v,ij})|}{( \Sigma_{\gamma} + 2 \Sigma_{H,\gamma}^\top  \kappa^* +   \kappa^{*,\top}  \Sigma_H  \kappa^*)^{1/2} } \\
  \leq   CV_\alpha \left(    \frac{ {\mathcal{B}}( \kappa^*)}{( \Sigma_{\gamma} + 2  \Sigma_{H,\gamma}^\top \kappa^* +  \kappa^{*,\top} \Sigma_H \kappa^*)^{1/2}}  \right)    
\end{pmatrix} - \sup_{t \in \Re} \mathbb P(|\N(0,1)+t| \leq o_P(1)) \\
& \geq 1-\alpha - o(1),
\end{align*}
where the first inequality holds because by Lemma \ref{lem:bias_variance_estimator}, we have $ |\hat {\mathcal{B}}(\hat \kappa^*) -  \mathcal{B}(\hat \kappa^*)| = o_P(1)$,  
\begin{align*}
 |   (\hat \Sigma_{\gamma} + 2 \hat \Sigma_{H,\gamma}^\top \hat \kappa^* + \hat \kappa^{*,\top} \hat \Sigma_H \hat \kappa^*)^{1/2} - ( \Sigma_{\gamma} + 2  \Sigma_{H,\gamma}^\top \kappa^* +  \kappa^{*,\top} \Sigma_H \kappa^*)^{1/2}| = o_P(1),
\end{align*}
and 
\begin{align*}
  \left|  CV_\alpha \left(    \frac{ \hat {\mathcal{B}}(\hat \kappa^*)}{(\hat \Sigma_{\gamma} + 2 \hat \Sigma_{H,\gamma}^\top \hat \kappa^* + \hat \kappa^{*,\top} \hat \Sigma_H \hat \kappa^*)^{1/2}}  \right) -  CV_\alpha \left(    \frac{ {\mathcal{B}}( \kappa^*)}{( \Sigma_{\gamma} + 2  \Sigma_{H,\gamma}^\top \kappa^* +  \kappa^{*,\top} \Sigma_H \kappa^*)^{1/2}}  \right)  \right| = o_P(1). 
  \end{align*}
The last inequality follows from the fact that the univariate standard normal density is bounded and the construction of the critical value function $CV_\alpha(\cdot)$. By taking $\liminf_{n \rightarrow \infty}$ on both sides of the above display, we obtain the desired result.

\section{Proof of Theorem \ref{theo:optimal}}
Because $\kappa_\theta$ satisfies Assumption \ref{assumption:brevetheta}, we can rewrite $\kappa_\theta$ as
\begin{align*}
  \kappa_\theta =  G_F(G_F^\top G_F)^{-1} (I_{d_R}-G_F^\top \Xi)+\Xi,
\end{align*}
where $\Xi$ is some deterministic $d_F \times d_R$ matrix that does not depend on $n$. Then, we have 
\begin{align*}
    \kappa_F = \left[G_F(G_F^\top G_F)^{-1} (I_{d_R}-G_F^\top \Xi)+\Xi\right]\Gamma.
\end{align*}
Further define 
\begin{align*}
\tilde    \kappa_F = \left[\Pi_F G_0(G_0^\top\Pi_F^\top \Pi_F G_0)^{-1}(I_{d_R} - G_0^\top \Pi_F^\top \Xi) + \Xi\right]\Gamma
\end{align*}
and 
\begin{align*}
    \kappa_H = \Pi_F^\top \tilde    \kappa_F. 
\end{align*}
We rely the following three claims to establish the desired result: (1) $||\kappa_F - \tilde \kappa_F||_2 = o(1)$, (2) $MSE_F(\kappa_F) = MSE_F(\tilde \kappa_F)+o(1)$, and (3) $MSE_F(\tilde \kappa_F) = MSE_H(\kappa_H) + o(1)$. Claims (2) and (3) imply that 
\begin{align*}
    MSE_F(\kappa_F) = MSE_H(\kappa_H) + o(1).
\end{align*}
In addition, we note that $G_0^\top \kappa_H = \Gamma$, implying that $\kappa_H$ satisfies \eqref{eq:regularity}. Therefore, by the definition of $\kappa^*$, we have
\begin{align*}
     MSE_H(\kappa_H) \geq  MSE_H(\kappa^*), 
\end{align*}
which is the desired result. 

Next, we prove the three claims. 

\textbf{Proof of Claim (1).} 
We have
\begin{align*}
	G_F = \mathbb E \alpha_0 \exp(Z_{ij}^\top \beta_0) (\Pi_F H_{ij} + \delta_{ij})R_{ij}^\top = \Pi_F G_0 + G_\delta,
\end{align*}
where $\|G_\delta\|_{op} = o(1)$. Then, by Assumption \ref{assumption:brevetheta}, we have
\begin{align*}
  \left\|  (G_F^\top G_F)^{-1} - (G_0^\top\Pi_F^\top \Pi_F G_0)^{-1} \right\|_{op} = o(1),
\end{align*}
which implies 
\begin{align*}
 ||\kappa_F - \tilde \kappa_F||_2 & \leq C    \left\|G_F(G_F^\top G_F)^{-1} (I_{d_R}-G_F^\top \Xi) -  \Pi_F G_0(G_0^\top\Pi_F^\top \Pi_F G_0)^{-1}(I_{d_R} - G_0^\top \Pi_F^\top \Xi) \right\|_2 \\
 & \leq C\left\|G_F(G_F^\top G_F)^{-1}  - \Pi_F G_0(G_0^\top\Pi_F^\top \Pi_F G_0)^{-1} \right\|_2 \\
 & + C\left\|G_F(G_F^\top G_F)^{-1}  G_F^\top - \Pi_F G_0(G_0^\top\Pi_F^\top \Pi_F G_0)^{-1} G_0^\top\Pi_F^\top \right\|_2 = o(1).
\end{align*}

\textbf{Proof of Claim (2).} Given Claim (1), Claim (2) holds by the continuity of the quadratic function $MSE_F(\kappa_F)(\cdot)$. 

\textbf{Proof of Claim (3).} We note that 
\begin{align*}
    & \left|(\tilde \kappa_F^\top B_F \tilde \kappa_F)^{1/2} - (\kappa_H^\top B_H \kappa_H)^{1/2}\right| \\
    & \leq C \left| \left(\tilde \kappa_F^\top \mathbb E \exp(2Z_{ij}^\top \beta_0) (\Pi_F H_{ij}+\delta_{ij}) (\Pi_F H_{ij}+\delta_{ij})^\top \tilde 
 \kappa_F\right)^{1/2} - \left(\kappa_F^\top \mathbb E \exp(2Z_{ij}^\top \beta_0) (\Pi_F H_{ij}) (\Pi_F H_{ij})^\top \kappa_F\right)^{1/2} \right| \\
    & \leq C \left[\mathbb E  \exp(2Z_{ij}^\top \beta_0) (\tilde \kappa_F^\top \delta_{ij})^2 \right]^{1/2} \leq C \left(\mathbb E \left\|\delta_{ij}\right\|_2^2\right)^{1/2} = o(1). 
\end{align*}
This implies 
\begin{align*}
    \left|\tilde \kappa_F^\top B_F \tilde \kappa_F -  \kappa_H^\top B_H \kappa_H\right| = o(1),
\end{align*}
given that $\tilde \kappa_F^\top B_F \tilde \kappa_F$ is fixed and 
\begin{align*}
\kappa_H^\top B_H \kappa_H =  ( (\tilde \kappa_F^\top B_F \tilde \kappa_F)^{1/2} +o(1))^2 = O(1).  
\end{align*}
In the same manner, we can show that  
\begin{align*}
&   B^\top_{F,\gamma} \tilde \kappa_F = B^\top_{H,\gamma} \kappa_H + o(1), \quad   \tilde \kappa_F^\top \Sigma_F \tilde \kappa_F = \kappa_H^\top \Sigma_H \kappa_H + o(1), \quad \text{and}  \quad \Sigma^\top_{F,\gamma} \kappa_F = \Sigma^\top_{H,\gamma} \kappa_H + o(1).
\end{align*}
This concludes the proof.

\newpage
\section{Proof of Lemmas \ref{lem:bias_variance_estimator} and \ref{lem:strong_approx}}\label{sec:lem}

We use the following notation throughout this section. For a scalar function $f(U_i,V_j)$ for dyadic data $(U_i,V_j)$, we denote it as $f_{ij}$ and write $\mathbb P_{N,M}f_{ij} = \frac{1}{NM} \sum_{i=1}^N \sum_{j=1}^M f_{ij}$, $\mathbb P_{i,\cdot} f_{ij} = \mathbb E (f_{ij}|U_i) $, $\mathbb P_{\cdot,j} f_{ij} =  \mathbb E (f_{ij}|V_j)$, $\mathbb P f_{ij} = \mathbb E f_{ij}$, $\mathbb P_N f_{ij} = \frac{1}{N}\sum_{i=1}^Nf_{ij}$, $ \mathbb P_M f_{ij} = \frac{1}{M}\sum_{j=1}^M f_{ij}$, and $\mathbb U_{N,M}f_{ij} = \mathbb P_{N,M}f_{ij} - \mathbb P_N \mathbb P_{i,}f_{ij}- \mathbb P_M \mathbb P_{,j}f_{ij} + \mathbb P f_{ij}$. Based on these definitions, we have $\mathbb P_{N,M}f_{ij} = \mathbb P_N \mathbb P_M f_{ij} =  \mathbb P_M \mathbb P_N f_{ij}$. 

\subsection{Proof of Lemma \ref{lem:bias_variance_estimator}}
\textbf{Proof of $\sup_{\theta \in \mathbb B(\theta_{0,n},C/\sqrt{n})}\left\| \hat G(\theta) - G_0\right\|_{op}  = O_P\left(\sqrt{\frac{k_n^2 \zeta_n^2 \log^2(n)}{n}}\right)$.} 

Note that 
\begin{align*}
    & \sup_{\theta \in \mathbb B(\theta_{0,n},C/\sqrt{n})}\left\| \hat G(\theta) - G_0\right\|_{op} \\
    & \leq 
    \sup_{\theta \in \mathbb B(\theta_{0,n},C/\sqrt{n})}\left\| \hat G(\theta) - G(\theta)\right\|_{op} + \left\| G(\theta_{0,n}) - G_0 \right\| 
    \\
    & 
    =  
    \sup_{v_1 \in \Re^{k_n}, ||v_1||_2 = 1, v_2 \in \Re^{d_R}, ||v_2||_2 = 1, (\alpha,\beta) \in \mathbb B((\alpha_0,\beta_0),C/\sqrt{n}) }
    | (\mathbb P_{N,M} - \mathbb P) \exp(\alpha+Z_{ij}^\top \beta)v_1^\top H_{ij}R_{ij}^\top v_2| + O_P(n^{-1/2}) 
    \\
    & 
    \leq 
    \sup_{v_1 \in \Re^{k_n}, ||v_1||_2 = 1, v_2 \in \Re^{d_R}, ||v_2||_2 = 1, (\alpha,\beta) \in \mathbb B((\alpha_0,\beta_0),C/\sqrt{n}) }
    | \mathbb P_N (\mathbb P_{M} - \mathbb P_{i,\cdot}) \exp(\alpha+Z_{ij}^\top \beta)v_1^\top H_{ij}R_{ij}^\top v_2|  
    \\
    & + \sup_{v_1 \in \Re^{k_n}, ||v_1||_2 = 1, v_2 \in \Re^{d_R}, ||v_2||_2 = 1, (\alpha,\beta) \in \mathbb B((\alpha_0,\beta_0),C/\sqrt{n})}| (\mathbb P_N - \mathbb P) \left[\mathbb P_{i,\cdot} \exp(\alpha+Z_{ij}^\top \beta)v_1^\top H_{ij}R_{ij}^\top v_2\right]| + O_P(n^{-1/2})
    \\
    & 
    \leq 
    \sup_{i = 1,\cdots,N, v_1 \in \Re^{k_n}, ||v_1||_2 = 1, v_2 \in \Re^{d_R}, ||v_2||_2 = 1, (\alpha,\beta) \in \mathbb B((\alpha_0,\beta_0),C/\sqrt{n})}
    | (\mathbb P_{M} - \mathbb P_{i,\cdot}) \exp(\alpha+Z_{ij}^\top \beta)v_1^\top H_{ij}R_{ij}^\top v_2|  
    \\
    & + \sup_{v_1 \in \Re^{k_n}, ||v_1||_2 = 1, v_2 \in \Re^{d_R}, ||v_2||_2 = 1, (\alpha,\beta) \in \mathbb B((\alpha_0,\beta_0),C/\sqrt{n})}
    | (\mathbb P_N - \mathbb P) \left[\mathbb P_{i,\cdot} \exp(\alpha+Z_{ij}^\top \beta)v_1^\top H_{ij}R_{ij}^\top v_2\right]| + O_P(n^{-1/2}).
\end{align*}
We note that $\exp(\alpha + Z_{ij}^\top \beta) v_1^\top H_{ij} R_{ij}^\top v_2$ is independent across $j = 1,\cdots,M$ and 
\begin{align*}
& 
\sup_{i=1,\cdots,N, v_1 \in \Re^{k_n}, ||v_1||_2 = 1, v_2 \in \Re^{d_R}, ||v_2||_2 = 1, (\alpha,\beta) \in \mathbb B((\alpha_0,\beta_0),C/\sqrt{n})}
\left| \exp(\alpha + Z_{ij}^\top \beta) v_1^\top H_{ij} R_{ij}^\top v_2 \right|  \\
& \leq \max_{i \in [N]}  \exp(\alpha_0 + Z_{ij}^\top \beta_0+ C||Z_{ij}||_2/\sqrt{n}) ||H_{ij}||_2 ||R_{ij}||_2 \leq C k_n^{1/2} \zeta_n \equiv F.
\end{align*}

Then, the class of functions
\begin{align*}
&     
\mathcal F = \cup_{i = 1}^N \mathcal F_i 
\\ 
& 
\mathcal F_i = \left\{ \exp(\alpha + Z_{ij}^\top \beta) v_1^\top H_{ij} R_{ij}^\top v_2: v_1 \in \Re^{k_n}, ||v_1||_2 = 1, v_2 \in \Re^{d_R}, ||v_2||_2 = 1, (\alpha,\beta) \in \mathbb B((\alpha_0,\beta_0),C/\sqrt{n}) \right\}
\end{align*}
is of the VC type in the sense that 
\begin{align*}
    \sup_{Q}N(\mathcal{F},||\cdot||_{Q,2}, \epsilon ||F||_{Q,2})\leq \sum_{i=1}^N \sup_{Q}N(\mathcal{F}_j,||\cdot||_{Q,2}, \epsilon ||F||_{Q,2})\leq N\left(\frac{a}{\epsilon}\right)^{ck_n} \leq \left(\frac{a}{\epsilon}\right)^{c(k_n \vee \log n)},~0<\forall \epsilon \leq 1,
\end{align*}
where $\sup_Q$ means the supremum over all finitely discrete distributions. Therefore, by \citet[Corollary 5.1]{CCK14} we have
\begin{align*}
    & 
    \mathbb E \sup_{i = 1,\cdots,N, v_1 \in \Re^{k_n}, ||v_1||_2 = 1, v_2 \in \Re^{d_R}, ||v_2||_2 = 1, (\alpha,\beta) \in \mathbb B((\alpha_0,\beta_0),C/\sqrt{n})}
    | (\mathbb P_{M} - \mathbb P_{i,\cdot}) \exp(\alpha+Z_{ij}^\top \beta)v_1^\top H_{ij}R_{ij}^\top v_2| 
    \\
    & \leq C\left(\sqrt{\frac{k_n^2 \zeta_n^2 \log^2(n)}{n}} + \frac{k_n^{3/2} \zeta_n \log^2(n)}{n} \right) \leq C \sqrt{\frac{k_n^2 \zeta_n^2 \log^2(n)}{n}}, 
\end{align*}
which implies 
\begin{align*}
    & 
    \sup_{i = 1,\cdots,N, v_1 \in \Re^{k_n}, ||v_1||_2 = 1, v_2 \in \Re^{d_R}, ||v_2||_2 = 1, (\alpha,\beta) \in \mathbb B((\alpha_0,\beta_0),C/\sqrt{n})} \left| (\mathbb P_{M} - \mathbb P_{i,\cdot}) \exp(\alpha+Z_{ij}^\top \beta)v_1^\top H_{ij}R_{ij}^\top v_2 \right| \\
    & = O_P\left(\sqrt{\frac{k_n^2 \zeta_n^2 \log^2(n)}{n}}\right).
\end{align*}

Similarly, we can show that 
\begin{align*}
& \sup_{v_1 \in \Re^{k_n}, ||v_1||_2 = 1, v_2 \in \Re^{d_R}, ||v_2||_2 = 1, (\alpha,\beta) \in \mathbb B((\alpha_0,\beta_0),C/\sqrt{n})} \left| (\mathbb P_N - \mathbb P) \left[\mathbb P_{i,\cdot} \exp(\alpha+Z_{ij}^\top \beta)v_1^\top H_{ij}R_{ij}^\top v_2\right]\right| \\
 & = O_P\left(\sqrt{\frac{k_n^2 \zeta_n^2 \log^2(n)}{n}}\right), 
\end{align*}
which implies the desired result. 

We can establish $\left\| \hat B_H - B_H \right\|_{op} = O_P\left(\sqrt{\frac{k_n^2 \zeta_n^2 \log^2(n)}{n}}\right)$ and $\left\| \hat B_{H,\gamma} - B_{H,\gamma}\right\|_{2} = O_P\left(\sqrt{\frac{k_n^2 \zeta_n^2 \log^2(n)}{n}}\right)$  in the same manner.

\textbf{Proof of $\sup_{\theta \in \mathbb B(\theta_{0,n},C/\sqrt{n})}\left\| \frac{1}{NM}\sum_{i=1}^N \sum_{j=1}^M \partial_\theta \gamma_n(D_{ij},0, \theta) - \Gamma \right\|_2  =  O_P(n^{-1/2})$.} We have 
\begin{align*}
    & \left\| \frac{1}{NM}\sum_{i=1}^N \sum_{j=1}^M \partial_\theta \gamma_n(D_{ij},0, \theta) - \Gamma \right\|_2 \\
    & \leq  \left\|\frac{1}{NM}\sum_{i=1}^N \sum_{j=1}^M \partial_{\theta,\theta^\top} \gamma_n(D_{ij},0,\tilde \theta) ( \theta  - \theta_{0,n}) \right\|_2 + \left\| \frac{1}{NM}\sum_{i=1}^N \sum_{j=1}^M \partial_\theta \gamma_n(D_{ij},0, \theta_{0,n}) -\Gamma \right\|_2 \\
    & \leq \left\|\frac{1}{NM}\sum_{i=1}^N \sum_{j=1}^M \partial_{\theta,\theta^\top} \gamma_n(D_{ij},0,\tilde \theta)\right\|_{op} \left\|(\theta  - \theta_{0,n}) \right\|_2 + O_P(n^{-1/2}),
\end{align*}
where $\tilde \theta$ is between $\theta$ and $\theta_{0,n}$ and thus belongs to $\mathbb B(\theta_{0,n},C/\sqrt{n})$. Then, by Assumption \ref{A2:smoothness}.2 and 
taking $\sup_{\theta \in \mathbb B(\theta_{0,n},C/\sqrt{n})}$ on both sides, we obtain the desired result. 

\textbf{Proof of $||\hat \Sigma_H - \Sigma_H||_{op} = O_P\left(  \sqrt{\frac{k_n^2 \zeta_n^2}{n} } \right)$ and $\left\| \hat \Sigma_{H,\gamma} - \Sigma_{H,\gamma} \right\|_{2} = O_P\left(  \sqrt{\frac{k_n^2 \zeta_n^2}{n} } \right)$.} We note that 
\begin{align*}
    \hat \Sigma_H = \hat \Sigma^a_{YY,n} + \hat \Sigma^p_{YY,n} + \frac{1}{n}\left[ \frac{1}{NM}\sum_{i=1}^N \sum_{j=1}^M \hat s_{Y,ij}^2 H_{ij}H_{ij}^\top -  \hat \Sigma^a_{YY,n} - \hat \Sigma^p_{YY,n} \right].
\end{align*}

By Lemma \ref{lem:Sigma^a_YY}, we have  
\begin{align}\label{eq:Sigma^a_YY}
    || \hat \Sigma^a_{YY,n} -  \Sigma^a_{YY,n}||_{op} = O_P\left(  \sqrt{\frac{k_n^2 \zeta_n^2}{n} } \right)
\end{align}
and 
\begin{align}\label{eq:Sigma^p_YY}
    || \hat \Sigma^p_{YY,n} -  \Sigma^p_{YY,n}||_{op} = O_P\left(  \sqrt{\frac{k_n^2 \zeta_n^2}{n} } \right).
\end{align}

Next, we focus on $\frac{1}{NM}\sum_{i=1}^N \sum_{j=1}^M \hat s_{Y,ij}^2 H_{ij}H_{ij}^\top$. Recall $s_{Y,ij} = n \left(Y_{ij} - \int \Lambda(R_{ij}^\top \theta_{0,n} + n^{-1/2} v) \pi_{ij}dv\right)H_{ij}$ and $p_{ij} = \mathbb E (Y_{ij}|X_i,A_i,W_j,B_j)$, and define 
\begin{align*}
  \overline  s_{Y,ij} = \mathbb E(s_{Y,ij}|X_i,A_i,W_j,B_j) = n( p_{ij} - \mathbb E(p_{ij}|X_i,W_j)). 
\end{align*}

Then, we have
\begin{align}\label{eq:hat_s^2}
    \frac{1}{NM}\sum_{i=1}^N \sum_{j=1}^M \hat s_{Y,ij}^2 H_{ij}H_{ij}^\top = \frac{1}{NM}\sum_{i=1}^N \sum_{j=1}^M  s_{Y,ij}^2 H_{ij}H_{ij}^\top + \frac{1}{NM}\sum_{i=1}^N \sum_{j=1}^M (\hat s_{Y,ij}+s_{Y,ij})(\hat s_{Y,ij}-s_{Y,ij}) H_{ij}H_{ij}^\top. 
\end{align}
For the first term on the RHS of \eqref{eq:hat_s^2}, we have
\begin{align}\label{eq:hat_s^2_1}
\frac{1}{NM}\sum_{i=1}^N \sum_{j=1}^M  s_{Y,ij}^2 H_{ij}H_{ij}^\top  & = \frac{1}{NM}\sum_{i=1}^N \sum_{j=1}^M  \overline s_{Y,ij}^2 H_{ij}H_{ij}^\top  + \frac{2}{NM}\sum_{i=1}^N \sum_{j=1}^M  \overline s_{Y,ij}(s_{Y,ij}-\overline s_{Y,ij}) H_{ij}H_{ij}^\top \notag \\
& + \frac{1}{NM}\sum_{i=1}^N \sum_{j=1}^M  (s_{Y,ij}-\overline s_{Y,ij})^2 H_{ij}H_{ij}^\top \notag \\
& = \frac{1}{NM}\sum_{i=1}^N \sum_{j=1}^M  \overline s_{Y,ij}^2 H_{ij}H_{ij}^\top  + \frac{2}{NM}\sum_{i=1}^N \sum_{j=1}^M  n \overline s_{Y,ij}(Y_{ij}-p_{ij}) H_{ij}H_{ij}^\top \notag \\
& + \frac{1}{NM}\sum_{i=1}^N \sum_{j=1}^M  n^2(Y_{ij} - p_{ij}) H_{ij}H_{ij}^\top - \frac{2}{NM}\sum_{i=1}^N \sum_{j=1}^M  n^2(Y_{ij} - p_{ij})p_{ij} H_{ij}H_{ij}^\top  \notag \\
& + \frac{1}{NM}\sum_{i=1}^N \sum_{j=1}^M  n^2p_{ij}(1-p_{ij})H_{ij}H_{ij}^\top \notag \\
& \equiv I+2II+III-2IV + V.
\end{align}

For term $I$ in the RHS of \eqref{eq:hat_s^2_1}, we have
\begin{align*}
    \overline s_{Y,ij}^2 = n^2(p_{ij}-\mathbb E(p_{ij}|X_i,W_j))^2 \leq C,
\end{align*}
for some constant $C<\infty$, and thus, 
\begin{align}\label{eq:I_bd}
   \left\| I \right\|_{op} \leq C \left\| \frac{1}{NM}\sum_{i=1}^N \sum_{j=1}^M  H_{ij}H_{ij}^\top  \right\|_{op} = O_P(1). 
\end{align}

For term $II$ in the RHS of \eqref{eq:hat_s^2_1}, we have
\begin{align*}
    & \mathbb E \left( \left\| II \right\|_{F}^2 \mid \mathcal X_N, \mathcal A_N, \mathcal W_M, \mathcal B_M\right) \\
    &= \sum_{(\ell,\ell')\in [k_n] \times [k_n]} \mathbb E \left\{ \left[\frac{1}{NM}\sum_{i=1}^N \sum_{j=1}^M  n \overline s_{Y,ij}(Y_{ij}-p_{ij}) H_{ij,\ell}H_{ij,
    \ell'} \right]^2  \mid \mathcal X_N, \mathcal A_N, \mathcal W_M, \mathcal B_M\right\} \\
    & =\sum_{(\ell,\ell')\in [k_n] \times [k_n]} \frac{1}{N^2M^2}\sum_{i=1}^N \sum_{j=1}^M  \mathbb E \left\{ \left[n \overline s_{Y,ij}(Y_{ij}-p_{ij}) H_{ij,\ell}H_{ij,
    \ell'} \right]^2  \mid \mathcal X_N, \mathcal A_N, \mathcal W_M, \mathcal B_M\right\}\\
    & \leq C \frac{n }{NM} \mathbb P_{N,M} ||H_{ij}||_2^4 = O_P\left(\frac{k_n^2 \zeta_n^2}{n}\right),
\end{align*}
which implies 
\begin{align}\label{eq:II_bd}
   \left\| II \right\|_{op}  \leq   \left\| II \right\|_{F}    =  O_P\left(\sqrt{ \frac{k_n^2 \zeta_n^2}{n} }\right).
\end{align}

In the same manner, we can show that 
\begin{align}\label{eq:III_bd}
     \left\| n^{-1}III \right\|_{op} = O_P\left(\sqrt{ \frac{k_n^2 \zeta_n^2}{n} }\right)
\end{align}
and 
\begin{align}\label{eq:IV_bd}
     \left\| IV \right\|_{op} = O_P\left(\sqrt{ \frac{k_n^2 \zeta_n^2}{n} }\right).
\end{align}
Last, for term $V$ in the RHS of \eqref{eq:hat_s^2_1}, by Hoeffding decomposition, we have 
\begin{align}\label{eq:V_hoeffding}
    & n^{-1} V - \mathbb E \left( n p_{ij}(1-p_{ij})H_{ij}H_{ij}^\top \right) \notag \\
    & = (\mathbb P_{N,M} - \mathbb P)\left( n p_{ij}(1-p_{ij})H_{ij}H_{ij}^\top \right) \notag \\
    & = (\mathbb P_{N}- \mathbb P) \left[ \mathbb P_{i,} \left(n p_{ij}(1-p_{ij})H_{ij}H_{ij}^\top\right) \right] + (\mathbb P_{M}- \mathbb P) \left[ \mathbb P_{,j} \left(n p_{ij}(1-p_{ij})H_{ij}H_{ij}^\top\right) \right] \notag \\
    & + \mathbb U_{N,M} \left(n p_{ij}(1-p_{ij})H_{ij}H_{ij}^\top\right).
\end{align}

We have
\begin{align*}
&    \left\| (\mathbb P_{N}- \mathbb P) \left[ \mathbb P_{i,} \left(n p_{ij}(1-p_{ij})H_{ij}H_{ij}^\top\right) \right] \right\|_{op} \\
 &   \leq \left\| (\mathbb P_{N}- \mathbb P) \left[ \mathbb P_{i,} \left(n p_{ij}(1-p_{ij})H_{ij}H_{ij}^\top\right) \right] \right\|_{F} = O_P\left(\sqrt{ \frac{k_n^2 \zeta_n^2}{n} }\right)
\end{align*}
and 
\begin{align*}
&    \left\|(\mathbb P_{M}- \mathbb P) \left[ \mathbb P_{,j} \left(n p_{ij}(1-p_{ij})H_{ij}H_{ij}^\top\right) \right]\right\|_{op} \\
& \leq \left\|(\mathbb P_{M}- \mathbb P) \left[ \mathbb P_{,j} \left(n p_{ij}(1-p_{ij})H_{ij}H_{ij}^\top\right) \right]\right\|_{F} = O_P\left(\sqrt{ \frac{k_n^2 \zeta_n^2}{n} }\right).
\end{align*}
To analyze the last term in the RHS of \eqref{eq:V_hoeffding}, we note that 
\begin{align*}
    & \mathbb E\left\|    \mathbb U_{N,M}  \left(n p_{ij}(1-p_{ij})H_{ij}H_{ij}^\top\right) \right\|_{F}^2 \\
    & = \sum_{(\ell, \ell') \in [k_n] \times [k_n]} \mathbb E \left[\mathbb U_{N,M} \left(n p_{ij}(1-p_{ij})H_{ij,\ell}H_{ij,\ell'}\right) \right] ^2 \\
    & \leq \frac{C}{(NM)^2}    \sum_{(\ell, \ell') \in [k_n] \times [k_n]} \sum_{i = 1}^N \sum_{j=1}^M \mathbb E \left(n p_{ij}(1-p_{ij})H_{ij,\ell}H_{ij,\ell'}\right)^2 \leq \frac{ C k_n^2 \zeta_n^2}{NM}, 
\end{align*}
which implies 
\begin{align*}
& \left\|    \mathbb U_{N,M}  \left(n p_{ij}(1-p_{ij})H_{ij}H_{ij}^\top\right) \right\|_{op} \leq \left\|    \mathbb U_{N,M}  \left(n p_{ij}(1-p_{ij})H_{ij}H_{ij}^\top\right) \right\|_{F} = O_P\left(\frac{k_n \zeta_n}{n}\right).
\end{align*}

Therefore, following \eqref{eq:V_hoeffding}, we have
\begin{align}\label{eq:V_bd}
   \left\| n^{-1} V - \mathbb E \left( n p_{ij}(1-p_{ij})H_{ij}H_{ij}^\top \right) \right\|_{op} = O_P\left(\sqrt{ \frac{k_n^2 \zeta_n^2}{n} }\right).
\end{align}

Combining \eqref{eq:hat_s^2_1}, \eqref{eq:I_bd}, \eqref{eq:II_bd}, \eqref{eq:III_bd}, \eqref{eq:IV_bd}, and \eqref{eq:V_bd}, we have
\begin{align}\label{eq:hat_s^2_1B}
 & \left\|  \frac{1}{n NM}\sum_{i=1}^N \sum_{j=1}^M  s_{Y,ij}^2 H_{ij}H_{ij}^\top -  \mathbb E \left( n p_{ij}(1-p_{ij})H_{ij}H_{ij}^\top \right) \right\|_{op}  \notag \\
 & = \left\|  \frac{1}{n NM}\sum_{i=1}^N \sum_{j=1}^M  s_{Y,ij}^2 H_{ij}H_{ij}^\top -  n^{-1}  \mathbb E \left[ Var(s_{Y,ij}\mid X_i,A_i,W_j,B_j) H_{ij}H_{ij}^\top \right] \right\|_{op} =  O_P\left(\sqrt{ \frac{k_n^2 \zeta_n^2}{n} }\right).
\end{align}

For the second term on the RHS of \eqref{eq:hat_s^2}, we have
\begin{align}\label{eq:hat_s^2_2}
    & \left\| \frac{1}{NM}\sum_{i=1}^N \sum_{j=1}^M (\hat s_{Y,ij}+s_{Y,ij})(\hat s_{Y,ij}-s_{Y,ij}) H_{ij}H_{ij}^\top \right\|_{op} \notag \\
    & \leq \left\|\frac{2}{NM}\sum_{i=1}^N \sum_{j=1}^M n^2 Y_{ij} \exp(R_{ij}^\top \tilde \theta)  ||R_{ij}||_2 H_{ij}H_{ij}^\top\right\|_{op} ||\hat \theta_{\text{initial}} - \theta_{0,n}||_2 \notag \\
    & \leq \left\|\frac{1}{NM}\sum_{i=1}^N \sum_{j=1}^M n Y_{ij} H_{ij}H_{ij}^\top\right\|_{op} \times O_P(n^{-1/2}) \notag \\
    & \leq \left\|\frac{1}{NM}\sum_{i=1}^N \sum_{j=1}^M \left[n (Y_{ij}-p_{ij}) + n p_{ij} \right] H_{ij}H_{ij}^\top\right\|_{op} \times O_P(n^{-1/2}) \notag \\
    & \leq \left[ O_P\left(\sqrt{ \frac{k_n^2 \zeta_n^2}{n} }\right) + \left\|\frac{1}{NM}\sum_{i=1}^N \sum_{j=1}^M H_{ij}H_{ij}^\top\right\|_{op}\right] \times O_P(n^{1/2}) = O_P(n^{1/2}). 
\end{align}
where $\tilde \theta = (\tilde \alpha_n, \tilde \beta_n)$ is between $\hat \theta_{\text{initial}}$ and 0 so that $\log n + \tilde \alpha_n = O_P(1)$, $\tilde \beta_n = O_P(1)$, and 
\begin{align*}
    n \exp(R_{ij}^\top \tilde \theta) \leq \exp(|\log n + \tilde \alpha_n| + ||Z_{ij}||_2 ||\tilde \beta||_2) \leq \exp(|\log n + \tilde \alpha_n| + C ||\tilde \beta_n||_2), 
\end{align*}
and the last inequality is due to \eqref{eq:III_bd}. Further note that 
\begin{align*}
\left\|     \hat \Sigma^a_{YY,n} \right\|_{op} \leq \left\|   \Sigma^a_{YY,n} \right\|_{op} + \left\|     \hat \Sigma^a_{YY,n} - \Sigma^a_{YY,n} \right\|_{op} = O_P(1) 
\end{align*}
and, similarly, $\left\|     \hat \Sigma^p_{YY,n} \right\|_{op} = O_P(1)$.

Combining \eqref{eq:hat_s^2}, \eqref{eq:hat_s^2_1B}, \eqref{eq:hat_s^2_2}, and above fact, we obtain the desired result that 
\begin{align*}
& \left\|     \frac{1}{n}\left[ \frac{1}{NM}\sum_{i=1}^N \sum_{j=1}^M \hat s_{Y,ij}^2 H_{ij}H_{ij}^\top -  \hat \Sigma^a_{YY,n} - \hat \Sigma^p_{YY,n} \right] -   n^{-1}  \mathbb E \left[ Var(s_{Y,ij}\mid X_i,A_i,W_j,B_j) H_{ij}H_{ij}^\top \right] \right\|_{op} \\
& = O_P\left(\sqrt{ \frac{k_n^2 \zeta_n^2}{n} }\right). 
\end{align*}

We can show $\left\| \hat \Sigma_{H,\gamma} - \Sigma_{H,\gamma} \right\|_{2} = O_P\left(  \sqrt{\frac{k_n^2 \zeta_n^2}{n} } \right)$ in the same manner. 

\textbf{Proof of $| \hat \Sigma_{\gamma} - \Sigma_{\gamma} | = O_P(\zeta_n n^{-1/2})$.}
We note that 
\begin{align}
   \hat \Sigma_{\gamma} = \frac{\hat \Sigma^a_{\gamma\gamma,n}}{1-\phi} +  \frac{\hat \Sigma^p_{\gamma\gamma,n}}{\phi},
\end{align}
where $\hat \Sigma^a_{\gamma\gamma,n}$ and $\hat \Sigma^p_{\gamma\gamma,n}$ are the bottom-right elements of $\hat \Sigma^a_n$ and $\hat \Sigma^p_n$, respectively. Specifically, we have 
\begin{align*}
    \hat \Sigma^a_{\gamma\gamma,n} = \frac{1}{N} \sum_{i=1}^N \frac{2}{M(M-1)}\sum_{j=1}^{M-1} \sum_{k=j+1}^M \hat s_{\gamma,ij}\hat s_{\gamma,ik}
\end{align*}
and 
\begin{align*}
\hat \Sigma^p_{\gamma\gamma,n} = \frac{1}{M} \sum_{j=1}^M \frac{2}{N(N-1)}\sum_{i=1}^{N-1} \sum_{k=i+1}^N \hat s_{\gamma,ij}\hat s_{\gamma,ik}.
\end{align*}
Then, similar to the argument in the proof of Lemma \ref{lem:Sigma^a_YY} (with the dimension $k_n=1$), we can show that 
\begin{align*}
\left\|        \hat \Sigma^a_{\gamma\gamma,n} -     \Sigma^a_{\gamma\gamma,n} \right\|_{op} = O_P\left(\sqrt{\frac{\zeta_n^2}{n}} \right) \quad \text{and} \quad \left\|        \hat \Sigma^p_{\gamma\gamma,n} -     \Sigma^p_{\gamma\gamma,n} \right\|_{op} = O_P\left(\sqrt{\frac{\zeta_n^2}{n}} \right),
\end{align*}
where 
\begin{align*}
\Sigma^a_{\gamma\gamma,n}  = \mathbb E    [\mathbb E (s_{\gamma,ij}|X_i,A_i)]^2  \quad \text{and} \quad \Sigma^p_{\gamma\gamma,n}  = \mathbb E [\mathbb E (s_{\gamma,ij}|W_j,B_j)]^2.
\end{align*}
This is the desired result. 

\textbf{Proof of $\|\hat \kappa^* - \kappa^\ast \|_2 = o_P( k_n^{-1/2})$.} Based on the previous results, we have 
\begin{align*}
\left\|    \hat \Phi_1 -  \Phi_1\right\|_{op} \leq \overline \sigma^2 \left\|\hat B_H - B_H \right\|_{op} + \left\|\hat \Sigma_H - \Sigma_H \right\|_{op} = O_P\left(\sqrt{\frac{k_n^2 \zeta_n^2 \log^2(n)}{n}}\right),
\end{align*}
\begin{align*}
\left\|  \hat \Phi_2 - \Phi_2   \right\|_2 \leq  \overline \sigma^2 \left\|\hat B_{H,\gamma} - B_{H,\gamma} \right\|_{2} + \left\|\hat \Sigma_{H,\gamma} - \Sigma_{H,\gamma} \right\|_{2} = O_P\left(\sqrt{\frac{k_n^2 \zeta_n^2 \log^2(n)}{n}}\right), 
\end{align*}
\begin{align*}
\left\|    \hat G - G_0 \right\|_{op} = O_P\left(\sqrt{\frac{k_n^2 \zeta_n^2 \log^2(n)}{n}}\right), 
\end{align*}
and $||\hat \Gamma - \Gamma||_2 = O_P(n^{-1/2})$. By Assumption \ref{ass:Y_gamma}.2, we have 
\begin{align*}
C \geq \lambda_{\max}(\Phi_1) \geq \lambda_{\min}(\Phi_1)\geq c \quad \text{and} \quad   C \geq \lambda_{\max}(G^\top_0 \Phi_1^{-1} G_0 ) \geq \lambda_{\min}(G^\top_0 \Phi_1^{-1} G_0 )\geq c  
\end{align*}
for some constants $(C,c)$ such that $\infty>C\geq c>0$. In addition, we have $||G_0||_{op} = O(1)$, $||\Phi_2||_2 = O(1)$, and $||\Gamma||_2 = O(1)$.  

Now we are ready to bound $\|       \hat \kappa^*  - \kappa^\ast\|_2$. Note that  
\begin{align}\label{eq:kappa*}
& \|       \hat \kappa^*  - \kappa^\ast\|_2 \notag \\
& = \left\|\hat \Phi_1^{-1} \left\{\hat \Phi_2 - \hat G \left[\hat G^\top \hat \Phi_1^{-1} \hat G \right]^{-1} \left[ \hat G^\top \hat \Phi_1^{-1}\hat \Phi_2+ \hat \Gamma \right]  \right\} - \Phi_1^{-1} \left\{ \Phi_2 -  G_0 \left[ G_0^\top  \Phi_1^{-1}  G_0 \right]^{-1} \left[ G_0^\top \Phi_1^{-1} \Phi_2+  \Gamma \right]  \right\} \right\|_2  \notag \\
& \leq \left\|\hat \Phi_1^{-1} \hat \Phi_2 - \Phi_1^{-1} \Phi_2 \right\|_2 + \left\|\hat \Phi_1^{-1} \hat G \left[\hat G^\top \hat \Phi_1^{-1} \hat G \right]^{-1}  \hat G^\top \hat \Phi_1^{-1}\hat \Phi_2  - \Phi_1^{-1}  G_0 \left[  G_0^\top \Phi_1^{-1} G_0 \right]^{-1}  G_0^\top \Phi_1^{-1} \Phi_2 \right\|_2  \notag \\
& + \left\|\hat \Phi_1^{-1}\hat G \left[\hat G^\top \hat \Phi_1^{-1} \hat G \right]^{-1}\hat \Gamma -  \Phi_1^{-1} G_0 \left[ G_0^\top \Phi_1^{-1} G_0 \right]^{-1} \Gamma   \right\|_2  \notag \\
& \equiv I+II+III. 
\end{align}

For term $I$ on the RHS of \eqref{eq:kappa*}, we have
\begin{align*}
\left\|\hat \Phi_1^{-1} \hat \Phi_2 - \Phi_1^{-1} \Phi_2 \right\|_2 & \leq \left\|\hat \Phi_1^{-1} (\hat \Phi_2 - \Phi_2) \right\|_2  + \left\|(\hat \Phi_1^{-1} -  \Phi_1^{-1}) \Phi_2 \right\|_2    \\
& \leq \left\|\hat \Phi_1^{-1} \right\|_{op} \left\| \hat \Phi_2 - \Phi_2\right\|_2 + \left\|\hat \Phi_1^{-1} -  \Phi_1^{-1}\right\|_{op} \left\|\Phi_2 \right\|_2 = O_P\left(\sqrt{\frac{k_n^2 \zeta_n^2 \log^2(n)}{n}}\right). 
\end{align*}

For term $II$ on the RHS of \eqref{eq:kappa*}, we have
\begin{align*}
&     \left\|\hat \Phi_1^{-1} \hat G \left[\hat G^\top \hat \Phi_1^{-1} \hat G \right]^{-1}  \hat G^\top \hat \Phi_1^{-1}\hat \Phi_2  - \Phi_1^{-1}  G_0 \left[  G_0^\top \Phi_1^{-1} G_0 \right]^{-1}  G_0^\top \Phi_1^{-1} \Phi_2 \right\|_2 \\
& \leq \left\| \hat \Phi_1^{-1} \hat G - \Phi_1^{-1} G_0 \right\|_{op} \left\| \left[\hat G^\top \hat \Phi_1^{-1} \hat G \right]^{-1}  \hat G^\top \hat \Phi_1^{-1}\hat \Phi_2 \right\|_2 + \left\|\Phi_1^{-1}  G_0 \right\|_{op} \left\|\left[\hat G^\top \hat \Phi_1^{-1} \hat G \right]^{-1} -  \left[  G_0^\top \Phi_1^{-1} G_0 \right]^{-1} \right\| ||\hat \Phi_2||_2\\
& + \left\| \Phi_1^{-1}  G_0 \left[  G_0^\top \Phi_1^{-1} G_0 \right]^{-1}  G_0^\top \Phi_1^{-1} \right\|_{op} \left\|\hat \Phi_2 - \Phi_2 \right\|_2 \\
& = O_P\left(\sqrt{\frac{k_n^2 \zeta_n^2 \log^2(n)}{n}}\right). 
\end{align*}
Similarly, we have $||III||_2 = O_P\left(\sqrt{\frac{k_n^2 \zeta_n^2 \log^2(n)}{n}}\right)$, which implies 
\begin{align*}
    \|\hat \kappa^* - \kappa^\ast \|_2 = O_P\left(\sqrt{\frac{k_n^2 \zeta_n^2 \log^2(n)}{n}}\right)= o_P( k_n^{-1/2}),
\end{align*}
where the last equality holds by Assumption \ref{ass:kn}. 

\textbf{Proof of $\|\kappa^\ast \|_2 = O(1)$.} 
Note that 
\begin{align*}
  \|\kappa^\ast \|_2 & =  \left\| \Phi_1^{-1} \left\{ \Phi_2 -  G_0 \left[ G_0^\top  \Phi_1^{-1}  G_0 \right]^{-1} \left[ G_0^\top \Phi_1^{-1} \Phi_2+  \Gamma \right]  \right\}  \right\|_2 \\
  & \leq \left\| \Phi_1^{-1}  \Phi_2 \right\|_2 + \left\| \Phi_1^{-1}  G_0 \left[ G_0^\top  \Phi_1^{-1}  G_0 \right]^{-1} G_0^\top \Phi_1^{-1} \Phi_2   \right\|_2  \\
  & + \left\| \Phi_1^{-1}  G_0 \left[ G_0^\top  \Phi_1^{-1}  G_0 \right]^{-1} \Gamma  \right\|_2 \\
  & \leq \left\| \Phi_1^{-1}\right\|_{op} \left\|\Phi_2 \right\|_2 +  \left\| \Phi_1^{-1/2}\right\|_{op}^2 \left\|\Phi_2 \right\|_2 + \left\| \Phi_1^{-1}\right\|_{op} \left\| G_0 \right\|_{op} \left\| \left[ G_0^\top  \Phi_1^{-1}  G_0 \right]^{-1}\right\|_{op} \left\|\Gamma \right\|_2 \\
  & = O(1). 
\end{align*}
This concludes the proof.

\subsection{Proof of Lemma \ref{lem:strong_approx}}
Recall 
	\begin{align*}
		s_{ij} = & \begin{pmatrix}
			& n\left(Y_{ij} - \int \Lambda(R_{ij}^\top \theta_{0,n}+n^{-1/2}v)\pi_{ij}(v)dv\right)H_{ij} \\
			& \gamma_n(D_{ij},0,\theta_{0,n}) -  \mathbb{E}\gamma_n(D_{ij},0, \theta_{0,n}) 
		\end{pmatrix} \equiv (s_{Y,ij} H_{ij}^\top,s_{\gamma,ij})^\top, 
	\end{align*}
	$\overline s_{ij} = \mathbb{E}(s_{ij}|X_i,A_i,W_j,B_j)$,  $\overline s_{1i}^a = \mathbb{E}(\overline s_{ij}|X_i,A_i)$, $\overline s_{1j}^p = \mathbb{E}(\overline s_{ij}|W_j,B_j)$, $\Sigma_{n}^a = \mathbb{E} \overline s_{1i}^a(\overline s_{1i}^a)^\top$, $\Sigma_{n}^p = \mathbb{E} \overline s_{1j}^p(\overline s_{1j}^p)^\top$, and $\Sigma_{2n} = \mathbb{E}(s_{ij} - \overline s_{ij})(s_{ij} - \overline s_{ij})^\top$. 

 Following \cite{graham:2022}, we have
 \begin{align*}
     \frac{\sqrt{n} }{NM}\sum_{i=1}^N \sum_{j = 1}^M s_{ij} & = \frac{\sqrt{n} }{N}\sum_{i=1}^N\overline{s}^a_{1i} +  \frac{\sqrt{n} }{M}\sum_{j=1}^M\overline{s}^p_{1j} +       \frac{\sqrt{n} }{NM}\sum_{i=1}^N \sum_{j = 1}^M (s_{ij} - \overline{s}_{ij}) +     \frac{  \sqrt{n}}{NM}\sum_{i=1}^N \sum_{j = 1}^M (\overline{s}_{ij} -  \overline{s}^a_{1i} -\overline{s}^p_{1j}) \\
     & \equiv  \sum_{l=1}^4 U_{l,n},
 \end{align*}
where 
\begin{align*}
U_{1,n} & = \sum_{i \in [N]} \frac{\sqrt{n}}{N}\overline{s}^a_{1i}, \\
U_{2,n} & = \sum_{j \in [M]} \frac{\sqrt{n}}{M}\overline{s}^p_{1j}, \\
U_{3,n} & =   \frac{\sqrt{n} }{NM}\sum_{i=1}^N \sum_{j = 1}^M (s_{ij} - \overline{s}_{ij}), \\
U_{4,n} & = \frac{  \sqrt{n}}{NM}\sum_{i=1}^N \sum_{j = 1}^M (\overline{s}_{ij} -  \overline{s}^a_{1i} -\overline{s}^p_{1j}).  
\end{align*}
Let $\mathcal F_n$ be the sigma field generated by $\{X_i,A_i\}_{i \in [N]}$ and $\{W_j,B_j\}_{j \in [M]}$. Then, we note that $(U_{1,n},U_{2,n})$ belong to $\mathcal F_n$, and conditionally on $\mathcal F_n$, $U_{3,n} = (U_{3,n}^{*,\top},0)^\top$, 
\begin{align*}
 U_{3,n}^* = \frac{\sqrt{n}}{NM}\sum_{i=1}^N \sum_{j=1}^M  n\left(Y_{ij} - \mathbb E (Y_{i,j}|\mathcal F_n)\right)  H_{ij} = \frac{\sqrt{n}}{NM}\sum_{i=1}^N \sum_{j=1}^M  n\left(Y_{ij} - p_{ij}\right)  H_{ij}
\end{align*}
is a sequence of independent random variables. Therefore, by Yurinskii’s Coupling (\citet[Theorem 10]{P02}), there exists $g_{3,n} = (g_{3,n}^{*,\top},0)^\top$ such that, for $g^*_{3,n}  \stackrel{d}{=}\N(0,I_{k_n})$, $g^*_{3,n}$ is independent of $\mathcal F_n$ and for any $\mathcal F_n$-measurable random variable $\Upsilon_n$ with $\Upsilon_n>0$, we have 
\begin{align*}
 \mathbb P \left\{ || U^*_{3,n} - \hat \Omega_{3,n}^{1/2}g^*_{3,n}||_2 >3 \Upsilon_n \mid \mathcal F_n \right\} \leq C_0 \hat B_{3,n}\left(1 + \frac{|\log (1/\hat B_{3,n})|}{k_n} \right),
\end{align*}
where $C_0$ is a universal constant, 
\begin{align*}
\hat    \Omega_{3,n} = \frac{1}{NM}\sum_{i = 1}^N \sum_{j=1}^M \frac{n^3 p_{ij}(1-p_{ij})H_{ij}H_{ij}^\top}{NM} ,
\end{align*}
\begin{align*}
\hat     B_{3,n} = \hat D_{3,n} k_n \Upsilon_n^{-3}, \quad \text{and} \quad \hat D_{3,n} = \frac{n^{3/2}}{(NM)^3}\sum_{i \in [N]} \sum_{j \in [M]} \mathbb E\left(  ||s_{ij} - \overline{s}_{ij}||_2^3 \mid \mathcal F_n \right).
\end{align*}
Consequently, by choosing $\Upsilon_n = \hat D_{3,n}^{1/3}\delta_n^{-1/3}$ for some deterministic sequence $\delta_n>0$ which will be specified later, we have 
\begin{align*}
    \hat B_{3,n} = k_n \delta_n
\end{align*}
and
\begin{align*}
 \mathbb P\left\{ || U^*_{3,n} - \hat \Omega_{3,n}^{1/2}g^*_{3,n}||_2 >3  D_{3,n}^{1/3}\delta_n^{-1/3} \mid \mathcal F_n \right\} \leq C_0 k_n \delta_n \left(1 + \frac{|\log (1/(k_n \delta_n))|}{k_n} \right).
\end{align*}
We further note that $\hat D_{3,n} = O_P\left(\sqrt{\frac{k_n^3}{n}}\right)$, and $\hat    \Omega_{3,n}$ and $\hat     B_{3,n}$ belong to $\mathcal F_n$. In addition, let $\Omega_{3,n} = \mathbb E \hat \Omega_{3,n}$. Then, we have $\left\|\Omega_{3,n} - \hat \Omega_{3,n}\right\|_{op} = O_P\left(  \sqrt{\frac{k_n^2 \zeta_n^2}{n}} \right)$ by \eqref{eq:V_bd}, and thus, 
\begin{align*}
\left\| \left(   \hat \Omega_{3,n}^{1/2} - \Omega_{3,n}^{1/2} \right) g_{3,n}^*\right\|_2 \leq \left\|   \hat \Omega_{3,n}^{1/2} - \Omega_{3,n}^{1/2}\right\|_{op} ||g^*_{3,n}||_2 = O_P\left(  \sqrt{\frac{ k^3_n \zeta_n^2}{n}} \right),
\end{align*}
where we use the fact that $\lambda_{\min}\left(\Omega_{3,n} \right) \geq c>0$ for some constant $c$ and $||g^*_{3,n}|| = O_P(\sqrt{k_n})$.

Similarly, we can strongly approximate $U_{1,n}$ and $U_{2,n}$ by $\Omega_{1,n}^{1/2} g_{1,n}$ and $\Omega_{2,n}^{1/2} g_{2,n}$, respectively, where $g_{1,n}$ and $g_{2,n}$ are $(k_n+1)$-dimensional standard normal random variables, 
\begin{align*}
    \Omega_{1,n} & = \frac{n}{N^2}\sum_{i=1}^N \mathbb E (\overline{s}^a_{1i})^2, \quad       \Omega_{2,n}  = \frac{n}{M^2}\sum_{j=1}^M \mathbb E (\overline{s}^p_{1j})^2.
\end{align*}
More specifically, for $l = 1,2$, we have
\begin{align*}
    \mathbb P\left(\left\|U_{l,n} - \Omega_{l,n}^{1/2}g_{l,n}\right\|_2 > 3 D_{l,n}^{1/3} \delta_n \right) \leq C_0 k_n\delta_n\left(1 + \frac{|\log (1/(k_n \delta_n))|}{k_n} \right),
\end{align*}
where 
\begin{align*}
    D_{1,n} =  \frac{n^{3/2}}{N^3}\sum_{i \in [N]} \mathbb E\left(  ||\overline s^a_{1i} ||_2^3 \right) = O\left(\sqrt{ \frac{k_n^3}{n}} \right) \quad \text{and} \quad  D_{2,n} =  \frac{n^{3/2}}{M^3}\sum_{j \in [M]} \mathbb E\left(  ||\overline s^p_{1j} ||_2^3 \right) = O\left(\sqrt{ \frac{k_n^3}{n}} \right).
\end{align*}
In addition, because $(X_i,A_i)_{i \in [N]}$ and $(W_j,B_j)_{j \in [M]}$ are independent, we have $g_{1,n} \perp g_{2,n}$. Last, because $g_{3,n}^*$ is independent of $\mathcal F_n$, we have $(g_{1,n}, g_{2,n}, g_{3,n}^*)$ are independent. 

Next, we turn to $U_{4,n}$. We note that 
\begin{align*}
    ||U_{4,n}||_2 \leq \sqrt{k_n} ||U_{4,n}||_\infty. 
\end{align*}
In addition, by \citet[Corollary 3]{chiang/kato/sasaki:2023} with their $k=2$, $q=2$, and $e = \{1,1\}$, we have
\begin{align*}
  \frac{NM}{\sqrt{n}} \left(  \mathbb E \left[ ||U_{4,n}||_\infty^2 \right]\right)^{1/2} & \leq C (\log(k_n)) \sqrt{NM} \left(\mathbb E ||\overline{s}_{ij} -  \overline{s}^a_{1i} -\overline{s}^p_{1j}||_\infty^2 \right)^{1/2}  \\
  & \leq C (\log(k_n)) \sqrt{NM} \left(\mathbb E ||\overline{s}_{ij} -  \overline{s}^a_{1i} -\overline{s}^p_{1j}||^2_2 \right)^{1/2} \\
  & \leq C \log(k_n) \sqrt{k_n} \sqrt{NM}.
  \end{align*}
This implies 
\begin{align*}
||U_{4,n}||_2 = O_P\left(\sqrt{\frac{\log(n) k_n^2}{n}}\right).
\end{align*}

Therefore, by letting $\delta_n = (Ck_n)^{-1}$ for a large constant $C>0$, we have 
\begin{align*}
\left\|   \frac{\sqrt{n} }{NM}\sum_{i=1}^N \sum_{j = 1}^M s_{ij} - \left( \Omega_{1,n}^{1/2}g_{1,n} + \Omega_{2,n}^{1/2}g_{2,n} + \begin{pmatrix}
       \Omega_{3,n}^{*,1/2}g_{3,n}^* \\
       0
   \end{pmatrix}\right) \right\| & = O_P\left(\sqrt{\frac{\log(n) k_n^2}{n}} +  \sqrt{\frac{ k^3_n \zeta_n^2}{n}} + \left(\frac{k_n^3}{n}\right)^{1/6}k_n^{1/3}\right) \\
   & = o_P(1).
\end{align*}
In addition, we have 
\begin{align*}
g_n \equiv \Omega_{1,n}^{1/2}g_{1,n} + \Omega_{2,n}^{1/2}g_{2,n} + \begin{pmatrix}
       \Omega_{3,n}^{*,1/2}g_{3,n}^* \\
       0
   \end{pmatrix} \sim \N\left(0, \Omega_{1,n} + \Omega_{2,n} + \begin{pmatrix}
       \Omega_{3,n}^* & 0 \\
       0 & 0
   \end{pmatrix}\right) \stackrel{d}{=} \N(0,\Sigma_n).
\end{align*}
This concludes the proof. 

\section{Additional Lemmas}
\begin{lemma}\label{lem:Sigma^a_YY}
    Suppose conditions in Theorem \ref{theo:size} holds. Then, we have
\begin{align*}
    || \hat \Sigma^a_{YY,n} -  \Sigma^a_{YY,n}||_{op} = O_P\left(  \sqrt{\frac{k_n^2 \zeta_n^2}{n} } \right)
\end{align*}
and 
\begin{align*}
    || \hat \Sigma^p_{YY,n} -  \Sigma^p_{YY,n}||_{op} = O_P\left(  \sqrt{\frac{k_n^2 \zeta_n^2}{n} } \right).
\end{align*}
\end{lemma}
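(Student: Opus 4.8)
\emph{Strategy.} Write $m_{ij}=s_{Y,ij}H_{ij}$ and $\hat m_{ij}=\hat s_{Y,ij}H_{ij}$, so that $\hat\Sigma^a_{YY,n}=\frac1N\sum_{i}\frac{2}{M(M-1)}\sum_{j<k}\hat m_{ij}\hat m_{ik}^\top$, and set $\mu_i\equiv\mathbb E(m_{ij}\mid X_i,A_i)$, so that $\Sigma^a_{YY,n}=\mathbb E[\mu_i\mu_i^\top]$. The plan is to (i) pass from the feasible statistic in $\hat m_{ij}$ to the infeasible one built from $\theta_{0,n}$, and (ii) show the infeasible statistic concentrates at $\mathbb E[\mu_i\mu_i^\top]$. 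The structural fact driving everything is that, under Assumption \ref{ass:sampling}.1, the attributes and errors attached to distinct projects are i.i.d. given node $i$, so $m_{ij}$ and $m_{ik}$ are conditionally independent given $(X_i,A_i)$ for $j\neq k$; hence the inner double sum is a matrix-valued second-order U-statistic with conditional mean $\mathbb E(m_{ij}m_{ik}^\top\mid X_i,A_i)=\mu_i\mu_i^\top$. Throughout I bound $\|\cdot\|_{op}\le\|\cdot\|_F$ and control the latter by computing $\mathbb E\|\cdot\|_F^2$ entrywise, exactly as in the treatment of the term $V$ in the proof of Lemma \ref{lem:bias_variance_estimator}.

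\emph{Infeasible concentration.} For the infeasible statistic I decompose $m_{ij}=\overline m_{ij}+(m_{ij}-\overline m_{ij})$, with $\overline m_{ij}=\mathbb E(m_{ij}\mid X_i,A_i,W_j,B_j)=n(p_{ij}-q_{ij})H_{ij}$ and $q_{ij}=\mathbb P(Y_{ij}=1\mid X_i,W_j)$, and expand $m_{ij}m_{ik}^\top$ into the smooth product $\overline m_{ij}\overline m_{ik}^\top$, two cross terms, and the idiosyncratic product $(m_{ij}-\overline m_{ij})(m_{ik}-\overline m_{ik})^\top$. For the smooth part I split its node-$i$ U-statistic into $\mu_i\mu_i^\top$ plus a Hoeffding remainder; averaging over $i$ produces the dominant i.i.d. term $\frac1N\sum_i(\mu_i\mu_i^\top-\mathbb E\mu_i\mu_i^\top)$, whose $\mathbb E\|\cdot\|_F^2$ equals $N^{-1}\mathbb E\|\mu_i\|_2^4$. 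Here the deterministic bound $\|\mu_i\|_2^2\le Ck_n\zeta_n^2$, which follows from $\|H_{ij}\|_\infty\le\zeta_n$ together with $\mathbb E(|s_{Y,ij}|\mid X_i,A_i)\le C$ (a consequence of $np_{ij}$ and $nq_{ij}$ having bounded support, Assumption \ref{ass:Y_gamma}.1), combined with $\mathbb E\|\mu_i\|_2^2=\operatorname{tr}(\Sigma^a_{YY,n})\le Ck_n$, yields $\mathbb E\|\mu_i\|_2^4\le Ck_n^2\zeta_n^2$ and hence the target rate $O_P(\sqrt{k_n^2\zeta_n^2/n})$. The Hoeffding remainder, the cross terms, and the idiosyncratic product all have conditional mean zero; a direct entrywise variance computation (using $np_{ij}=O(1)$ for the idiosyncratic scaling and $\frac{1}{NM}\sum_{i,j}\|H_{ij}\|_2^2=O_P(k_n)$ from Assumption \ref{ass:kn} rather than the crude per-term $\zeta_n^2$ bound) shows each is of order $O_P(\sqrt{k_n^2/n})$, hence lower order.

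\emph{Feasibility and the $\Sigma^p$ statement.} To pass from $\hat m_{ij}$ to $m_{ij}$ I use that $\hat s_{Y,ij}-s_{Y,ij}=n\big(q_{ij}-\exp(R_{ij}^\top\hat\theta_{\text{initial}})\big)$; a mean-value expansion of $\exp(R_{ij}^\top\hat\theta_{\text{initial}})$ around $\theta_{0,n}$ together with $n\exp(R_{ij}^\top\theta_{0,n})=O(1)$ and $\|\hat\theta_{\text{initial}}-\theta_{0,n}\|_2=O_P(n^{-1/2})$ (Assumption \ref{ass:regular_intitial}) gives a uniform $O_P(n^{-1/2})$ bound on this difference. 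Expanding $\hat m_{ij}\hat m_{ik}^\top-m_{ij}m_{ik}^\top$ and bounding the resulting dyadic averages with $\frac{1}{NM}\sum_{i,j}\|H_{ij}\|_2^2=O_P(k_n)$ and $np_{ij}=O(1)$ — in the same spirit as \eqref{eq:hat_s^2_2} — controls the feasibility error at order $n^{-1/2}k_n$, which is within $O_P(\sqrt{k_n^2\zeta_n^2/n})$ since $\zeta_n\ge c>0$. The claim for $\hat\Sigma^p_{YY,n}$ is identical after interchanging the roles of $(X_i,A_i)$ and $(W_j,B_j)$ and of $N$ and $M$, using $M/n\to\phi\in(0,1)$.

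\emph{Main obstacle.} The principal difficulty is organizing the concentration of the smooth U-statistic remainder and the cross/idiosyncratic terms: these are genuinely dyadic, double-indexed averages carrying both node-level and idiosyncratic randomness at the growing dimension $k_n$, so the Hoeffding decomposition must be carried out with explicit tracking of dimension-dependent constants (analogous to the maximal inequalities of \citet[Corollary 3]{chiang/kato/sasaki:2023} invoked in Lemma \ref{lem:strong_approx}). The single quantitative point on which the stated rate hinges is the deterministic bound $\|\mu_i\|_2^2\lesssim k_n\zeta_n^2$ for the dominant i.i.d. term; without the sparsity normalization making $np_{ij}$ bounded, the fourth moment $\mathbb E\|\mu_i\|_2^4$ would scale like $k_n^2\zeta_n^4$ and degrade the rate by a factor $\zeta_n$.
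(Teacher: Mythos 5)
Your proposal is correct and follows essentially the same route as the paper's proof: a feasibility step controlling $\hat s_{Y,ij}-s_{Y,ij}$ via a mean-value expansion and $\|\hat\theta_{\text{initial}}-\theta_{0,n}\|_2=O_P(n^{-1/2})$ (the paper's term $I$), a Hoeffding decomposition of the conditional second-order U-statistic over projects (the paper's terms $II_1$ and $II_2$, which your smooth/cross/idiosyncratic split merely reorganizes), and an i.i.d. law of large numbers for $\frac{1}{N}\sum_i\mu_i\mu_i^\top$ driven by the same two bounds $\|\mu_i\|_2^2\le Ck_n\zeta_n^2$ and $\mathbb{E}\|\mu_i\|_2^2=\operatorname{tr}(\Sigma^a_{YY,n})=O(k_n)$ (the paper's term $III$). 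The only quibble is that your claimed sharper rate $O_P(\sqrt{k_n^2/n})$ for the cross and idiosyncratic terms would require a fourth-moment condition of the type $\mathbb{E}\|H_{ij}\|_2^4=O(k_n^2)$ that the assumptions do not supply, but this is immaterial since the crude bound already places those terms at the target rate $O_P(\sqrt{k_n^2\zeta_n^2/n})$.
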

\begin{proof}
We have
\begin{align}\label{eq:Sigma_YY}
& \left| \hat \Sigma^a_{YY,n} - \mathbb E \left\{ [\mathbb E (s_{Y,ij}H_{ij}|X_i,A_i)][\mathbb E (s_{Y,ij}H_{ij}^\top|X_i,A_i)]	\right\} \right| \notag \\
& \leq  \left\| \frac{1}{N}\sum_{i=1}^N \frac{2}{M(M-1)}\sum_{j=1}^{M-1} \sum_{k=j+1}^M(\hat s_{Y,ij}\hat s_{Y,ik} - s_{Y,ij} s_{Y,ik}) H_{ik} H_{ij}^\top \right\|_{op} \notag \\
& + \left\| \frac{1}{N}\sum_{i=1}^N \frac{2}{M(M-1)}\sum_{j=1}^{M-1} \sum_{k=j+1}^M s_{Y,ij} s_{Y,ik} H_{ij} H_{ij}^\top  - \mathbb E \left\{ [\mathbb E (s_{Y,ij}H_{ij}|X_i,A_i)] [\mathbb E (s_{Y,ij}H_{ij}^\top |X_i,A_i)]	\right\} \right\|_{op}\notag \\
& \leq \left\| \frac{1}{N}\sum_{i=1}^N \frac{2}{M(M-1)}\sum_{j=1}^{M-1} \sum_{k=j+1}^M(\hat s_{Y,ij}\hat s_{Y,ik} - s_{Y,ij} s_{Y,ik}) H_{ik} H_{ij}^\top \right\|_{op}  \notag \\
& + \left\|  \frac{2}{M(M-1)}\sum_{j=1}^{M-1} \sum_{k=j+1}^M \frac{1}{N}\sum_{i=1}^N\left\{s_{Y,ij} s_{Y,ik} H_{ij}H_{ik}^\top  -\left\{ [\mathbb E (s_{Y,ij}H_{ij}|X_i,A_i)] [\mathbb E (s_{Y,ij}H_{ij}^\top|X_i,A_i)] \right\}	\right\} \right\|_{op} \notag \\
& + \left\| \frac{1}{N}\sum_{i=1}^N  \left\{ [\mathbb E (s_{Y,ij}H_{ij}|X_i,A_i)] [\mathbb E (s_{Y,ij}H_{ij}^\top |X_i,A_i)] \right\}	-  \mathbb E \left\{ [\mathbb E (s_{Y,ij}H_{ij}|X_i,A_i)] [\mathbb E (s_{Y,ij}H_{ij}^\top |X_i,A_i)]	\right\} \right\|_{op} \notag \\
& \equiv ||I||_{op} + ||II||_{op} + ||III||_{op}. 
\end{align}

\textbf{Bound for $||I||_{op}$.}
We will establish that 
\begin{align}\label{eq:I_final}
    ||I||_{op} \leq C \left\| \frac{1}{N}\sum_{i=1}^N \frac{2}{M(M-1)}\sum_{j=1}^{M-1} \sum_{k=j+1}^M H_{ik}H_{ij}\right\|_{op} ||\hat \theta_{\text{initial}}   - \theta_{0,n}||_2 = O_P(n^{-1/2}).
\end{align}
First, we note that 
\begin{align}
    ||I||_{op} & = \sup_{u,v \in \Re^{k_n}, ||u||_2 = ||v||_2 = 1}\left| \frac{1}{N}\sum_{i=1}^N \frac{2}{M(M-1)}\sum_{j=1}^{M-1} \sum_{k=j+1}^M(\hat s_{Y,ij} - s_{Y,ij}) \hat s_{Y,ik} u^\top H_{ik} H_{ij}^\top v \right| \notag \\
    & + \sup_{u,v \in \Re^{k_n}, ||u||_2 = ||v||_2 = 1}\left| \frac{1}{N}\sum_{i=1}^N \frac{2}{M(M-1)}\sum_{j=1}^{M-1} \sum_{k=j+1}^M(\hat s_{Y,ij} - s_{Y,ij}) s_{Y,ik} u^\top H_{ik} H_{ij}^\top v \right| \notag \\
    & \leq \frac{C}{N}\sum_{i=1}^N \frac{2}{M(M-1)}\sum_{j=1}^{M-1} \sum_{k=j+1}^M n(Y_{ik}+\Lambda(R^\top_{ik}\hat \theta_{\text{initial}}) ||H_{ik}||_2 ||H_{ij}||_2 ||\hat \theta_{\text{initial}} - \theta_{0,n} ||_2  \notag \\
    & + \frac{C}{N}\sum_{i=1}^N \frac{2}{M(M-1)}\sum_{j=1}^{M-1} \sum_{k=j+1}^M n(Y_{ik}+\Lambda(R^\top_{i,k} \theta_{0,n}) ||H_{ik}||_2 ||H_{ij}||_2 ||\hat \theta_{\text{initial}} - \theta_{0,n} ||_2 \notag \\
    & \leq \frac{C}{N}\sum_{i=1}^N \frac{2}{M(M-1)}\sum_{j=1}^{M-1} \sum_{k=j+1}^M n(Y_{ik} - p_{ik}) ||H_{ik}||_2 ||H_{ij}||_2 \times O_P(n^{-1/2}) \notag \\
    & + \frac{1}{N}\sum_{i=1}^N \frac{2}{M(M-1)}\sum_{j=1}^{M-1} \sum_{k=j+1}^M ||H_{ik}||_2 ||H_{ij}||_2 \times O_P(n^{-1/2}),    
\end{align}

where we use the fact that $n p_{ij}$ is bounded, $ \max_{i,j} n \Lambda(R_{ij}^\top \hat \theta_{\text{initial}}) = O_P(1)$, and 
\begin{align*}
 |\hat s_{Y,ij} - S_{Y,ij}| = n |\Lambda(R_{ij}^\top \hat \theta_{\text{initial}}) - \Lambda(R_{ij}^\top \theta_{0,n})| \leq C||\hat \theta_{\text{initial}}   - \theta_{0,n}||_2  . 
\end{align*}

Then, we have
\begin{align*}
& \mathbb E    \left| \frac{1}{N}\sum_{i=1}^N \frac{2}{M(M-1)}\sum_{j=1}^{M-1} \sum_{k=j+1}^M n(Y_{ik} - p_{ik}) ||H_{ik}||_2 ||H_{ij}||_2\right|^2 \\
& \leq \mathbb E \left| \frac{1}{NM(M-1)}\sum_{i=1}^N \sum_{j=1}^{M-1}  \left[n (Y_{ij}-p_{ij}) ||H_{ij}||_2\right]  \left[ \sum_{k=j+1}^M ||H_{ik}||_2 \right]\right|^2 \\
 & = \frac{1}{N^2M^2(M-1)^2}\sum_{i=1}^N \sum_{j=1}^{M-1}  \mathbb E  \left[n (Y_{ij}-p_{ij}) ||H_{ij,l}||_2\right]^2  \left[ \sum_{k=j+1}^M ||H_{ik}||_2 \right]^2  \\
 & \leq \frac{C n M^2 k_n \zeta_n^2}{N^2M^2(M-1)^2} \sum_{i=1}^N \sum_{j=1}^{M-1}  ||H_{ij}||_2^2 = O_P\left( \frac{k_n^2 \zeta_n^2}{n} \right). 
\end{align*}
This implies
\begin{align*}
    \frac{1}{N}\sum_{i=1}^N \frac{2}{M(M-1)}\sum_{j=1}^{M-1} \sum_{k=j+1}^M n(Y_{ik} - p_{ik}) ||H_{ik}||_2 ||H_{ij}||_2 =  O_P\left(\sqrt{ \frac{k_n^2 \zeta_n^2}{n}} \right).
\end{align*}
In addition, we have 
\begin{align*}
     \frac{1}{N}\sum_{i=1}^N \frac{2}{M(M-1)}\sum_{j=1}^{M-1} \sum_{k=j+1}^M ||H_{ik}||_2 ||H_{ij}||_2  \leq     \frac{ \sqrt{k_n} \zeta_n }{NM}\sum_{i=1}^N \sum_{j=1}^{M-1} ||H_{ij}||_2 \leq O_P(k_n \zeta_n). 
\end{align*}

Therefore, given that $||\hat \theta_{\text{initial}}   - \theta_{0,n}||_2 = O_P(n^{-1/2})$, we have $||I||_{op} = O_P\left( \sqrt{\frac{k_n^2 \zeta_n^2}{n}} \right)$. 

\textbf{Bound for $||II||_{op}$.} We first define $\mathcal D_j = \{Y_{1j},\cdots,Y_{Nj},W_j,B_j\}$ and make our argument conditional on $\mathcal X_N $ and $\mathcal A_N$. In this view, 
\begin{align*}
  &  \frac{2}{M(M-1)}\sum_{j=1}^{M-1} \sum_{k=j+1}^M \frac{1}{N}\sum_{i=1}^N\left\{ s_{Y,ij} s_{Y,ik} H_{ij}H_{ik}^\top  -\left\{ [\mathbb E (s_{Y,ij}H_{ij}|X_i,A_i)] [\mathbb E (s_{Y,ij}H_{ij}^\top |X_i,A_i)] \right\} \right\} \\
  & \equiv \frac{1}{M(M-1)} \sum_{j,k \in [M], j \neq k}f(\mathcal{D}_j, \mathcal D_k),
\end{align*}
where 
\begin{align*}
f(\mathcal{D}_j, \mathcal D_k) =    \frac{1}{N}\sum_{i=1}^N \left\{s_{Y,ij} s_{Y,ik} H_{ij}H_{ik}^\top  - [\mathbb E (s_{Y,ij}H_{ij}|X_i,A_i)][\mathbb E (s_{Y,ij}H_{ij}^\top|X_i,A_i)]\right\}. 
\end{align*}

By Hoeffding decomposition, we have
\begin{align}\label{eq:hoeffding}
   II & = \frac{1}{M(M-1)} \sum_{j,k \in [M], j \neq k}f(\mathcal{D}_j, \mathcal D_k) \notag \\
    & = \frac{2}{M} \sum_{j =1}^M \mathbb E\left[ f(\mathcal{D}_j, \mathcal D_k)|\mathcal D_j, \mathcal X_N, \mathcal A_N \right] \notag \\
    & + \frac{1}{M(M-1)} \sum_{j,k \in [M], j \neq k}\left\{f(\mathcal{D}_j, \mathcal D_k) - \mathbb E\left[ f(\mathcal{D}_j, \mathcal D_k;u)|\mathcal D_k,\mathcal X_N, \mathcal A_N\right] - \mathbb E\left[ f(\mathcal{D}_j, \mathcal D_k;u)|\mathcal D_j, \mathcal X_N, \mathcal A_N\right]\right\} \notag \\
    & \equiv 2II_1 + II_2.
\end{align}

\textbf{Bound for $ ||II_1||_{op}$.} For the linear term $II_1$ on the RHS of \eqref{eq:hoeffding}, we have
\begin{align}\label{eq:II_1}
   II_1 & = \frac{1}{M} \sum_{j =1}^M \mathbb E\left[ f(\mathcal{D}_j, \mathcal D_k)|\mathcal D_j, \mathcal X_N, \mathcal A_N \right] \notag \\
   & =  \frac{1}{M N} \sum_{j =1}^M\sum_{i=1}^N \left[ s_{Y,ij} H_{ij}  - \mathbb E (s_{Y,ij} H_{ij}| X_i,A_i)\right]\mathbb E (s_{Y,ij} H_{ij}^\top | X_i,A_i) \notag \\
   & =  \frac{1}{M N} \sum_{j =1}^M\sum_{i=1}^N ( s_{Y,ij} - \overline s_{Y,ij}) H_{ij} \mathbb E (s_{Y,ij} H_{ij}^\top | X_i,A_i) \notag \\
   & +  \frac{1}{M N} \sum_{j =1}^M\sum_{i=1}^N \left[ \overline s_{Y,ij} H_{ij} - \mathbb E (s_{Y,ij} H_{ij}| X_i,A_i)\right]\mathbb E (s_{Y,ij} H_{ij}^\top | X_i,A_i) \notag \\
   & \equiv II_{1,1} + II_{1,2}.
\end{align}

Conditionally on $(\mathcal X_N, \mathcal A_N, \mathcal W_M, \mathcal B_M)$, $\{( s_{Y,ij} - \overline s_{Y,ij}) H_{ij}  \mathbb E (s_{Y,ij} H_{ij}^\top| X_i,A_i)\}_{i \in [N], j \in [M]}$ is a sequence of independent and mean-zero random variables. Therefore, we have
\begin{align*}
&    \mathbb E \left(||II_{1,1}||_F^2 \mid \mathcal X_N, \mathcal A_N, \mathcal W_M, \mathcal B_M\right) \\
& \leq \frac{1}{(MN)^2}\sum_{(\ell,\ell') \in [k_n] \times [k_n]} \sum_{i=1}^N \sum_{j=1}^M n^2 p_{ij}(1-p_{ij}) H_{ij,\ell}^2  \left[\mathbb E (s_{Y,ij} H_{ij,\ell'}| X_i,A_i) \right]^2 \\
& \leq C \frac{k_n n \zeta_n^2}{N M} \left[\sum_{\ell=1}^{k_n} \sum_{i=1}^N \sum_{j=1}^M \frac{1}{NM} H_{ij,\ell}^2\right] = O_P\left(\frac{k_n^2 \zeta_n^2}{n}\right),
\end{align*}
which implies 
\begin{align*}
    || II_{1,1}||_{op} =  O_P\left(\sqrt{\frac{k_n^2 \zeta_n^2}{n}}\right).
\end{align*}

For $II_{1,2}$, we have
\begin{align*}
    & \mathbb E \left( ||II_{1,2}||_F^2 \mid \mathcal X_N, \mathcal A_N \right) \\
    & =  \frac{1}{M^2}\sum_{(\ell,\ell') \in [k_n] \times [k_n]}  \sum_{j=1}^M \mathbb E \left(\left\{\frac{1}{N}\sum_{i=1}^N \left[ \overline s_{Y,ij} H_{ij,\ell} - \mathbb E (s_{Y,ij} H_{ij,\ell}| X_i,A_i)\right] \left[\mathbb E (s_{Y,ij} H_{ij,\ell'} | X_i,A_i) \right] \right\}^2 \mid \mathcal X_N, \mathcal A_N \right) \\
    & \leq \frac{1}{M^2}\sum_{(\ell,\ell') \in [k_n] \times [k_n]}  \sum_{j=1}^M \mathbb E \left(\frac{1}{N}\sum_{i=1}^N \left\{\left[ \overline s_{Y,ij} H_{ij,\ell} - \mathbb E (s_{Y,ij} H_{ij,\ell}| X_i,A_i)\right] \left[\mathbb E (s_{Y,ij} H_{ij,\ell'} | X_i,A_i) \right] \right\}^2 \mid \mathcal X_N, \mathcal A_N \right),
\end{align*}
and thus, 
\begin{align*}
    & \mathbb E \left( ||II_{1,2}||_F^2 \right) \\
  & \leq \frac{1}{N M^2}\sum_{(\ell,\ell') \in [k_n] \times [k_n]} \sum_{i=1}^N \sum_{j=1}^M \mathbb E  \left\{\left[ \overline s_{Y,ij} H_{ij,\ell} - \mathbb E (s_{Y,ij} H_{ij,\ell}| X_i,A_i)\right] \left[\mathbb E (s_{Y,ij} H_{ij,\ell'} | X_i,A_i) \right] \right\}^2 \\   
    & =O_P\left(\frac{k_n^2 \zeta_n^2}{n} \right),
\end{align*}
where use the fact that $|\overline s_{Y,ij}| \leq C<\infty$.

This implies $||II_{1,2}||_{op} =  O_P\left(\sqrt{\frac{k_n^2 \zeta_n^2}{n}}\right)||$, and thus, 
\begin{align}\label{eq:II1_final}
   || II_1||_{op} = O_P\left(  \sqrt{\frac{k_n^2 \zeta_n^2}{n}}\right).
\end{align}

\textbf{Bound for $||II_2||_{op}$.} Next, we turn to $II_2(u)$ in \eqref{eq:hoeffding}. We note that, conditionally on $(\mathcal X_N, \mathcal A_N)$, $II_2$ is a second-order degenerate U-statistic (based on observations $(\mathcal D_j,\mathcal D_k)$) which has an increasing dimension of $k_n \times k_n$. Let 
\begin{align*}
    f_{\ell,\ell'}(\mathcal{D}_j, \mathcal D_k) =    \frac{1}{N}\sum_{i=1}^N \left\{s_{Y,ij} s_{Y,ik} H_{ij,\ell}H_{ik,\ell'}  - [\mathbb E (s_{Y,ij}H_{ij,\ell}|X_i,A_i)][\mathbb E (s_{Y,ij}H_{ij,\ell'}|X_i,A_i)]\right\}. 
\end{align*} 
We have
\begin{align*}
\mathbb E \left( ||II_2||_F^2 \mid \mathcal X_N, \mathcal A_N\right) & \leq \frac{1}{M^4} \sum_{(\ell,\ell') \in [k_n] \times [k_n]}\sum_{(j,k) \in [M] \times [M], j\neq k} \mathbb E \left( f_{\ell,\ell'}^2(\mathcal D_j, \mathcal D_k)\mid  \mathcal X_N, \mathcal A_N \right). 
\end{align*}

Further note that $ \{s_{Y,ij} s_{Y,ik} H_{ij,\ell}H_{ik,\ell'}\}_{i \in [N]}$ are independent conditionally on $(\mathcal X_N, \mathcal A_N, \mathcal W_M, \mathcal B_M)$ and 
\begin{align*}
    \mathbb E\left[ s_{Y,ij} s_{Y,ik} H_{ij,\ell}H_{ik,\ell'} \mid \mathcal X_N, \mathcal A_N, \mathcal W_M, \mathcal B_M\right] = n^2 p_{ij}p_{ik}H_{ij,\ell}H_{ik,\ell'}.
\end{align*}

Then, we have
\begin{align*}
     f_{\ell,\ell'}^2(\mathcal{D}_j, \mathcal D_k) & \leq 2\left[  \frac{1}{N}\sum_{i=1}^N(s_{Y,ij} s_{Y,ik} - n^2p_{ij}p_{ik} )H_{ij,\ell}H_{ik,\ell'}\right]^2 \\
     & + 2 \left[  \frac{1}{N}\sum_{i=1}^Nn^2p_{ij}p_{ik}H_{ij,\ell}H_{ik,\ell'} -  [\mathbb E (s_{Y,ij}H_{ij,\ell}|X_i,A_i)][\mathbb E (s_{Y,ij}H_{ij,\ell'}|X_i,A_i)]\right]^2 \\
     & \leq  C\left[  \frac{1}{N}\sum_{i=1}^N(s_{Y,ij} s_{Y,ik} - n^2p_{ij}p_{ik} )H_{ij,\ell}H_{ik,\ell'}\right]^2 + C \zeta_n^4.
\end{align*}
and thus, 
\begin{align*}
& \mathbb E \left( ||II_2||_F^2 \right) \\
& \leq \frac{C}{M^4} \sum_{(\ell,\ell') \in [k_n] \times [k_n]}\sum_{(j,k) \in [M] \times [M], j\neq k} \mathbb E \left( f_{\ell,\ell'}^2(\mathcal D_j, \mathcal D_k) \right) \\
& \leq \frac{C}{M^4} \sum_{(\ell,\ell') \in [k_n] \times [k_n]}\sum_{(j,k) \in [M] \times [M], j\neq k} \left\{ \mathbb E \left[  \frac{1}{N}\sum_{i=1}^N(s_{Y,ij} s_{Y,ik} - n^2p_{ij}p_{ik} )H_{ij,\ell}H_{ik,\ell'}\right]^2 + \zeta_n^4  \right\} \\
& = \frac{1}{M^4} \sum_{(\ell,\ell') \in [k_n] \times [k_n]}\sum_{(j,k) \in [M] \times [M], j\neq k} \mathbb E \left\{  \mathbb E\left( \left[  \frac{1}{N}\sum_{i=1}^N(s_{Y,ij} s_{Y,ik} - n^2p_{ij}p_{ik} )H_{ij,\ell}H_{ik,\ell'}\right]^2 \mid  \mathcal X_N, \mathcal A_N, \mathcal W_M, \mathcal B_M \right) \right\}  \\
& + \frac{C k_n^2  \zeta_n^4}{n^2} \\
& = \frac{C}{M^4N^2} \sum_{(\ell,\ell') \in [k_n] \times [k_n]}\sum_{(j,k) \in [M] \times [M], j\neq k}   \mathbb E\left[(s_{Y,ij} s_{Y,ik} - n^2p_{ij}p_{ik} )H_{ij,\ell}H_{ik,\ell'}\right]^2 + \frac{C k_n^2  \zeta_n^4}{n^2} \\
& \leq \frac{n^2 k_n \zeta_n^2}{N^2 M^4} \sum_{(\ell,\ell') \in [k_n] \times [k_n]}\sum_{(j,k) \in [M] \times [M], j\neq k} \sum_{i =1}^N \mathbb E H_{ij,\ell}^2 + \frac{C k_n^2  \zeta_n^4}{n^2} \\
& = O\left( \frac{k_n^2 \zeta_n^2}{n} \right).
\end{align*}
This implies 
\begin{align}\label{eq:II2_final}
    ||II_2||_{op} \leq ||II_2||_{F} = O_P\left( \sqrt{\frac{k_n^2 \zeta_n^2}{n}} \right).
\end{align}
Combining \eqref{eq:II2_final} with \eqref{eq:II1_final}, we have
\begin{align*}
    ||II||_{op}  = O_P\left( \sqrt{\frac{k_n^2 \zeta_n^2}{n}} \right).
\end{align*}

\textbf{Bound for $||III||_{op}$.} By a similar argument to the analysis for $II_1$, we have 
\begin{align}\label{eq:III_final}
 ||III||_{op} =  O_P\left(  \sqrt{\frac{k_n^2 \zeta_n^2}{n} } \right).
\end{align}

Combining \eqref{eq:I_final}, \eqref{eq:II2_final}, and \eqref{eq:III_final}, we have the desired result that 
\begin{align*}
    || \hat \Sigma^a_{YY,n} -  \Sigma^a_{YY,n}||_{op} = O_P\left(  \sqrt{\frac{k_n^2 \zeta_n^2}{n} } \right).
\end{align*}    
The other result can be established in the same manner. 
\end{proof}

\newpage 
\section{Additional Tables}
\subsection{Empirical Application}
\begin{table}[ht!]
\centering
\sisetup{round-mode=places}
\begin{threeparttable}
\caption{Results Robust Estimator with $\overline{\sigma}^2 = 1$}
\label{table:robust_appx_sigma_1}
\input{Table_Full_RobustR0L_Mn1_Hn3.tex}    
\begin{tablenotes}
\item[$^{1}$] \footnotesize{Total sample includes $N=1776$ authors and $M=1600$ articles.}
\end{tablenotes}
\end{threeparttable}
\end{table}

\begin{table}[ht!]
\centering
\sisetup{round-mode=places}
\begin{threeparttable}
\caption{Results Robust Estimator with $\overline{\sigma}^2 = 2$}
\label{table:robust_appx_sigma_2}
\input{Table_Full_RobustR0L_Mn2_Hn3.tex}    
\begin{tablenotes}
\item[$^{1}$] \footnotesize{Total sample includes $N=1776$ authors and $M=1600$ articles.}
\end{tablenotes}
\end{threeparttable}
\end{table}

\begin{table}[ht!]
\centering
\sisetup{round-mode=places}
\begin{threeparttable}
\caption{Results Robust Estimator with $\overline{\sigma}^2 = 3$}
\label{table:robust_appx_sigma_3}
\input{Table_Full_RobustR0L_Mn3_Hn3.tex}    
\begin{tablenotes}
\item[$^{1}$] \footnotesize{Total sample includes $N=1776$ authors and $M=1600$ articles.}
\end{tablenotes}
\end{threeparttable}
\end{table}

\subsection{Additional DGPs for Local Misspecification}
\begin{landscape}
\begin{sidewaystable}
\centering    
\sisetup{round-mode=places}
\caption{$\hat \beta_n$ Robust Estimator}
\label{table:beta_robust_DGP_functional}
\input{Table_Beta_R01_DGP_Functional_Hn2.tex}
\begin{flushleft}
\footnotesize{$^{1}$ Number of Monte Carlo simulations is $2,000$. $^{2}$ Local misspecification: Functional misspecification. $^{3}$ Sieves dimension $k_n=2$.}    
\end{flushleft}
\end{sidewaystable}
\end{landscape}

\begin{landscape}
\begin{sidewaystable}
\centering
\sisetup{round-mode=places}
\caption{$\hat \beta_n$ Logistic Estimator}
\label{table:beta_logistic_DGP_functional}
\input{Table_Beta_Log_DGP_Functional_Hn2.tex}
\begin{flushleft}
\footnotesize{$^{1}$ Number of Monte Carlo simulations is $2,000$. $^{2}$ Local misspecification: Functional misspecification.}    
\end{flushleft}
\end{sidewaystable}
\end{landscape}

\begin{landscape}
\begin{sidewaystable}
\centering
\sisetup{round-mode=places}
\caption{$\hat \beta_n$ Poisson Estimator}
\label{table:beta_poisson_DGP_functional}
\input{Table_Beta_Poi_DGP_Functional_Hn2.tex}
\begin{flushleft}
\footnotesize{$^{1}$ Number of Monte Carlo simulations is $2,000$. $^{2}$ Local misspecification: Functional misspecification.} 
\end{flushleft}
\end{sidewaystable}
\end{landscape}

\begin{table}[ht!]
\centering
\sisetup{round-mode=places}
\begin{threeparttable}
\caption{$\hat \beta_n$ Ratios}
\label{table:beta_ratios_DGP_functional}
\input{Table_Beta_R01_DGP_Functional_Hn2_Ratios.tex}
\begin{tablenotes}
\item[$^{1}$] \footnotesize{Number of Monte Carlo simulations is $2,000$.} 
\item[$^{2}$] \footnotesize{DGP Local misspecification: Functional misspecification.} 
\item[$^{3}$] \footnotesize{Sieves dimension $k_n=2$.}     
\end{tablenotes}
\end{threeparttable}
\end{table}

\begin{landscape}
\begin{sidewaystable}
\centering    
\sisetup{round-mode=places}
\caption{$\hat \beta_n$ Robust Estimator}
\label{table:beta_robust_DGP_normal}
\input{Table_Beta_R01_DGP_Normal_Hn2.tex}
\begin{flushleft}
\footnotesize{$^{1}$ Number of Monte Carlo simulations is $2,000$. $^{2}$ Local misspecification: Semiparametric distribution. $^{3}$ Sieves dimension $k_n=2$.}    
\end{flushleft}
\end{sidewaystable}
\end{landscape}

\begin{landscape}
\begin{sidewaystable}
\centering
\sisetup{round-mode=places}
\caption{$\hat \beta_n$ Logistic Estimator}
\label{table:beta_logistic_DGP_normal}
\input{Table_Beta_Log_DGP_Normal_Hn2.tex}
\begin{flushleft}
\footnotesize{$^{1}$ Number of Monte Carlo simulations is $2,000$. $^{2}$ Local misspecification: Semiparametric distribution.}    
\end{flushleft}
\end{sidewaystable}
\end{landscape}

\begin{landscape}
\begin{sidewaystable}
\centering
\sisetup{round-mode=places}
\caption{$\hat \beta_n$ Poisson Estimator}
\label{table:beta_poisson_DGP_normal}
\input{Table_Beta_Poi_DGP_Normal_Hn2.tex}
\begin{flushleft}
\footnotesize{$^{1}$ Number of Monte Carlo simulations is $2,000$. $^{2}$ Local misspecification: Semiparametric distribution.}    
\end{flushleft}
\end{sidewaystable}
\end{landscape}

\begin{table}[ht!]
\centering
\sisetup{round-mode=places}
\begin{threeparttable}
\caption{$\hat \beta_n$ Ratios}
\label{table:beta_ratios_DGP_normal}
\input{Table_Beta_R01_DGP_Normal_Hn2_Ratios.tex}
\begin{tablenotes}
\item[$^{1}$] \footnotesize{Number of Monte Carlo simulations is $2,000$.} 
\item[$^{2}$] \footnotesize{DGP Local misspecification: Semiparametric distribution.} 
\item[$^{3}$] \footnotesize{Sieves dimension $k_n=2$.}     
\end{tablenotes}
\end{threeparttable}
\end{table}

\begin{landscape}    
\begin{sidewaystable}
\centering
\sisetup{round-mode=places}
\caption{$\hat \Psi_n$ Robust Estimator}
\label{table:psi_robust_DGP_functional}
\input{Table_Psi_R01_DGP_Functional_Hn3.tex}
\begin{flushleft}
\footnotesize{$^{1}$ Number of Monte Carlo simulations is $2,000$. $^{2}$ Local misspecification: Functional misspecification. $^{3}$  Sieves dimension $k_n=3$.}    
\end{flushleft}
\end{sidewaystable}
\end{landscape}

\begin{landscape}
\begin{sidewaystable}
\centering
\sisetup{round-mode=places}
\caption{$\hat \Psi_n$ Logistic Estimator}
\label{table:psi_logistic_DGP_functional}
\input{Table_Psi_Log_DGP_Functional_Hn3.tex}
\begin{flushleft}
\footnotesize{$^{1}$ Number of Monte Carlo simulations is $2,000$. $^{2}$ Local misspecification: Functional misspecification.}    
\end{flushleft}
\end{sidewaystable}
\end{landscape}

\begin{landscape}
\begin{sidewaystable}
\centering
\sisetup{round-mode=places}
\caption{$\hat \Psi_n$ Poisson Estimator}
\label{table:psi_poisson_DGP_functional}
\input{Table_Psi_Poi_DGP_Functional_Hn3.tex}
\begin{flushleft}
\footnotesize{$^{1}$ Number of Monte Carlo simulations is $2,000$. $^{2}$ Local misspecification: Functional misspecification.}    
\end{flushleft}
\end{sidewaystable}
\end{landscape}

\begin{table}[ht!]
\centering
\sisetup{round-mode=places}
\begin{threeparttable}
\caption{$\hat \Psi_n$ Ratios}
\label{table:psi_ratios_DGP_functional}
\input{Table_Psi_R01_DGP_Functional_Hn3_Ratios.tex}
\begin{tablenotes}
\item[$^{1}$] \footnotesize{Number of Monte Carlo simulations is $2,000$.} 
\item[$^{2}$] \footnotesize{DGP Local misspecification: Functional misspecification.} 
\item[$^{3}$] \footnotesize{Sieves dimension $k_n=3$.}     
\end{tablenotes}
\end{threeparttable}
\end{table}

\begin{landscape}    
\begin{sidewaystable}
\centering
\sisetup{round-mode=places}
\caption{$\hat \Psi_n$ Robust Estimator}
\label{table:psi_robust_DGP_normal}
\input{Table_Psi_R01_DGP_Normal_Hn3.tex}
\begin{flushleft}
\footnotesize{$^{1}$ Number of Monte Carlo simulations is $2,000$. $^{2}$ Local misspecification: Semiparametric distribution. $^{3}$ Sieves dimension $k_n=3$.}    
\end{flushleft}
\end{sidewaystable}
\end{landscape}

\begin{landscape}
\begin{sidewaystable}
\centering
\sisetup{round-mode=places}
\caption{$\hat \Psi_n$ Logistic Estimator}
\label{table:psi_logistic_DGP_normal}
\input{Table_Psi_Log_DGP_Normal_Hn3.tex}
\begin{flushleft}
\footnotesize{$^{1}$ Number of Monte Carlo simulations is $2,000$. $^{2}$ Local misspecification: Semiparametric distribution.}    
\end{flushleft}
\end{sidewaystable}
\end{landscape}

\begin{landscape}
\begin{sidewaystable}
\centering
\sisetup{round-mode=places}
\caption{$\hat \Psi_n$ Poisson Estimator}
\label{table:psi_poisson_DGP_normal}
\input{Table_Psi_Poi_DGP_Normal_Hn3.tex}
\begin{flushleft}
\footnotesize{$^{1}$ Number of Monte Carlo simulations is $2,000$. $^{2}$ Local misspecification: Semiparametric distribution.}    
\end{flushleft}
\end{sidewaystable}
\end{landscape}

\begin{table}[ht!]
\centering
\sisetup{round-mode=places}
\begin{threeparttable}
\caption{$\hat \Psi_n$ Ratios}
\label{table:psi_ratios_DGP_normal}
\input{Table_Psi_R01_DGP_Normal_Hn3_Ratios.tex}
\begin{tablenotes}
\item[$^{1}$] \footnotesize{Number of Monte Carlo simulations is $2,000$.} 
\item[$^{2}$] \footnotesize{DGP Local misspecification: Semiparametric distribution.} 
\item[$^{3}$] \footnotesize{Sieves dimension $k_n=3$.}     
\end{tablenotes}
\end{threeparttable}
\end{table}

\subsection{Additional Simulations with Initial Poisson Estimator}\label{sec:appx:sim_poisson_init}

\begin{landscape}
\begin{sidewaystable}
\centering    
\sisetup{round-mode=places}
\caption{$\hat \beta_n$ Robust Estimator with $\hat\theta_{\text{initial}}$ Poisson}
\input{Table_Beta_R02_DGP_Homophily_Hn2.tex}
\begin{flushleft}
\footnotesize{$^{1}$ Number of Monte Carlo simulations is $2,000$. $^{2}$ Local misspecification: Latent Homophily. $^{3}$ Sieves dimension $k_n=2$.}    
\end{flushleft}
\end{sidewaystable}
\end{landscape}

\begin{landscape}
\begin{sidewaystable}
\centering    
\sisetup{round-mode=places}
\caption{$\hat \beta_n$ Robust Estimator with $\hat\theta_{\text{initial}}$ Poisson}
\input{Table_Beta_R02_DGP_Functional_Hn2.tex}
\begin{flushleft}
\footnotesize{$^{1}$ Number of Monte Carlo simulations is $2,000$. $^{2}$ Local misspecification: Functional misspecification. $^{3}$ Sieves dimension $k_n=2$.}    
\end{flushleft}
\end{sidewaystable}
\end{landscape}

\begin{landscape}
\begin{sidewaystable}
\centering    
\sisetup{round-mode=places}
\caption{$\hat \beta_n$ Robust Estimator with $\hat\theta_{\text{initial}}$ Poisson}
\input{Table_Beta_R02_DGP_Normal_Hn2.tex}
\begin{flushleft}
\footnotesize{$^{1}$ Number of Monte Carlo simulations is $2,000$. $^{2}$ Local misspecification: Semiparametric distribution. $^{3}$ Sieves dimension $k_n=2$.}    
\end{flushleft}
\end{sidewaystable}
\end{landscape}

\begin{landscape}    
\begin{sidewaystable}
\centering
\sisetup{round-mode=places}
\caption{$\hat \Psi_n$ Robust Estimator with $\hat\theta_{\text{initial}}$ Poisson}
\input{Table_Psi_R02_DGP_Homophily_Hn3.tex}
\begin{flushleft}
\footnotesize{$^{1}$ Number of Monte Carlo simulations is $2,000$. $^{2}$ Local misspecification: Latent Homophily. $^{3}$  Sieves dimension $k_n=3$.}    
\end{flushleft}
\end{sidewaystable}
\end{landscape}

\begin{landscape}    
\begin{sidewaystable}
\centering
\sisetup{round-mode=places}
\caption{$\hat \Psi_n$ Robust Estimator with $\hat\theta_{\text{initial}}$ Poisson}
\input{Table_Psi_R02_DGP_Functional_Hn3.tex}
\begin{flushleft}
\footnotesize{$^{1}$ Number of Monte Carlo simulations is $2,000$. $^{2}$ Local misspecification: Functional misspecification. $^{3}$  Sieves dimension $k_n=3$.}    
\end{flushleft}
\end{sidewaystable}
\end{landscape}

\begin{landscape}    
\begin{sidewaystable}
\centering
\sisetup{round-mode=places}
\caption{$\hat \Psi_n$ Robust Estimator with $\hat\theta_{\text{initial}}$ Poisson}
\input{Table_Psi_R02_DGP_Normal_Hn3.tex}
\begin{flushleft}
\footnotesize{$^{1}$ Number of Monte Carlo simulations is $2,000$. $^{2}$ Local misspecification: Semiparametric distribution. $^{3}$ Sieves dimension $k_n=3$.}    
\end{flushleft}
\end{sidewaystable}
\end{landscape}

\subsection{Additional Simulations with Larger Sieves Dimension}\label{sec:appx:sim_larger_sieves}
\begin{landscape}    
\begin{sidewaystable}
\centering
\sisetup{round-mode=places}
\caption{$\hat \beta_n$ Robust Estimator}
\label{table:beta_robust_DGP_homophily_kn=4}
\input{Table_Beta_R01_DGP_Homophily_Hn4.tex}
\begin{flushleft}
\footnotesize{$^{1}$ Number of Monte Carlo simulations is $2,000$. $^{2}$ Local misspecification: Latent Homophily. $^{3}$  Sieves dimension $k_n=4$.}    
\end{flushleft}
\end{sidewaystable}
\end{landscape}

\begin{landscape}    
\begin{sidewaystable}
\centering
\sisetup{round-mode=places}
\caption{$\hat \beta_n$ Robust Estimator}
\label{table:beta_robust_DGP_homophily_kn=5}
\input{Table_Beta_R01_DGP_Homophily_Hn5.tex}
\begin{flushleft}
\footnotesize{$^{1}$ Number of Monte Carlo simulations is $2,000$. $^{2}$ Local misspecification: Latent Homophily. $^{3}$  Sieves dimension $k_n=5$.}    
\end{flushleft}
\end{sidewaystable}
\end{landscape}

\begin{landscape}    
\begin{sidewaystable}
\centering
\sisetup{round-mode=places}
\caption{$\hat \Psi_n$ Robust Estimator}
\label{table:psi_robust_DGP_homophily_kn=4}
\input{Table_Psi_R01_DGP_Homophily_Hn4.tex}
\begin{flushleft}
\footnotesize{$^{1}$ Number of Monte Carlo simulations is $2,000$. $^{2}$ Local misspecification: Latent Homophily. $^{3}$  Sieves dimension $k_n=4$.}    
\end{flushleft}
\end{sidewaystable}
\end{landscape}

\begin{landscape}    
\begin{sidewaystable}
\centering
\sisetup{round-mode=places}
\caption{$\hat \Psi_n$ Robust Estimator}
\label{table:psi_robust_DGP_homophily_kn=5}
\input{Table_Psi_R01_DGP_Homophily_Hn5.tex}
\begin{flushleft}
\footnotesize{$^{1}$ Number of Monte Carlo simulations is $2,000$. $^{2}$ Local misspecification: Latent Homophily. $^{3}$  Sieves dimension $k_n=5$.}    
\end{flushleft}
\end{sidewaystable}
\end{landscape}

\clearpage
\newpage
\singlespacing
\bibliography{refero}

\begin{thebibliography}{51}
\newcommand{\enquote}[1]{``#1''}
\expandafter\ifx\csname natexlab\endcsname\relax\def\natexlab#1{#1}\fi

\bibitem[\protect\citeauthoryear{Anderson and Richards-Shubik}{Anderson and
  Richards-Shubik}{2022}]{anderson/richards_shubik:2022}
\textsc{Anderson, K.~A. and S.~Richards-Shubik} (2022): \enquote{Collaborative
  production in science: An empirical analysis of coauthorships in economics,}
  \emph{Review of Economics and Statistics}, 104, 1241--1255.

\bibitem[\protect\citeauthoryear{Andrews, Gentzkow, and Shapiro}{Andrews
  et~al.}{2017}]{andrews/gentzkov/shapiro:2017}
\textsc{Andrews, I., M.~Gentzkow, and J.~M. Shapiro} (2017): \enquote{Measuring
  the sensitivity of parameter estimates to estimation moments,} \emph{The
  Quarterly Journal of Economics}, 132, 1553--1592.

\bibitem[\protect\citeauthoryear{Andrews, Gentzkow, and Shapiro}{Andrews
  et~al.}{2020}]{andrews/gentzkov/shapiro:2020}
---\hspace{-.1pt}---\hspace{-.1pt}--- (2020): \enquote{On the informativeness
  of descriptive statistics for structural estimates,} \emph{Econometrica}, 88,
  2231--2258.

\bibitem[\protect\citeauthoryear{Armstrong, Kline, and Sun}{Armstrong
  et~al.}{2023}]{armstrong/kline/sun:2023}
\textsc{Armstrong, T.~B., P.~Kline, and L.~Sun} (2023): \enquote{Adapting to
  Misspecification,} \emph{arXiv preprint arXiv:2305.14265}.

\bibitem[\protect\citeauthoryear{Armstrong and Koles{\'a}r}{Armstrong and
  Koles{\'a}r}{2021}]{armstrong/kolesar:2021}
\textsc{Armstrong, T.~B. and M.~Koles{\'a}r} (2021): \enquote{Sensitivity
  analysis using approximate moment condition models,} \emph{Quantitative
  Economics}, 12, 77--108.

\bibitem[\protect\citeauthoryear{Armstrong, Weidner, and Zeleneev}{Armstrong
  et~al.}{2022}]{armstrong/weidner/zeleneev:2022}
\textsc{Armstrong, T.~B., M.~Weidner, and A.~Zeleneev} (2022): \enquote{Robust
  Estimation and Inference in Panels with Interactive Fixed Effects,}
  \emph{arXiv preprint arXiv:2210.06639}.

\bibitem[\protect\citeauthoryear{Bonhomme}{Bonhomme}{2021}]{bonhomme:2021}
\textsc{Bonhomme, S.} (2021): \enquote{Teams: Heterogeneity, Sorting, and
  Complementarity,} \emph{arXiv preprint arXiv:2102.01802}.

\bibitem[\protect\citeauthoryear{Bonhomme and Weidner}{Bonhomme and
  Weidner}{2022}]{bonhomme/weidner:2022}
\textsc{Bonhomme, S. and M.~Weidner} (2022): \enquote{Minimizing sensitivity to
  model misspecification,} \emph{Quantitative Economics}, 13, 907--954.

\bibitem[\protect\citeauthoryear{Boschini and Sj{\"o}gren}{Boschini and
  Sj{\"o}gren}{2007}]{boschini/sjogren:2007}
\textsc{Boschini, A. and A.~Sj{\"o}gren} (2007): \enquote{Is team formation
  gender neutral? Evidence from coauthorship patterns,} \emph{Journal of Labor
  Economics}, 25, 325--365.

\bibitem[\protect\citeauthoryear{Bugni, Canay, and Guggenberger}{Bugni
  et~al.}{2012}]{bugni/canay/guggenberger:2012}
\textsc{Bugni, F.~A., I.~A. Canay, and P.~Guggenberger} (2012):
  \enquote{Distortions of asymptotic confidence size in locally misspecified
  moment inequality models,} \emph{Econometrica}, 80, 1741--1768.

\bibitem[\protect\citeauthoryear{Bugni and Ura}{Bugni and
  Ura}{2019}]{bugni/ura:2019}
\textsc{Bugni, F.~A. and T.~Ura} (2019): \enquote{Inference in dynamic discrete
  choice problems under local misspecification,} \emph{Quantitative Economics},
  10, 67--103.

\bibitem[\protect\citeauthoryear{Chen, Tamer, and Torgovitsky}{Chen
  et~al.}{2011}]{chen/tamer/torgovitsky:2011}
\textsc{Chen, X., E.~T. Tamer, and A.~Torgovitsky} (2011): \enquote{Sensitivity
  analysis in semiparametric likelihood models,} \emph{Cowles Foundation
  discussion paper}.

\bibitem[\protect\citeauthoryear{Chernozhukov, Chetverikov, and
  Kato}{Chernozhukov et~al.}{2014}]{CCK14}
\textsc{Chernozhukov, V., D.~Chetverikov, and K.~Kato} (2014):
  \enquote{Gaussian approximation of suprema of empirical processes,}
  \emph{Annals of Statistics}, 42, 1564--1597.

\bibitem[\protect\citeauthoryear{Chiang, Kato, and Sasaki}{Chiang
  et~al.}{2023}]{chiang/kato/sasaki:2023}
\textsc{Chiang, H.~D., K.~Kato, and Y.~Sasaki} (2023): \enquote{Inference for
  high-dimensional exchangeable arrays,} \emph{Journal of the American
  Statistical Association}, 118, 1595--1605.

\bibitem[\protect\citeauthoryear{Christensen and Connault}{Christensen and
  Connault}{2023}]{christensen/connault:2023}
\textsc{Christensen, T. and B.~Connault} (2023): \enquote{Counterfactual
  sensitivity and robustness,} \emph{Econometrica}, 91, 263--298.

\bibitem[\protect\citeauthoryear{Colussi}{Colussi}{2018}]{colussi:2018}
\textsc{Colussi, T.} (2018): \enquote{Social ties in academia: A friend is a
  treasure,} \emph{Review of Economics and Statistics}, 100, 45--50.

\bibitem[\protect\citeauthoryear{Conley, Hansen, and Rossi}{Conley
  et~al.}{2012}]{conley/hansen/rossi:2012}
\textsc{Conley, T.~G., C.~B. Hansen, and P.~E. Rossi} (2012):
  \enquote{Plausibly exogenous,} \emph{Review of Economics and Statistics}, 94,
  260--272.

\bibitem[\protect\citeauthoryear{Davezies, D’Haultf{\oe}uille, and
  Guyonvarch}{Davezies et~al.}{2021}]{davezies/dhaultfoeuille/guyonvarch:2021}
\textsc{Davezies, L., X.~D’Haultf{\oe}uille, and Y.~Guyonvarch} (2021):
  \enquote{Empirical Process Results for Exchangeable Arrays,} \emph{The Annals
  of Statistics}, 49, 845--862.

\bibitem[\protect\citeauthoryear{de~Paula}{de~Paula}{2020}]{depaula:2020}
\textsc{de~Paula, {\'A}.} (2020): \enquote{Econometric models of network
  formation,} \emph{Annual Review of Economics}, 12, 775--799.

\bibitem[\protect\citeauthoryear{Donoho}{Donoho}{1994}]{donoho:1994}
\textsc{Donoho, D.~L.} (1994): \enquote{Statistical estimation and optimal
  recovery,} \emph{The Annals of Statistics}, 22, 238--270.

\bibitem[\protect\citeauthoryear{Ductor}{Ductor}{2015}]{ductor:2015}
\textsc{Ductor, L.} (2015): \enquote{Does co-authorship lead to higher academic
  productivity?} \emph{Oxford Bulletin of Economics and Statistics}, 77,
  385--407.

\bibitem[\protect\citeauthoryear{Ductor, Fafchamps, Goyal, and Van~der
  Leij}{Ductor et~al.}{2014}]{ductor/etal:2014}
\textsc{Ductor, L., M.~Fafchamps, S.~Goyal, and M.~J. Van~der Leij} (2014):
  \enquote{Social networks and research output,} \emph{Review of Economics and
  Statistics}, 96, 936--948.

\bibitem[\protect\citeauthoryear{Fafchamps, Van~der Leij, and Goyal}{Fafchamps
  et~al.}{2010}]{fafchamps/vaderleij/goyal:2010}
\textsc{Fafchamps, M., M.~J. Van~der Leij, and S.~Goyal} (2010):
  \enquote{Matching and network effects,} \emph{Journal of the European
  Economic Association}, 8, 203--231.

\bibitem[\protect\citeauthoryear{Goyal, Van Der~Leij, and
  Moraga-Gonz{\'a}lez}{Goyal et~al.}{2006}]{goyal/etal:l2006}
\textsc{Goyal, S., M.~J. Van Der~Leij, and J.~L. Moraga-Gonz{\'a}lez} (2006):
  \enquote{Economics: An emerging small world,} \emph{Journal of Political
  Economy}, 114, 403--412.

\bibitem[\protect\citeauthoryear{Graham}{Graham}{2020}]{graham:2020}
\textsc{Graham, B.} (2020): \enquote{Handbook of Econometrics, volume 7A,
  chapter Network data,} .

\bibitem[\protect\citeauthoryear{Graham}{Graham}{2017}]{graham:2017}
\textsc{Graham, B.~S.} (2017): \enquote{An econometric model of network
  formation with degree heterogeneity,} \emph{Econometrica}, 85, 1033--1063.

\bibitem[\protect\citeauthoryear{Graham}{Graham}{2022}]{graham:2022}
---\hspace{-.1pt}---\hspace{-.1pt}--- (2022): \enquote{Sparse network
  asymptotics for logistic regression,} \emph{arXiv preprint arXiv.2010.04703}.

\bibitem[\protect\citeauthoryear{Graham, Niu, and Powell}{Graham
  et~al.}{2019}]{graham/niu/powell:2019}
\textsc{Graham, B.~S., F.~Niu, and J.~L. Powell} (2019): \enquote{Kernel
  density estimation for undirected dyadic data,} \emph{arXiv preprint
  arXiv:1907.13630}.

\bibitem[\protect\citeauthoryear{Graham, Niu, and Powell}{Graham
  et~al.}{2021}]{GNP21}
---\hspace{-.1pt}---\hspace{-.1pt}--- (2021): \enquote{Minimax risk and uniform
  convergence rates for nonparametric dyadic regression,} Tech. rep., National
  Bureau of Economic Research.

\bibitem[\protect\citeauthoryear{Hsieh, Konig, Liu, and Zimmermann}{Hsieh
  et~al.}{2022}]{hsieh/konig/liu/zimmerman:2022}
\textsc{Hsieh, C.-S., M.~Konig, X.~Liu, and C.~Zimmermann} (2022):
  \enquote{Collaboration in bipartite networks,} Tech. rep., Working Paper.

\bibitem[\protect\citeauthoryear{Huber}{Huber}{1964}]{huber:1964}
\textsc{Huber, P.} (1964): \enquote{Robust estimation of a location parameter,}
  \emph{Ann. Math. Statist.}, 35, 73--101.

\bibitem[\protect\citeauthoryear{Huber and Ronchetti}{Huber and
  Ronchetti}{2009}]{huber/ronchetti:2009}
\textsc{Huber, P.~J. and E.~M. Ronchetti} (2009): \emph{Robust statistics},
  John Wiley \& Sons Hoboken, NJ, USA.

\bibitem[\protect\citeauthoryear{Imbens}{Imbens}{2003}]{imbens:2003}
\textsc{Imbens, G.~W.} (2003): \enquote{Sensitivity to exogeneity assumptions
  in program evaluation,} \emph{American Economic Review}, 93, 126--132.

\bibitem[\protect\citeauthoryear{Jochmans}{Jochmans}{2018}]{jochmans:2018}
\textsc{Jochmans, K.} (2018): \enquote{Semiparametric analysis of network
  formation,} \emph{Journal of Business \& Economic Statistics}, 36, 705--713.

\bibitem[\protect\citeauthoryear{Kitamura, Otsu, and Evdokimov}{Kitamura
  et~al.}{2013}]{kitamura/otsu/evdokimov:2013}
\textsc{Kitamura, Y., T.~Otsu, and K.~Evdokimov} (2013): \enquote{Robustness,
  infinitesimal neighborhoods, and moment restrictions,} \emph{Econometrica},
  81, 1185--1201.

\bibitem[\protect\citeauthoryear{Leamer}{Leamer}{1985}]{leamer:1985}
\textsc{Leamer, E.~E.} (1985): \enquote{Sensitivity analyses would help,}
  \emph{The American Economic Review}, 75, 308--313.

\bibitem[\protect\citeauthoryear{Leung}{Leung}{2015}]{leung:2015}
\textsc{Leung, M.} (2015): \enquote{Two-step estimation of network-formation
  models with incomplete information,} \emph{Journal of Econometrics}, 188,
  182--195.

\bibitem[\protect\citeauthoryear{Li and Liao}{Li and Liao}{2020}]{li/liao:2020}
\textsc{Li, J. and Z.~Liao} (2020): \enquote{Uniform nonparametric inference
  for time series,} \emph{Journal of Econometrics}, 219, 38--51.

\bibitem[\protect\citeauthoryear{Masten and Poirier}{Masten and
  Poirier}{2020}]{masten/poirer:2020}
\textsc{Masten, M.~A. and A.~Poirier} (2020): \enquote{Inference on breakdown
  frontiers,} \emph{Quantitative Economics}, 11, 41--111.

\bibitem[\protect\citeauthoryear{Masten and Poirier}{Masten and
  Poirier}{2021}]{masten/poirer:2021}
---\hspace{-.1pt}---\hspace{-.1pt}--- (2021): \enquote{Salvaging falsified
  instrumental variable models,} \emph{Econometrica}, 89, 1449--1469.

\bibitem[\protect\citeauthoryear{Menzel}{Menzel}{2021}]{menzel:2021}
\textsc{Menzel, K.} (2021): \enquote{Bootstrap with cluster-dependence in two
  or more dimensions,} \emph{Econometrica}, 89, 2143--2188.

\bibitem[\protect\citeauthoryear{Nevo and Rosen}{Nevo and
  Rosen}{2012}]{nevo/rosen:2012}
\textsc{Nevo, A. and A.~M. Rosen} (2012): \enquote{Identification with
  imperfect instruments,} \emph{Review of Economics and Statistics}, 94,
  659--671.

\bibitem[\protect\citeauthoryear{Newey}{Newey}{1985}]{newey:1985}
\textsc{Newey, W.~K.} (1985): \enquote{Generalized method of moments
  specification testing,} \emph{Journal of Econometrics}, 29, 229--256.

\bibitem[\protect\citeauthoryear{Newey}{Newey}{1990}]{newey:1990}
---\hspace{-.1pt}---\hspace{-.1pt}--- (1990): \enquote{Semiparametric
  efficiency bounds,} \emph{Journal of Applied Econometrics}, 5, 99--135.

\bibitem[\protect\citeauthoryear{Newman}{Newman}{2001}]{newman:2001}
\textsc{Newman, M.~E.} (2001): \enquote{The structure of scientific
  collaboration networks,} \emph{Proceedings of the National Academy of
  Sciences}, 98, 404--409.

\bibitem[\protect\citeauthoryear{Noack and Rothe}{Noack and
  Rothe}{2019}]{noack/rothe:2019}
\textsc{Noack, C. and C.~Rothe} (2019): \enquote{Bias-aware inference in fuzzy
  regression discontinuity designs,} \emph{arXiv preprint arXiv:1906.04631}.

\bibitem[\protect\citeauthoryear{Pelican and Graham}{Pelican and
  Graham}{2020}]{pelican/graham:2020}
\textsc{Pelican, A. and B.~S. Graham} (2020): \enquote{An optimal test for
  strategic interaction in social and economic network formation between
  heterogeneous agents,} Tech. rep., National Bureau of Economic Research.

\bibitem[\protect\citeauthoryear{Pollard}{Pollard}{2002}]{P02}
\textsc{Pollard, D.} (2002): \emph{A user's guide to measure theoretic
  probability}, 8, Cambridge University Press.

\bibitem[\protect\citeauthoryear{Ridder and Sheng}{Ridder and
  Sheng}{2020}]{ridder/sheng:2020}
\textsc{Ridder, G. and S.~Sheng} (2020): \enquote{Estimation of large network
  formation games,} Working paper.

\bibitem[\protect\citeauthoryear{Rieder}{Rieder}{1994}]{rieder:1994}
\textsc{Rieder, H.} (1994): \emph{Robust Asymptotic Statistics}, Springer
  Verlag, New York, NY.

\bibitem[\protect\citeauthoryear{Tabord-Meehan}{Tabord-Meehan}{2019}]{tabord:2019}
\textsc{Tabord-Meehan, M.} (2019): \enquote{Inference with dyadic data:
  Asymptotic behavior of the dyadic-robust t-statistic,} \emph{Journal of
  Business \& Economic Statistics}, 37, 671--680.

\end{thebibliography}

\end{document}